\newcommand{\sig}{\ensuremath{\text{sig}}\xspace}
\newcommand{\ATM}{\ensuremath{\mathbb{A}}\xspace}
\keywords{Modal Logic, Fixpoint Logic, Separability, Interpolation, Definability}
\begin{document}

\bibliographystyle{alphaurl}

\title[The Complexity of Defining and Separating Fixpoint Formulae in Modal Logic]{The Complexity of Defining and Separating\texorpdfstring{\\}{} Fixpoint Formulae in Modal Logic}

\author[J.Ch.~Jung]{Jean Christoph Jung\lmcsorcid{0000-0002-4159-2255}}[a]
\author[J.~Kołodziejski]{Jędrzej Kołodziejski\lmcsorcid{0000-0001-5008-9224}}[b]

\address{Department of Computer Science, TU Dortmund University, Germany} 
\email{jean.jung@cs.tu-dortmund.de}  

\address{Faculty of Mathematics, Informatics and Mechanics, University of Warsaw} 
\email{j.kolodziejski@uw.edu.pl}  

%
%
%

%

\begin{abstract}
  Modal separability for modal fixpoint formulae is the problem to decide
  for two given modal fixpoint formulae $\varphi,\varphi'$ whether there is
  a modal formula $\psi$ that separates them, in the sense that
  $\varphi\models\psi$ and $\psi\models\neg\varphi'$. We study modal
  separability and its special case modal definability over various
  classes of models, such as arbitrary models, finite models, trees,
  and models of bounded outdegree. Our main results are that modal
  separability is \PSpace-complete over words, that is, models of outdegree
  $\leq 1$, \ExpTime-complete over unrestricted and over binary
  models, and \TwoExpTime-complete over models of outdegree bounded by
  some $d\geq 3$. Interestingly, this latter case behaves
  fundamentally different from the other cases also in 
  that modal logic does not enjoy the Craig interpolation property
  over this class. Motivated by this we study also the induced
  interpolant existence problem as a special case of modal
  separability, and show that it is \coNExpTime-complete and thus
  harder than validity in the logic. Besides deciding separability, we also provide algorithms for the effective construction of separators. Finally, we consider in a case study the
  extension of modal fixpoint formulae by graded modalities and
  investigate separability by modal formulae and graded modal
  formulae.
\end{abstract}

\maketitle

\section{Introduction}\label{sec:intro}

For given logics $\Lmc,\Lmc^+$, the \emph{$\Lmc$-separability problem
for $\Lmc^+$} is to decide, given two $\Lmc^+$-formulae $\varphi,\varphi'$,
whether there is an \Lmc-formula $\psi$ that separates
$\varphi$ and $\varphi'$ in the sense that $\varphi\models\psi$ and
$\psi\models\neg\phi'$. Obviously, a separator can only exist when
$\varphi$ and $\varphi'$ are inconsistent: if $\varphi\wedge\varphi'$ is satisfiable in some model $\Mmc$, then $\Mmc$ satisfies both any separator $\psi$ (since $\varphi\models \psi$) and its negation $\neg \psi$ (since $\psi\models\neg\varphi'$ and thus $\varphi'\models\neg \psi$). Moreover, when $\Lmc\supseteq\Lmc^+$ and $\varphi,\varphi'$ are inconsistent, we can take $\varphi$ or $\varphi'$ as trivial separators, so the separability problem is nontrivial only when $\Lmc$ cannot express the entire $\Lmc^+$. Finally note that, for logics $\Lmc^+$ closed under negation, \Lmc-separability generalizes the \emph{\Lmc-definability problem for $\Lmc^+$}: decide
whether a given $\Lmc^+$-formula is equivalent to an $\Lmc$-formula.
Indeed, $\varphi\in \Lmc^+$ is equivalent to an \Lmc-formula iff
$\varphi$ and $\neg\varphi$ are \Lmc-separable. Since separability is more
general than definability, solving it requires an even better understanding of the logics under
consideration.

Both definability and separability are central problems in logics, and both have a wide range of applications in computer science. On the one hand, $\Lmc$-definability in $\Lmc^+$ is the essential question in understanding the expressive power of $\Lmc$ within $\Lmc^+$. On the other hand, some applications require understanding the more general notion of separability, as the following examples illustrate:  


\begin{enumerate}

    \item In \emph{verification}, a separator can be viewed as a property that
    separates good behavior, given by $\varphi$, from bad behavior, given by
    $\varphi'$. In this case $\varphi,\varphi'$ are not necessarily logical formulae but could be other symbolic representations  of sets of reachable states. The separator can
    sometimes act as an easy abstraction of the reachable states, for instance, in the
    construction of invariants. This view of verification in terms of separability has induced the study of separability in formal languages. As seminal work in this area let us only mention definability and separability of regular word languages by languages recognized by first-order logic~\cite{DBLP:journals/corr/PlaceZ14,
DBLP:journals/iandc/Schutzenberger65a,DBLP:journals/tcs/ChoH91}.

    \item Separators can be used for \emph{explaining} inconsistencies or, dually, entailment. For the latter, we assume a given entailment $\varphi\models\varphi'$ in a \enquote{complicated} logic $\Lmc^+$. Then $\varphi,\neg\varphi'$ are inconsistent and a separator $\psi$ for $\varphi,\neg\varphi'$ in an \enquote{easier} logic $\Lmc$ can serve as a good explanation for the complicated entailment $\varphi\models \varphi'$ since $\varphi\models\psi$ and $\psi\models\varphi'$. Example~\ref{ex:sep1} below provides a more concrete example of this explanatory aspect. Separability has been recently studied for logics supporting counting or constants, respectively, by logics without this support~\cite{separatingcounting,DBLP:conf/lics/JungKW26}.

    \item \emph{(Craig) interpolation} can be thought of as an instance of separation, when the logic~\Lmc is defined in terms of restricted signature. Craig interpolation is clearly central to computer science, see the upcoming volume~\cite{volumecraig} and references therein. Beyond this rather informal relation between interpolation and separation, there is also a strong technical connection, which we shall exploit later in the paper.

    \item In \emph{learning logical formulae from labeled data examples}, one is given a knowledge base $\Omc$, formulated in language $\Lmc^+$, and sets $P,N$ of positive and negative examples and asks whether there is a formula from a query language $\Lmc$ \emph{fitting} to the positive and negative examples. One possible formalization of fitting is to require from the fitting formula $\psi$ that $\Omc\models \psi(\vec a)$, for every $\vec a\in P$, and $\Omc\models\neg\psi(\vec a)$, for all $\vec a\in N$~\cite{DBLP:journals/ai/JungLPW22}. Hence, $\psi$ plays the role of a separator between positive and negative examples. 
    
\end{enumerate}



In this paper we study definability and separability of formulae of
the modal $\mu$-calculus
$\muML$~\cite{DBLP:conf/focs/Pratt81,DBLP:journals/tcs/Kozen83} by
formulae in propositional modal logic $\ML$. $\muML$ is the extension
of $\ML$ with fixpoints that encompasses virtually all
specification languages such as
$\PDL$~\cite{DBLP:journals/jcss/FischerL79} and $\LTL$ and
$\CTL$~\cite{DBLP:books/daglib/0020348}. Hence, when we ask for separators of given inconsistent $\muML$-formulae $\varphi,\psi$ by an $\ML$-formula, we ask the question whether the recursion in the form of fixpoints in $\varphi,\psi$ is crucial to explain their inconsistency. Let us consider an example.

\begin{exa} \label{ex:sep1} Consider the following 
  properties $P_1,P_2,P_3$ of vertex-labelled trees: 
  \begin{itemize}

    \item $P_1$: there is an infinite path starting in the root on which each point
      satisfies $a$;

    \item $P_2$: on every path there are only finitely many points
      satisfying $a$;

    \item $P_3$: on every path at most two points satisfy $a$.

  \end{itemize}
  The properties are expressible in $\muML$ but not in $\ML$, and both of the pairs
  $P_1,P_2$ and $P_1,P_3$ are mutually exclusive, and hence $P_1$ entails both the complement of $P_2$ and of $P_3$. The properties $P_1,P_3$ are separated by the $\ML$-formula $\psi=a \wedge
  \Diamond (a\wedge \Diamond a)$ which expresses that there is a
  path starting with three points satisfying $a$. As discussed in Point~(2) above, $\psi$ may serve as an intuitive explanation for the entailment of the complement of $P_3$ from $P_1$.
  On the other hand, separators are not guaranteed to exist, and indeed, no $\ML$-formula separates
  $P_1,P_2$. The intuitive reason for this is that any $\ML$-formula
  $\psi$ only sees trees up to depth $|\psi|$, and one can find two
  trees with properties $P_1,P_2$ which nonetheless look the same up to depth $|\psi|$. A formal proof of non-existence of an $\ML$-separator is given in Example~\ref{ex:sep2} in the paper.\qed
%
%
\end{exa}
We explore the definability and separability problems over several classes of models relevant for computer science: all models, words, trees of bounded or unbounded outdegree; as well as restrictions of all these classes to finite models.
On top of analyzing the decision problems, we also address the problem
of constructing definitions and separators whenever they
exist. The starting point for our research is the seminal paper of
Otto~\cite{DBLP:conf/stacs/Otto99}, where he solves modal definability
over models of bounded and unbounded outdegree. In this paper, we
continue this line of work, extend it to separability, and establish a fairly complete and
interesting picture. 
Table~\ref{tab:mainresults} summarizes our
results. We now explain its content further.

\begin{table}
  \centering

  \begin{tabular}{c|c|c|c|c}
     & all models & words & binary trees &
     $d$-ary trees, $d\geq 3$ \\
    \hline
    $\ML$-def.& \ExpTime~\cite{DBLP:conf/stacs/Otto99} & \PSpace & \ExpTime~\cite{DBLP:conf/stacs/Otto99} & \ExpTime~\cite{DBLP:conf/stacs/Otto99} \\
    $\ML$-sep. & \ExpTime & \PSpace & \ExpTime &
    \TwoExpTime \\
    sep. constr. & double exp. & single exp. & double
    exp. & triple exp. \\
    interpolant ex. & always & always & always &
    \coNExpTime\\
  \end{tabular}
  \caption{Overview of our results. All complexity results are
  completeness results.}
  \label{tab:mainresults}
\end{table}
 
The first line essentially repeats Otto's results; we only add the
observation that $\ML$-definability over words is \PSpace-complete. The second line contains our results for $\ML$-separability. The case of words is the easiest one, both in terms of computational complexity
and required arguments.  Next come the cases of binary and of
unrestricted trees. These two classes possess some nice structural
properties which (although true for different reasons) enable a common
algorithmic treatment. Finally, the cases of trees with outdegree
bounded by a number $d\geq3$ enter the stage. These trees lack the
good properties essential for previous constructions, which results in
significantly higher computational complexity. The hardness result for $d\geq 3$ is
interesting for two reasons. First, as it is entirely standard
to encode trees of higher outdegree into binary ones, one could expect
the ternary (and higher) case to have the same complexity as the
binary one. And second, even though there are known cases when separation is provably harder than definability (regularity of visibly pushdown languages is decidable~\cite[Theorem 19]{Loeding19} but regular separability thereof is not~\cite[Theorem 2.4]{DBLP:conf/lics/Kopczynski16}), to the best of our knowledge our results are the only such case known in the area of logic.
 
The complexity landscape for deciding separability is also reflected in
the maximal sizes of the separators that we construct. Relying on the
well-known connection of $\muML$ to automata, we provide
effective constructions for the cases of all models, words, and
binary trees. It is worth mentioning that equally effective constructions for
definability over all models are given in~\cite{DBLP:conf/csl/LehtinenQ15}, but they do not work for
separability. The ternary case follows from a general argument.
Our construction of separators over words is optimal. Under mild assumptions (there are at least two modalities) the constructions over binary and over unrestricted trees are optimal as well, but we leave it open whether these assumptions are needed for the lower bounds. In the case of ternary and higher outdegree trees we only conjecture optimality of the constructed separators.

Finally, we observe that $\ML$ lacks the Craig interpolation property
over trees of outdegree bounded by $d\geq 3$. Recall that a
\emph{Craig interpolant for $\phi\models\phi'$} in some logic $\Lmc$
is a formula $\psi\in \Lmc$ only using the common symbols of $\phi$
and $\phi'$ and such that $\phi\models\psi\models\phi'$. A logic
satisfies the \emph{Craig interpolation property (CIP)} if a Craig
interpolant of $\phi\models\phi'$ always exists. It is known that
$\ML$ enjoys CIP over all models and over
words~\cite{10.1007/BFb0059541} and it follows from our techniques
that this transfers to binary trees. In contrast and as mentioned
above, over ternary and higher-arity trees $\ML$ lacks the CIP. It
is worth mentioning that modal logic over frames of arity bounded by
some $d$ has been studied under the name
$\mathbf{K}\oplus\textit{\textbf{alt}}_d$~\cite{DBLP:journals/aml/Bellissima88}. 
Our results imply that $\mathbf{K}\oplus\textit{\textbf{alt}}_d$ enjoys CIP iff
$d\leq 2$. Motivated by the
lack of CIP over higher-arity trees, we study the induced interpolant
existence problem -- determining whether two given $\ML$-formulae
$\phi,\phi'$ admit a Craig interpolant -- as a special case of
separability. We show it to be \coNExpTime-complete over higher
arity trees, and thus harder than validity. Interpolant existence has
recently been studied for other logics without CIP and our proofs are inspired by the techniques used there~\cite{DBLP:conf/lics/JungW21,DBLP:journals/tocl/ArtaleJMOW23}.

As an application of our results for $d$-ary trees with $d\geq 3$ we
additionally present a case study: definability and separability in the \emph{graded}
setting in which we allow modalities saying ``there are at
least $k$ children such that
[...]''~\cite{DBLP:journals/ndjfl/Fine72}. Such graded modalities are a
standard extension of modal logic that is especially relevant in
applications in knowledge representation for conceptual
modeling~\cite{DBLP:books/daglib/0041477}. Hence, the larger logic $\Lmc^+$ is the \emph{graded $\mu$-calculus} in this case, see for example~\cite{DBLP:conf/cade/KupfermanSV02,DBLP:conf/ijcai/CalvaneseGL99}. For the separating language $\Lmc$, we consider two options. In the first, we allow graded modalities also in the separating language \emph{graded modal logic $\grML$}, whereas in the second we stick to $\ML$ as before. The first setting behaves essentially as the non-graded case in the sense that both $\grML$-definability and $\grML$-separability are \ExpTime-complete. More interesting is the second option, where definability and separability again show different behavior. While
$\ML$-definability of $\mugrML$-formulae is \ExpTime-complete, $\ML$-separability is \TwoExpTime-complete. The intuitive reason for the higher complexity in the latter case is that trees of bounded arity are definable in graded $\muML$. These results are also related with a recent study about separating logics supporting counting quantifiers by logics without these~\cite{separatingcounting}. 

 

It is worth to mention that $\ML$-definability of $\muML$-formulae
generalizes the \emph{boundedness problem} which asks whether a
formula with a single fixpoint is equivalent to a modal formula. 
Boundedness has been studied for other logics such as 
monadic-second order logic~\cite{DBLP:journals/corr/BlumensathOW14}, datalog~\cite{DBLP:journals/jlp/HillebrandKMV95}, and
the guarded fragment of first-order
logic~\cite{DBLP:conf/lics/BenediktCCB15}. Our paper is an extension
of the preliminary papers~\cite{DBLP:conf/dlog/JungK24,DBLP:conf/stacs/JungK25}.

The paper is organized as follows. After this introduction, we set notation and recall basic facts in the preliminary Section~\ref{sec:prelims}. Next, we formally introduce the main notions of definability and separability, discuss a relevant construction of Otto, and solve the case of all models in Section~\ref{sec:foundations}. In the following Sections~\ref{sec:unary} and~\ref{sec:binary} we deal with unary and binary trees, and in Section~\ref{sec:ternary} we solve the most challenging case of trees of outdegree bounded by $d\geq 3$. Section~\ref{sec:graded} applies our results to the case with graded modalities. The last Section~\ref{sec:conclusion} contains conclusions and final remarks.




\section{Preliminaries}\label{sec:prelims}

We recall the main notions about modal logic $\ML$ and the modal
$\mu$-calculus $\muML$. For the rest of this paper fix disjoint,
countably infinite sets $\Propositions$ of \emph{propositions} and $\Var$ of \emph{variables}.  
The syntax of $\muML$ is given by the rule 
\[
  \phi ::= \atProp\ |\ \neg\atProp\ |\ \phi\vee\phi\ |\ \phi\wedge\phi\ |\ \Diamond\phi\ |\ \Box\phi\ |\ x\ |\ \mu x.\phi\ |\ \nu x.\phi
\]
where $\atProp\in\Propositions$ and $x\in \Var$.
We assume that formulae of $\muML$ are in a normal form such that every $x\in\Var$ appears at
most once in a formula, and if it does appear then its appearance has
a unique superformula $\psi$ beginning with $\mu x$ or $\nu x$. Modal
logic $\ML$ is defined as the fragment of $\muML$ with no fixpoint
operators $\mu$ and $\nu$ nor variables. In both $\ML$ and $\muML$, we
use abbreviations like $\top$ (for $a\vee \neg a$ for some
$a\in\Propositions$), $\Diamond^n\varphi$ (for a formula
$\Diamond\ldots\Diamond\varphi$ with $n$ leading $\Diamond$'s), and
$\neg \varphi$. A \emph{signature} is a finite subset $\sigma\subset\Propositions$ of propositions. We denote with $\sig(\varphi)$ the
set of propositions that occur in $\varphi$. As usual, the
\emph{modal depth} of an $\ML$-formula is the maximal nesting depth of
$\Diamond,\Box$. With $\ML^n$ we denote the class of all $\ML$-formulae of modal depth at most $n$, and
with $\ML^n_\sigma$ we denote its subclass restricted to signature $\sigma$. The set of subformulae of $\varphi$ is denoted with $\SubFor(\varphi)$. The \emph{size} $|\phi|$ of a formula $\phi$ is the length of $\phi$ represented as a string. This choice of the simplest possible measure of size does not matter for most of our results.
We will briefly discuss alternative notions of size in the concluding Section~\ref{sec:conclusion}.

Both $\ML$ and $\muML$ are interpreted in pointed 
Kripke structures. More formally, a model $\M$ is a quadruple
$\M=(M,\point_I,{\to},\val)$ consisting 
of a set $M$ called
its \emph{universe}, a distinguished point $\point_I\in M$ called
the \emph{root}, an accessibility relation
${\to}\subset M\times M$, and a valuation
$\val:M\to\powerset{\Propositions}$.

The semantics of $\muML$ can be defined in multiple equivalent ways. The one most convenient for us is through parity games. These are infinite-duration games with perfect information, played between players $\eve$ and $\adam$. Parity games provide a convenient definition of the $\muML$ semantics, and will be useful in some automata-theoretic constructions. We employ a standard terminology regarding parity games, such as positions, finite and infinite plays, strategies etc.~(see~\cite{Venema20} for an introduction with a special focus on $\muML$). Additionally, given a strategy $\strategy$ for either player we call a play $\pi$ a \emph{$\strategy$-play} if $\pi$ is consistent with $\strategy$.
Towards the game-theoretic definition of the semantics of $\muML$ assume a model $\M$ and a formula $\phi\in\muML$. We define a parity game called the \emph{semantic game for $\M$ and $\phi$} and denoted $\game(\M,\phi)$. The positions are $M\times\SubFor(\phi)$. The moves depend on the topmost connective. From a position of the shape $(\point,\psi\vee\psi')$ or $(\point,\psi\wedge\psi')$ it is allowed to move to either $(\point,\psi)$ or $(\point,\psi')$. From $(\point,\Diamond\psi)$ and $(\point,\Box\psi)$ the allowed moves lead to all $(\altpoint,\psi)$ such that $\point\rightarrow\altpoint$.
In position $(\point,\atProp)$ or $(\point,\neg\atProp)$ the game stops and $\eve$ wins iff $\point$ satisfies the formula component $\atProp$ or $\neg\atProp$, respectively. From $(\point,\mu x.\psi)$ and $(\point,\nu x.\psi)$ the game moves to $(\point,\psi)$, and from $(\point,x)$ to $(\point,\theta)$ where $\theta$ is the unique superformula of $x$ beginning with $\mu x$ or $\nu x$. $\eve$ owns positions whose formula component has $\vee$ or $\Diamond$ as the topmost connective and $\adam$ owns all other positions. $\eve$ wins an infinite play $\pi$ if the outermost subformula seen infinitely often in $\pi$ begins with $\nu$. We say that $\M,\point$ \emph{satisfies} $\phi$ and write $\M,\point\models\phi$ if $\eve$ wins the game $\game(\M,\phi)$ from position $(\point,\phi)$. Since $\M$ is by definition pointed, we abbreviate $\M,\point_I\models \varphi$ with $\M\models \varphi$.

%
 
The same symbol denotes entailment:
$\phi\models\psi$ means that every model of $\phi$ is a model of
$\psi$. In the case only models from some fixed class $\ClassMod$ 
are considered we talk about satisfiability and entailment
\emph{over} $\ClassMod$. Let $\LL$ be a subset of $\muML$ such as $\ML$ or $\ML_\sigma^n$. If two models $\M$ and $\N$ satisfy the same formulae of $\LL$ then we call them $\LL$-equivalent and write $\M\equiv_\LL\N$.

In the paper we will study models of bounded and unbounded outdegree.  The \emph{outdegree} of a point $w\in M$ in a model
$\M=(M,\point_I,\to,\val)$ is the number of successors of $w$ in
the underlying directed graph $G_\M = (M,\to)$. We say that $\M$ has
\emph{bounded outdegree} if there is a finite uniform upper bound $d$
on the outdegree of its points. In the latter case, we will call $\M$
\emph{$d$-ary}, and \emph{binary} or \emph{ternary} if $d=2$ or $d=3$.
If $d=1$, then we call $\M$ a \emph{word}.
A $d$-ary model is \emph{full} if each of its nodes is either a leaf
(i.e.~has no children) or has precisely $d$ children.
A model $\M$ is a \emph{tree} if $G_\M$ is a (directed) tree with root
$v_I$. We denote with $\TT^d$ the class of all $d$-ary tree models.
Both $\ML$ and $\muML$ are invariant under bisimulation, and every ($d$-ary) model is bisimilar to a ($d$-ary) tree. Hence, we do not lose generality by only looking at tree models.

A \emph{prefix} of a tree is a subset of its universe closed under
taking ancestors. When no confusion arises we identify a prefix
$N\subset M$ with the induced subtree $\N$ of $\M$ that has $N$ as its
universe. The \emph{depth} of a point is the distance from the root.
The prefix of depth $n$ (or just \emph{$n$-prefix}) is the set of all
points at depth at most $n$ and is denoted by $M_{|_n}$ (and the
corresponding subtree by $\M_{|_n}$).

\subsection{Bisimulations}

We define bisimulations and bisimilarity for trees, assuming for
convenience that bisimulations only
link points at the same depth. This is without loss of generality, since for bisimilar trees there is always a bisimulation that relates only points at the same depth. Let $\M,\M'$ be trees and
$Z\subseteq M\times M'$ a relation between $M$ and $M'$ that relates
only points of the same depth. Then, $Z$ is a \emph{bisimulation
between $\M$ and $\M'$} if it links the roots $\point_I Z\point_I'$, and for every $wZw'$ the following conditions are satisfied:
%
\begin{description}

  \item[$(\baseCond)$] $\val(w) = \val'(w')$, 

  \item[$(\forthCond)$] for every $v\in M$ with $w\to v$ there is a
    $v'\in M'$ with $w'\to v'$ and $vZv'$, and 

  \item[$(\backCond)$] for every $v'\in M'$ with $w'\to v'$ there is a
    $v\in M$ with $w\to v$ and $vZv'$.

\end{description}
A \emph{functional bisimulation} (also known as \emph{bounded morphism}) is a function whose graph is a bisimulation. If $Z$ is a functional bisimulation from $\M$ to $\M'$ then we write $Z:\M\bisFun{}\M'$ and call $\M'$ a \emph{bisimulation quotient} of $\M$. The \emph{bisimilarity quotient} of $\M$ is a quotient $\M'$ of $\M$ such that if $Z':\M'\bisFun\M''$ then $Z'$ is bijective. 
It follows from analogous results for arbitrary models that every tree $\M\in\TT^d$ has a unique (up to isomorphism) bisimilarity quotient $\M'\in\TT^d$ and that two trees are bisimilar iff their bisimilarity quotients are isomorphic.

Further, for every $n\in\NN$ and every subset $\sigma\subset\Propositions$ of the signature we consider a restricted variant of bisimulations called \emph{$(\sigma,n)$-bisimulations}.
In a \emph{$(\sigma,n)$-bisimulation} the $\baseCond$ condition is
only checked with respect to $\sigma$ and the $\backCond$ and
$\forthCond$ conditions only for points at depth smaller than $n$.
Formally, a relation $Z\subset M\times M'$ is a
$(\sigma,n)$-bisimulation if it is a bisimulation between the
$n$-prefixes of the $\sigma$-reducts of $\M,\M'$. We call a
$(\sigma,n)$-bisimulation between $\M,\M'$ a
\emph{$(\sigma,n)$-isomorphism} if it is bijective on the $n$-prefixes
of $\M,\M'$.
We write $\M\bis_\sigma^n\M'$ if there exists a
$(\sigma,n)$-bisimulation between $\M$ and $\M'$ and
$\M\cong_\sigma^n\M'$ if there is a $(\sigma,n)$-isomorphism between
them. Crucially, over every class $\ClassMod$ of models and for every finite
$\sigma$ the equivalences $\equiv_{\ML_\sigma^n}$ and $\bis_\sigma^n$
coincide for every $n$~\cite{DBLP:books/el/07/GorankoO07}.

%

\subsection{Automata} 

We exploit the well-known connection of $\muML$ and automata that read
tree models. A
\emph{nondeterministic parity tree automaton (NPTA) over $d$-ary trees $\TT^d$} is a tuple
$\A=(Q,\Sigma,q_I,\delta,\rank)$ where $Q$ is a finite set of states,
$q_I\in Q$ is the initial state, $\Sigma=\powerset{\sigma}$ for some signature $\sigma\subseteq \Propositions$, $\rank$ assigns each state a
priority from $\mathbb{N}$, and $\delta$ is a transition function of type:
\[
\delta:Q\times\alphabet \to \powerset{Q^{\leq d}},\]
where $Q^{\leq d}$ denotes the set of all tuples over $Q$ of length
at most $d$.
A run of $\A$ on a tree $\M$ is an assignment $\rho:M\to Q$ sending
the root of the tree to $q_I$ and consistent with $\delta$ in the
sense that
$(\rho(\point_1),...,\rho(\point_k))\in\delta(\rho(\point),\val(\point)\cap \sigma)$ for every point $\point$ with children $\point_1,...,\point_k$.
On occasion when considering trees of unbounded outdegree we will use
automata with transition function of type
\[\delta:Q\times\alphabet\to\powerset{\powerset{Q}}.\] Then, consistency
of $\rho$ with $\delta$ means that $\{\rho(\point')\ |\ \point'\in
V\}\in\delta(\rho(\point),\val(\point)\cap\sigma)$ for every $\point$ with a set $V$ of children.
In either case, we call the run $\rho$ \emph{accepting} if for every infinite path $v_0,v_1\ldots$ in $\M$ the sequence $\rank(\rho(v_0)),\rank(\rho(v_1)),\ldots$ satisfies the parity condition, that is, the minimal number that occurs infinitely often in this sequence is even.
We write $\M\models\Amc$ and say that \Amc \emph{accepts} $\M$ in case \Amc has an accepting run
on $\M$. The set of all trees accepted by \Amc is denoted by $\Lmc(\Amc)$. An automaton that is identical to $\A$ except that 
the initial state is $q$ is denoted by
$\A[q_I\mapsfrom q]$. The \emph{size} of an automaton $\A$ is the number of its states and is denoted by $|\A|$. 

An NPTA $\Amc$ is \emph{equivalent} to
a formula $\phi\in\muML$ over a class $\ClassMod$ of trees when $\Mmc\models \varphi$ iff $\Mmc\models
\Amc$ for every tree $\Mmc\in\ClassMod$.  We rely on the following
classical result. 
%
\begin{thm} \label{thm:munpta}
  For every $\muML$-formula $\phi$ and class $\ClassMod$ of trees we can construct an NPTA $\A=(Q,\Sigma,q_I,\delta,\rank)$ with alphabet $\Sigma=\powerset{\sig(\phi)}$, size $|Q|$ exponential in $|\phi|$, and number of ranks $|\{\rank(q)\mid q\in Q\}|$ polynomial in $|\phi|$, such that $\A$ is equivalent to $\phi$ over $\ClassMod$. The construction takes exponential time when $\ClassMod\subset\TT^d$ for some $d$, and doubly exponential time in the unrestricted case. 
\end{thm}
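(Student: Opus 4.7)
The plan is standard and proceeds in two stages. First, translate $\phi$ into an alternating parity tree automaton (APTA) of linear size; second, apply the Simulation (Dealternation) Theorem to obtain an equivalent NPTA whose number of states is exponential in $|\phi|$.

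For the first stage I would define an APTA $\A^{\text{alt}}_\phi$ whose states are essentially $\SubFor(\phi)$ and whose transitions mirror the rules of the semantic game $\game(\M,\phi)$: disjunctions and $\Diamond$'s give rise to existential moves, conjunctions and $\Box$'s to universal ones, fixpoint binders $\mu x.\psi$ and $\nu x.\psi$ unfold once to $\psi$, and a bound variable $x$ jumps back to its binder. Priorities are assigned according to the $\mu/\nu$ alternation depth of the formula, with outer fixpoints receiving higher priorities and $\mu$-priorities being odd, $\nu$-priorities even. Correctness of this construction, i.e.\ $\A^{\text{alt}}_\phi$ accepts $\M$ iff $\M\models\phi$, is then an immediate consequence of the game semantics recalled above, by a straightforward bijection between accepting runs of $\A^{\text{alt}}_\phi$ and winning $\eve$-strategies in $\game(\M,\phi)$.

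The second stage is to dealternate $\A^{\text{alt}}_\phi$ into an NPTA $\A_\phi$. Over $\TT^d$ with fixed $d$, this is achieved by the classical Muller--Schupp/Emerson--Jutla simulation theorem (cited as Theorem~5.7 of~\cite{BojanCzerwinski18}), yielding an NPTA whose number of states is exponential in $|\A^{\text{alt}}_\phi|$, hence exponential in $|\phi|$. Since on $d$-ary trees the transition relation lives in $Q\times\Sigma\to\powerset{Q^{\leq d}}$, whose size is bounded by an exponential in $|\phi|$ for fixed $d$, the construction runs in exponential time overall. In the unrestricted case the transition function has type $Q\times\Sigma\to\powerset{\powerset{Q}}$; the same dealternation still yields exponentially many states, but tabulating the transition relation requires enumerating subsets of $\powerset{Q}$, which is of doubly exponential size in $|\phi|$. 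Consequently the whole construction runs in doubly exponential time, even though the state count remains exponential.

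The main obstacle, were we to prove this from scratch, is the dealternation step: one has to simulate alternating acceptance by a nondeterministic automaton that guesses $\eve$'s strategy together with an auxiliary memory structure (e.g.\ a latest-appearance record or a Safra-like construction for parity conditions) and certifies the parity condition along every branch, leveraging positional determinacy of parity games. We sidestep this obstacle by invoking the cited classical result as a black box; the first-stage APTA translation is then routine.
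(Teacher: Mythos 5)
Your proposal is correct and follows exactly the route the paper relies on: the paper does not prove Theorem~\ref{thm:munpta} at all but cites it as a classical result (the discussion in Vardi's survey and the Dealternation Theorem of Bojańczyk--Czerwiński), and your two-stage sketch (linear-size alternating automaton from the game semantics, then dealternation as a black box, with the doubly exponential time in the unrestricted case coming from the $\powerset{\powerset{Q}}$-typed transition relation) is precisely the standard argument behind that citation.
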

The construction of $\A$ in the above Theorem~\ref{thm:munpta} is classical, and so we only give a brief sketch. For more details see for example~\cite[Theorem~7]{DBLP:conf/icalp/Vardi98}, \cite[Simulation Theorem~7.50]{Venema20}, and the well-presented Dealternation Theorem~14.7 in~\cite{BojanCzerwinski18}. The automaton $\A$ simulating $\phi$ is obtained by composing two simpler automata $\A_1$ and $\A_2$. Given a model $\M$, the first $\A_1$ guesses a positional strategy $\strategy$ for $\eve$ in the semantic game $\game(\M,\phi)$. The other $\A_2$ verifies that this guessed $\strategy$ is winning. The automaton $\A$ is then a product of $\A_1$ and $\A_2$, and has ranks inherited from $\A_2$. The construction of $\A_1$ is straightforward and easily meets the bounds on the number of states. The construction of $\A_2$ with exponentially many states and polynomially many ranks, a key ingredient in the single exponential algorithm for satisfiability of $\muML$~\cite{EmersonJutla88}, follows from the efficient determinization of automata on infinite words~\cite{Safra1988, DBLP:journals/lmcs/Piterman07}.

We will often consider the following \emph{trash normal form} for an NPTA \Amc: there is a distinguished \emph{trash state} $q_\bot$ such that for every state $q$: $\A[q_I\mapsfrom q]$ accepts some model iff $q\neq q_\bot$. Every automaton can be converted to trash normal form by replacing all states whose language is empty with a single trash state. Algorithmically, this boils down to checking emptiness for every state. The latter in turn reduces to solving a suitable parity game, which for $\A=(Q,\Sigma,q_I,\delta,\rank)$ has positions $Q\cup\bigcup_{a\in\Sigma,q\in Q}\delta(q,a)$ and the same number of ranks as $\A$. Since a parity game with $n$ positions and $r$ ranks can be solved in time $n^{O(r)}$ (in fact, even $n^{O(\log(r))}$, see~\cite[Main Theorem 2.7]{quasipolynomialParity2017}), this implies the following complexity bounds.

\begin{lem}\label{lem:trash normal form}
    Let $\A$ be an NPTA over a class $\ClassMod$ of models, with $n$ states and $r$ ranks. Then $\A$ can be brought to trash normal from in time:
    \begin{itemize}
        \item $n^{O(r)}$ if $\ClassMod\subseteq \TT^d$ for some $d$; 
        \item ${(2^n)}^{O(r)}$ in the unrestricted case.
    \end{itemize}
\end{lem}
The above Lemma~\ref{lem:trash normal form} enables the following important strengthening of Theorem~\ref{thm:munpta}.
\begin{cor}\label{cor:muML to NPTA}
  Theorem~\ref{thm:munpta} remains true when we additionally require $\Amc$ to be in trash normal form.
\end{cor}
%


\section{Foundations of Separability}\label{sec:foundations}

We start with recalling the notion of separability and discuss some of its basic
properties. 
%
%
\begin{defi}
  Let $\varphi,\varphi'\in\muML$. A formula $\psi$ \emph{separates} $\varphi$ and $\varphi'$ if $\varphi\models\psi$ and $\psi\models\neg\varphi'$. If $\psi$ comes from a fragment $\LL$ of $\muML$, then we call it an \emph{$\LL$-separator} of $\varphi$ and $\varphi'$. 
\end{defi}
It should be clear that a separator can only exist if $\varphi$ and $\varphi'$ are inconsistent, and that either of $\varphi$ and $\neg\varphi'$ is a trivial separator in this case. Thus, the notion is only interesting if we restrict $\LL$. We illustrate the notion of a separator with a simple example.
\begin{exa}\label{exa:separability}
Consider the $\muML$-formula $\theta_\infty=\nu x.\Diamond x$ expressing the
existence of an infinite path originating in the root, and $\phi'=\Box\bot$
expressing that the root does not have any children. Clearly, $\theta_\infty$ and $\varphi'$ are inconsistent, and for every $n\geq 1$
the modal formula $\Diamond^n\top$ separates $\theta_\infty$ and $\varphi'$.
Hence, there are infinitely many separators. Moreover, since
$\Diamond^{m}\top\models \Diamond^n\top$ whenever $m\geq n$, there does not 
exist a logically strongest separator. \qed
\end{exa}
The main problem we investigate in this paper is \emph{$\ML$-separability of
$\muML$-formulae}, which asks whether given formulae $\phi,\phi'\in\muML$ admit an $\ML$-separator. 
Since any $\ML$-separator of $\phi,\neg\phi$ is equivalent to $\phi$, a natural
special case of $\ML$-separability is \emph{$\ML$-definability}, in which we
assume $\phi'=\neg \phi$.
The notion of separation is closely related to the notion of \emph{interpolation}. Let \Lmc be a set of $\muML$-formulae. A \emph{Craig interpolant} of $\phi,\phi'\in\Lmc$ is an $\Lmc$-separator $\psi$ of $\phi,\neg\phi'$ which only uses the common symbols of $\phi$ and $\phi'$, that is, $\sig(\psi)\subset\sig(\phi)\cap\sig(\phi')$. The \emph{Craig interpolant existence problem} for $\LL$ asks if a given pair of $\LL$-formulae admit a Craig interpolant.
We say that $\Lmc$ enjoys the \emph{Craig interpolation property (CIP)} if every $\phi,\phi'\in\Lmc$ with $\phi\models\phi'$ admit a Craig interpolant. In logics enjoying the CIP, the Craig interpolant existence problem reduces to checking if $\phi\models\phi'$, but in logics which lack CIP the problem may be more involved.

 
All introduced notions can be relativized to a class $\ClassMod$ of models by restricting entailment to the models in $\ClassMod$.
In this paper, we investigate $\ML$-separability and $\ML$-definability of $\muML$-formulae, as well as the Craig interpolant existence problem for $\ML$. The problems are considered over the class of all models and the classes $\TT^d$, $d\in\mathbb{N}$ of trees of bounded outdegree.



A useful observation is that, over all classes $\ClassMod$ considered in this paper, if $\varphi,\varphi'$ admit an $\ML$-separator $\psi$, then they admit an $\ML$-separator over signature $\sigma=\sig(\varphi)\cup\sig(\varphi')$. 
Indeed, consider $\psi_\sigma$ obtained from $\psi$ by replacing every non-$\sigma$ proposition with $\bot$, and $\ClassMod_\sigma$ the subclass of $\ClassMod$ consisting of models in which all non-$\sigma$ propositions are false everywhere. Over $\ClassMod_\sigma$, the formulae $\psi$ and $\psi_\sigma$ are equivalent and hence $\psi_\sigma$ separates $\phi,\phi'$. Since every model in $\ClassMod$ is $\sigma$-bisimilar to a model in $\ClassMod_\sigma$, it follows that $\psi_\sigma$ separates $\phi,\phi'$ over $\ClassMod$ as well. Thus, when looking for $\ML$-separators $\psi$ for $\phi,\phi'$ we will only consider formulae over $\sig(\phi)\cup\sig(\phi')$. Moreover, it will be convenient to use also the notion of an \emph{$\ML_\sigma$-separator} which is just a $\ML$-separator whose signature is contained in $\sigma$.

The starting point for the technical developments in the paper are
model-theoretic characterizations for separability. Similar to what
has been done in the context of 
interpolation, see for example~\cite{AbRobinson}, they are given in
terms of joint consistency, which we
introduce next.  
Let $R$ be a binary relation on some class of models, such as $(\sigma,n)$-isomorphism $\cong_\sigma^n$ or $\ML_\sigma^n$-equivalence $\equiv_{\ML_\sigma^n}$. We call two formulae $\phi,\phi'$ \emph{jointly consistent up to $R$} (in short \emph{jointly $R$-consistent}) if there are models $\M\models\varphi$ and $\M'\models\phi'$ with $R(\M,\M')$. For technical reasons we will sometimes also talk about joint consistency of automata $\A,\A'$ in place of formulae $\phi,\phi'$.
Joint $R$-consistency \emph{over a class} $\ClassMod$ of models is defined by
only looking at models from $\ClassMod$. Clearly, if $R'\subset R$ and
$\ClassMod'\subset\ClassMod$ then joint $R'$-consistency over $\ClassMod'$ implies joint
$R$-consistency over $\ClassMod$.
We use the following standard equivalence:
\begin{align}
  \text{$\phi,\phi'$ are \emph{not} $\ML_\sigma^n$-separable over $\ClassMod$
$\iff$ $\phi,\phi'$ are jointly $\bis_\sigma^n$-consistent over
$\ClassMod$.}\tag{$\mathsf{Base}$}\label{eq:non-sep vs n-bis}
\end{align}
for every $\phi,\phi'\in\muML$, $n\in\NN$, signature $\sigma$, and class $\ClassMod$.
The implication from right to left is immediate. The opposite one follows from the observation that for every $n\in\NN$ and finite $\sigma$ there are only finitely many equivalence classes of $\bis_\sigma^n$, and each such class is fully described with a single modal formula.
Let us illustrate the use of Equivalence~\eqref{eq:non-sep vs n-bis} to show non-separability.
\begin{exa}\label{ex:sep2}
  Let $\phi_1$ and $\phi_2$ be $\muML$-formulae expressing the respective properties $P_1$ and $P_2$ from Example~\ref{ex:sep1}.
Let $\M$ be an infinite path in which every point satisfies $a$, and let $\M_n$ be a finite path of length $n$ in which every point satisfies $a$. Then, for each $n$ the models $\M,\M_n$ witness joint 
$\bis^n_{\{a\}}$-consistency of $\varphi_1,\varphi_2$. 
By Equivalence~\eqref{eq:non-sep vs n-bis} this means that $\phi_1,\phi_2$ are not
$\ML_{\{a\}}^n$-separable
for any $n$, and hence not 
$\ML$-separable at all. \qed
\end{exa}

Equivalence~\eqref{eq:non-sep vs n-bis} is also used to prove the following model-theoretic characterization of Craig interpolant existence over trees of bounded outdegree in terms of joint $\bis_\sigma^n$-consistency, which will be useful later on.
\begin{lem} \label{lem:char-interpolant}
  Let $\varphi,\varphi'\in\ML^n$ and $\sigma=\sig(\phi)\cap\sig(\phi')$. The following are equivalent for every $d\in\NN$:
  \begin{enumerate}

    \item $\varphi,\varphi'$ do not admit a Craig interpolant over $\TT^d$.

    \item $\varphi,\neg\varphi'$ are jointly $\bis_\sigma^n$-consistent over $\TT^d$.

  \end{enumerate}
\end{lem}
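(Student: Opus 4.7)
The plan is to combine Equivalence~\eqref{eq:non-sep vs n-bis} with a truncation-to-depth-$n$ construction on the witnessing models. Recall that a Craig interpolant of $\phi,\phi'$ is precisely an $\ML_\sigma$-separator of $\phi,\neg\phi'$, where $\sigma=\sig(\phi)\cap\sig(\phi')$. The direction $(1)\Rightarrow(2)$ is immediate by contraposition: if $\phi,\neg\phi'$ are not joint $\bis_\sigma^n$-consistent over $\TT^d$, then Equivalence~\eqref{eq:non-sep vs n-bis} produces an $\ML_\sigma^n$-separator $\psi$ of $\phi,\neg\phi'$ over $\TT^d$, and since $\sig(\psi)\subseteq\sigma$ this $\psi$ already qualifies as a Craig interpolant of $\phi,\phi'$.

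For $(2)\Rightarrow(1)$, I would start from witnesses $\M\models\phi$ and $\M'\models\neg\phi'$ in $\TT^d$ together with a $(\sigma,n)$-bisimulation $Z$ between them, and truncate both models to their $n$-prefixes, setting $\M^*:=\M_{|_n}$ and $\M'^*:=\M'_{|_n}$. Three observations then close the argument. First, $\M^*,\M'^*$ still lie in $\TT^d$ because truncation only removes nodes. Second, since $\phi,\phi'\in\ML^n$ read only the $n$-prefix of a model, we still have $\M^*\models\phi$ and $\M'^*\models\neg\phi'$. Third, the formal definition of $(\sigma,n)$-bisimulation is precisely that $Z$ is a full bisimulation between the $n$-prefixes of the $\sigma$-reducts of $\M,\M'$; equivalently, $Z$ is a full $\sigma$-bisimulation $\M^*\bis_\sigma\M'^*$. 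Invariance of $\ML_\sigma$-formulae of arbitrary depth under $\sigma$-bisimulation then yields $\M^*\equiv_{\ML_\sigma}\M'^*$. A hypothetical Craig interpolant $\psi$ would give $\M^*\models\psi$ (via $\phi\models\psi$) while $\M'^*\not\models\psi$ (via $\psi\models\phi'$ and $\M'^*\models\neg\phi'$), contradicting this equivalence.

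The only real subtlety -- and what I expect to be the main point to verify -- is the third observation: that $Z$ becomes a genuinely full $\sigma$-bisimulation on the truncated trees, not merely a bounded-depth one. This upgrade, which is what allows us to block Craig interpolants of any modal depth rather than only those of depth at most $n$, is almost tautological from the formal reading of $(\sigma,n)$-bisimulation given in the preliminaries: in $\M_{|_n}$ the depth-$n$ nodes are leaves, so the forth and back conditions there are vacuously satisfied. Nonetheless it is the conceptual linchpin of the argument, as it is the step where bounded-depth $\ML_\sigma^n$-indistinguishability is promoted to full $\ML_\sigma$-equivalence.
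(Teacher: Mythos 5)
Your proof is correct and follows essentially the same route as the paper's: the paper also handles $(1)\Rightarrow(2)$ directly via Equivalence~\eqref{eq:non-sep vs n-bis}, and for $(2)\Rightarrow(1)$ it likewise truncates the witnesses to depth $n$ (using that $\varphi,\varphi'$ have modal depth at most $n$) so that the $(\sigma,n)$-bisimulation becomes a full $\sigma$-bisimulation, blocking interpolants of every modal depth. The only cosmetic difference is that the paper concludes by citing joint $\bis_\sigma^m$-consistency for all $m$ and Equivalence~\eqref{eq:non-sep vs n-bis} again, whereas you derive the contradiction directly from bisimulation invariance of $\ML_\sigma$.
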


\begin{proof}
  By the Equivalence~\eqref{eq:non-sep vs n-bis}, the first item~(1) holds iff $\phi,\neg\phi'$ are jointly $\bis_\sigma^m$-consistent for every $m\in\NN$.
  The implication (1)$\Rightarrow$(2) is thus immediate. For
  the converse implication (2)$\Rightarrow$(1), suppose $\varphi,\neg\varphi'$ are jointly
  $\bis_\sigma^n$-consistent and let $\M\models\varphi,\M'\models
  \neg\varphi'$ with $\M\bis_\sigma^n\M'$ witness this. Since
  $\varphi,\varphi'$ have modal depth at most $n$, we can assume
  without loss of generality that $\M,\M'$ have depth at most~$n$. But then $\M\bis_\sigma\M'$ and so $\M,\M'$ witness joint $\bis_\sigma^m$-consistency of
  $\varphi,\neg\varphi'$
  for every $m$.
\end{proof}

In the remainder of the section, we recall first, in Section~\ref{sec:definability}, the (known) tools that were used to solve $\ML$-definability of $\muML$-formulae, and show then, in Section~\ref{sec:separability} how to extend these tools (a) to solving $\ML$-separability of $\muML$-formulae and (b) to constructing separators and definitions when they exist, both over the class of all models.

\subsection{Modal Definability: A Recap}\label{sec:definability}
In his seminal paper~\cite{DBLP:conf/stacs/Otto99} Otto showed that $\ML$-definability of $\muML$-formulae is \ExpTime-complete over all models and over $\TT^d$ for every $d\geq 2$.

\begin{thm}\label{thm:otto main}
\cite[Main Theorem and Proposition 3]{DBLP:conf/stacs/Otto99}
  Over the class of all models, as well as over $\TT^d$ for every $d\geq 2$, $\ML$-definability of $\muML$-formulae is \ExpTime-complete.
\end{thm}
We start by recalling and rephrasing Otto's construction and fixing a small mistake in the original proof. The lower bound follows by an immediate reduction from satisfiability of $\muML$-formulae. We look at the upper bound.
The first step is the following lemma, which is the heart of~\cite[Lemma 2]{DBLP:conf/stacs/Otto99}.
\begin{lem}\label{lem:otto 2}
For every $\phi\in\muML$ and $n,d\in\NN$ the following are equivalent:
\begin{enumerate}
  \item\label{it:otto 1a} $\phi,\neg\phi$ are jointly
    $\bis_\sigma^n$-consistent over $\TT^d$.
  \item\label{it:otto 1b} $\phi,\neg\phi$ are jointly $\cong_\sigma^n$-consistent over $\TT^d$.
\end{enumerate}
\end{lem}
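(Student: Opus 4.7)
The direction (\ref{it:otto 1b})$\Rightarrow$(\ref{it:otto 1a}) is immediate since every $(\sigma,n)$-isomorphism is in particular a $(\sigma,n)$-bisimulation. For the converse, I would proceed as follows.

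Suppose $\M\models\phi$ and $\M'\models\neg\phi$ in $\TT^d$ are linked by a $(\sigma,n)$-bisimulation $Z$. The plan is to produce trees $\mathcal{N},\mathcal{N}'\in\TT^d$ that are fully bisimilar to $\M$ and $\M'$ respectively and satisfy $\mathcal{N}\cong_\sigma^n\mathcal{N}'$. Bisimulation invariance of $\muML$ then yields $\mathcal{N}\models\phi$ and $\mathcal{N}'\models\neg\phi$, as required.

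The first step is to replace $\M$ and $\M'$ by their bisimilarity quotients, obtained by recursively collapsing bisimilar siblings at each node. These quotients remain in $\TT^d$, are fully bisimilar to the originals, and $Z$ induces a $(\sigma,n)$-bisimulation between them; after this reduction, distinct siblings are pairwise non-bisimilar at every node. Next, I would build a common $n$-prefix $T$ inductively: nodes of $T$ are selected pairs $(w,w')\in Z$, starting from the root pair $(\point_I,\point_I')$, each carrying the $\sigma$-label shared by $w$ and $w'$. At $(w,w')\in T$ of depth below $n$, the forth and back conditions of $Z$ produce maps between the children of $w$ and those of $w'$ pairing bisimilar vertices; thanks to the preceding quotient step these maps must be injective, hence by finiteness bijective, and so $w,w'$ have the same number $k\leq d$ of children. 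The $T$-successors of $(w,w')$ are taken to be exactly these $k$ matched pairs. Finally, $\mathcal{N}$ is formed by attaching to each depth-$n$ leaf $(w,w')$ of $T$ the subtree of $\M$ rooted at $w$, and $\mathcal{N}'$ analogously from $\M'$. Sharing the prefix $T$ gives $\mathcal{N}\cong_\sigma^n\mathcal{N}'$, while the coordinate projections extend to bounded morphisms $\mathcal{N}\to\M$ and $\mathcal{N}'\to\M'$, yielding the required full bisimilarities.

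The main technical obstacle is keeping the outdegree of $T$ within $d$: a direct product-style construction on the original $\M,\M'$ can produce up to $k+l\leq 2d$ successor pairs per node, since the forth- and back-witnesses need not agree. The quotient preprocessing forces the forth- and back-matchings to be inverse bijections between sets of equal size, so the outdegree of $T$ is bounded by $d$ automatically; this preprocessing is presumably the fix for the small mistake in Otto's original argument mentioned in the preceding discussion.
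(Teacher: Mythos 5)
There is a genuine gap, and it is located exactly at the claim that, after passing to bisimilarity quotients, the forth- and back-witness maps of $Z$ between the children of $w$ and of $w'$ are injective. The quotient you take collapses siblings that are bisimilar with respect to the \emph{full} signature and unbounded depth, whereas $Z$ is only a $(\sigma,n)$-bisimulation: two siblings that are distinguished by a non-$\sigma$ proposition (or only below depth $n$) survive the quotient, and yet both may have the same unique $Z$-partner on the other side. Concretely, take $\sigma=\{a\}$, $d=3$, and the two trees of Figure~\ref{fig:witnesses}: in $\M$ the root has children labelled $\{a,b\},\{a\},\emptyset$ and in $\M'$ children labelled $\{c\},\emptyset,\{a\}$. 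All siblings are pairwise non-bisimilar, so your quotient step does nothing, but the forth condition forces both $a$-children of $\M$ onto the single $a$-child of $\M'$; the matching is not injective, the two roots cannot be given equally many children with matching $\sigma$-labels inside $\TT^3$, and your prefix $T$ does not exist. In fact your construction never uses that the second formula is $\neg\phi$, so if it worked it would prove the statement for arbitrary pairs $\phi,\phi'$ --- which the paper explicitly notes is false (this failure is precisely what makes $\ML$-separability over $\TT^d$, $d\geq 3$, \TwoExpTime-hard while definability stays in \ExpTime). Quotienting by $(\sigma,n)$-bisimilarity instead would restore injectivity but destroys $\M\models\phi$, since $\phi$ need not be invariant under $\sigma$-bisimulation; so there is no easy repair along these lines.

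The paper's proof avoids any such matching argument and uses the complementarity of the pair $\phi,\neg\phi$ in an essential way: arguing by contradiction, it assumes joint $\cong_\sigma^n$-consistency fails and applies this assumption to the auxiliary pair $\M$ and $\M_{|_n}^\sigma$ (the $n$-prefix of the $\sigma$-reduct of $\M$), which are trivially $(\sigma,n)$-isomorphic; since $\M_{|_n}^\sigma$ satisfies $\phi$ or $\neg\phi$, the assumption forces $\M_{|_n}^\sigma\models\phi$, and likewise $\N_{|_n}^\sigma\models\neg\phi$. Taking bisimilarity quotients of these reduct-prefixes (which stay in $\TT^d$ and preserve $\phi$, resp.\ $\neg\phi$, by genuine bisimulation invariance) yields isomorphic models, because $\M\bis_\sigma^n\N$ says exactly that $\M_{|_n}^\sigma$ and $\N_{|_n}^\sigma$ are bisimilar. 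This contradicts the assumption. The step ``$\M_{|_n}^\sigma\models\phi$'' is where the special shape $\phi,\neg\phi$ is used, and it is the ingredient your direct construction has no substitute for.
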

The lemma is true, but its proof in~\cite{DBLP:conf/stacs/Otto99} is
mistaken. The problem there is that the construction duplicates
subtrees and hence may turn $d$-ary models into ones with outdegree
greater than~$d$. We present an easy alternative proof.
\begin{proof}
Only the implication~\ref{it:otto 1a}~$\Rightarrow$~\ref{it:otto 1b} is nontrivial. To prove it assume $d$-ary $\M\models\phi$, $\N\models\neg\phi$ with $\M\bis_\sigma^n\N$ and assume towards contradiction that $\phi,\neg\phi$ are not $\cong_\sigma^n$-consistent over $\TT^d$. We have $\M\cong_\sigma^n\M_{|_{n}}^\sigma\bis\M'$ where $\M^\sigma$ is the $\sigma$-reduct of $\M$, and $\M'\in\TT^d$ is the bisimilarity quotient of its $n$-prefix $\M_{|_{n}}^\sigma$. By the assumption that $\phi,\neg\phi$ are not jointly $\cong_\sigma^n$-consistent, $\M\models\phi$ implies $\M_{|_{n}}^\sigma\models\phi$. By invariance of $\phi$ under $\bis$, this in turn implies $\M'\models\phi$. We construct $\N'\models\neg\phi$ symmetrically. By definition, $\M\bis_\sigma^n\N$ means that $\M_{|_{n}}^\sigma$ and $\N_{|_{n}}^\sigma$ are bisimilar, which is equivalent to saying that their bisimilarity quotients $\M'$ and $\N'$ are isomorphic, and hence $(\sigma,n)$-isomorphic. Thus, $\M',\N'$ witness joint $\cong_\sigma^n$-consistency of $\phi,\neg\phi$ over $\TT^d$, a contradiction.
\end{proof}

Using automata-based techniques we can decide whether
Item~\ref{it:otto 1b} in Lemma~\ref{lem:otto 2} holds for all~$n$.

\begin{prop}\label{prop:automata step}
Let $d\in\NN$ and $\A,\A'$ be parity automata over $d$-ary trees with respective alphabets $\Sigma,\Sigma'$. Then,
$\A,\A'$ are jointly $\cong_\sigma^n$-consistent over $\TT^d$ for all $n\in\NN$ iff $\A,\A'$ are jointly $\cong_\sigma^m$-consistent over $\TT^d$ for $m=|\A|\times|\A'|+1$. If $\A,\A'$ are in trash normal form then the latter condition can be checked in time polynomial in $(|\Sigma|+|\Sigma'|)\times(|\A|+|\A'|)^d$. 
\end{prop}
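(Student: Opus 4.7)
The ``$\Rightarrow$'' direction is immediate by specializing $n:=m$. For ``$\Leftarrow$'' and the polynomial-time bound, my plan is to reduce the question to a monotone fixed-point computation over a synchronous product of $\A$ and $\A'$, from which both the bound $m=|\A|\cdot|\A'|+1$ and the polynomial check will fall out.

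Concretely, I would form the product automaton $\mathcal{B}$ on $d$-ary $\sigma$-trees with state space $Q_\A\times Q_{\A'}$, initial state $(q_I,q'_I)$, and transitions that fire a children tuple $((q_1,q'_1),\dots,(q_k,q'_k))$ on letter $\alpha$ exactly when $(q_1,\dots,q_k)\in\delta_\A(q,\alpha)$ and $(q'_1,\dots,q'_k)\in\delta_{\A'}(q',\alpha)$ for the same arity $k$. The technical key step is the reformulation: $\A,\A'$ are jointly $\cong_\sigma^n$-consistent over $\TT^d$ iff $\mathcal{B}$ admits a partial run on some $d$-ary $\sigma$-tree of depth at most $n$, starting from $(q_I,q'_I)$ and with every leaf $(q_v,q'_v)$ in the ``good'' set $G=\{(q,q'): \A[q_I\mapsfrom q]\text{ and }\A'[q_I\mapsfrom q']\text{ are both non-empty over }\TT^d\}$. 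Forward: the shared $n$-prefix of $(\M,\M')$ carries the product of the two accepting runs, and its depth-$n$ leaves automatically lie in $G$. Backward: the partial run can be independently extended at each leaf to a full tree accepted by each component, yielding $\M\models\A$, $\M'\models\A'$ with $\M\cong_\sigma^n\M'$ by construction (parity acceptance holds because any infinite path eventually enters one of the completed subtrees).

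Once the reformulation is in place, I would compute the reachability-style iteration $S_0:=G$ and $S_{k+1}:=S_k\cup\{(q,q'):\exists\alpha,\,\exists(q_i)\in\delta_\A(q,\alpha),\,\exists(q'_i)\in\delta_{\A'}(q',\alpha)\text{ with all }(q_i,q'_i)\in S_k\}$. Being monotone inside the finite set $Q_\A\times Q_{\A'}$ of cardinality $|\A|\cdot|\A'|$, this chain stabilizes after at most $|\A|\cdot|\A'|$ steps. Since $(q_I,q'_I)\in S_n$ is equivalent to joint $\cong_\sigma^n$-consistency by the reformulation, membership stabilizes by $n=m=|\A|\cdot|\A'|+1$. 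This yields both the equivalence of the two conditions in the proposition and a decision procedure consisting of $m$ iteration steps on top of the precomputation of $G$.

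I expect the main obstacle to be computing the base set $G$: each membership question is non-emptiness of an NPTA over $\TT^d$, which reduces to a parity game of size polynomial in $|\A|$ (respectively $|\A'|$). For the parity indices arising from the $\muML$-to-NPTA translation used in this paper this is polynomial-time solvable, and hence $G$ is computable in time polynomial in $|\A|+|\A'|$. Given $G$, the subsequent $m$ iteration steps are trivially polynomial in $|\A|+|\A'|$, yielding the claimed complexity bound.
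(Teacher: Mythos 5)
Your plan breaks at its ``technical key step'': the reformulation of joint $\cong_\sigma^n$-consistency as the existence of a partial run of depth \emph{at most} $n$ with all leaves in $G$ is false, and the whole ``$\Leftarrow$'' argument collapses with it. If a branch of your partial run stops at depth $j<n$ at a pair $(q,q')\in G$, extending that leaf \emph{independently} to a model of $\A[q_I\mapsfrom q]$ and a model of $\A'[q_I\mapsfrom q']$ gives two models that need not agree on $\sigma$ anywhere below depth $j$; you obtain $\M\cong_\sigma^j\M'$, not $\M\cong_\sigma^n\M'$. Concretely, over $\TT^1$ let $\A$ accept exactly the one-node word labelled $a$ and $\A'$ exactly the two-node word $a\,a$: the depth-$0$ partial run consisting of the root $(q_I,q_I')\in G$ exists, yet the pair is not joint $\cong_\sigma^1$-consistent. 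The same flaw shows up as a monotonicity mismatch: your sets $S_k$ \emph{grow} with $k$, so your claimed equivalence would make joint $\cong_\sigma^n$-consistency monotone increasing in $n$, whereas it is anti-monotone (restrict the isomorphism to a shallower prefix). Consequently your increasing reachability iteration cannot prove the nontrivial content of the proposition, namely that one witness at depth $m=|\A|\times|\A'|+1$ yields witnesses at \emph{all} larger depths.

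The missing ingredients are exactly what the paper's sketch supplies. First, one relativizes so that the automata run on full $d$-ary trees without leaves, and characterizes joint $\cong_\sigma^n$-consistency via the \emph{tallness} (minimal root-to-leaf distance, not depth) of finite trees in $L\cap L'$, where $L,L'$ are the languages of finite prefixes of $\sigma$-reducts of models of $\A,\A'$; building an automaton $\B$ of size $|\A|\times|\A'|$ for $L\cap L'$ does use per-state non-emptiness information, so your set $G$ reappears there legitimately. Second, the step from ``tallness $m$'' to ``arbitrarily high tallness'' is a pumping (equivalently, greatest-fixpoint) argument, reflected algorithmically in a \emph{decreasing} chain $S_1\supseteq S_2\supseteq\ldots\supseteq S_{|\B|+1}$, where $S_i$ is the set of states of $\B$ from which some tree of tallness at least $i$ is accepted; this chain stabilizes within $|\B|$ steps, and stabilization is what licenses trees of every tallness. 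Your proposal contains neither the tallness-based characterization nor any pumping/stabilization argument in the correct direction, so it does not establish the ``$\Leftarrow$'' implication; the polynomial-time part would then follow much as you describe, but only once the characterization is repaired.
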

\begin{proof}(Sketch)
Without loss of generality both $\A$ and $\A'$ work over the same alphabet, $\Sigma=\Sigma'=\powerset{\sigma_\textsf{aut}}$ for some signature $\sigma_\textsf{aut}$, and $\sigma\subset\sigma_\mathsf{aut}$.
Due to well-known relativization techniques we also do not lose generality by only running $\A,\A'$ on full $d$-ary trees with no leaves. Let $L$ be the language of finite full $d$-ary trees over $\sigma$ such that $\M\in L$ iff $\M$ is a prefix of a $\sigma$-reduct of a model of $\A$. Let $L'$ be the analogous language for $\A'$. 
The \emph{tallness} of a finite tree is the minimal distance from the root to a leaf. Observe that $\A,\A'$ are $\cong_\sigma^n$-consistent over $\TT^d$ iff $L\cap L'$ contains a tree of tallness $n$. Thus, it suffices to check if $L\cap L'$ contains trees of arbitrarily high tallness. 

To that end construct an automaton $\B_0$ recognizing $L$.
Denote $\A=(Q,\Sigma,\delta,q_I,\rank)$. Assume that $\A$ is in trash normal form (otherwise transform it to that form), and let $q_\bot$ denote the trash state. Then $\B_0=(Q,\powerset{\sigma},\delta',q_I,\rank_\bot)$, where $\rank_\bot(q)=1$ for all $q$ so that no infinite run is accepting, and
\[
\delta'(q,a) = 
\begin{cases}
 \{\emptyset\}\cup\bigcup\limits_{\substack{b\in\Sigma\\b\cap\sigma=a}}\delta(q,b) & \text{if $q\neq q_\bot$,}\\
 \emptyset & \text{otherwise;}
\end{cases}
\]
for all $a\in\powerset{\sigma},q\in Q$. Note that transitions from $q$ to $\emptyset$ in the above $\delta'$ enable $q$ to be assigned to a leaf in an accepting run of $\B_0$.
Clearly, if $\A$ is in the trash normal form then $\B_0$ is computed in time polynomial in $|\Amc|$ and $|\Sigma|$.
Let $\B_0'$ be an analogous automaton for $L'$. Construct from $\B_0,\B_0'$ an automaton $\B$ recognizing $L\cap L'$, of size $|\B|=|\B_0|\times|\B_0'|=|\A|\times|\A'|$. An easy pumping argument shows that the language $L\cap L'$ of $\B$ contains trees of arbitrarily high tallness iff it contains a tree of tallness $m=|\B|+1$. To test the latter condition it is enough to inductively compute a sequence $S_1\supset S_2\supset...\supset S_{|\B|+1}$ of subsets of states of $\B$, where $S_i$ is the set of all states $q$ such that $\B[q_I\mapsfrom q]$ recognizes a tree of tallness at least $i$.
\end{proof}

We are ready to solve $\ML$-definability over $\TT^d$ in exponential time. Assume $\muML$-formula $\phi$ over signature $\sigma$. For every $n$, we know by Equivalence~\eqref{eq:non-sep vs n-bis} that $\phi$ is equivalent over $\TT^d$ to some $\psi\in\ML_\sigma^n$ iff $\phi,\neg\phi$ are not jointly $\bis_\sigma^n$-consistent over $\TT^d$. By Lemma~\ref{lem:otto 2} this is equivalent to the lack of joint $\cong_\sigma^n$-consistency of $\phi,\neg\phi$ over $\TT^d$. By Corollary~\ref{cor:muML to NPTA} we can compute exponentially-sized automata $\A$, $\A'$ in trash normal form, equivalent to $\phi$ and $\neg\phi$ over $\TT^d$. It follows that $\phi$ is not $\ML$-definable over $\TT^d$ iff $\A,\A'$ are joint $\cong_\sigma^n$-consistent over $\TT^d$ for every $n$. The last condition is decided using Proposition~\ref{prop:automata step}.
The runtime of our algorithm is polynomial in $(|\Sigma|+|\Sigma'|)\times(|\A|+|\A'|)^d$, and thus exponential in $|\phi|$. This proves the part of Theorem~\ref{thm:otto main} about $\TT^d$. The remaining part concerning unrestricted models is a special case of Theorem~\ref{thm:separability unrestricted}, which we will prove in the following subsection.


%

\subsection{Modal Separation: the Unrestricted Case}\label{sec:separability}
Over unrestricted models, separability turns out to be only slightly more complicated than definability. Lemma~\ref{lem:otto 2} becomes false if $\neg\phi$ is replaced with arbitrary $\phi'$ (which would be the statement relevant for separability). We have the following lemma, however.

\begin{lem}\label{lem:unbounded separability}
For every $\phi,\phi'\in\muML$ and $n\in\NN$ the following are equivalent:
\begin{enumerate}
  \item\label{it:unbounded separability 1} $\phi,\phi'$ are jointly $\bis_\sigma^n$-consistent over all models.
  \item\label{it:unbounded separability 2} $\phi,\phi'$ are jointly $\cong_\sigma^n$-consistent over $\TT^d$, where $d=|\phi|+|\phi'|$.
\end{enumerate}
\end{lem}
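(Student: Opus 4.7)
The easy direction (2)~$\Rightarrow$~(1) is immediate: every $d$-ary tree is a model and every $(\sigma,n)$-isomorphism is also a $(\sigma,n)$-bisimulation. For the converse, fix $\M\models\phi$ and $\M'\models\phi'$ together with a $(\sigma,n)$-bisimulation $Z$ between them. Since $\muML$ is bisimulation-invariant and $\bis_\sigma^n$ is preserved under tree unravellings, we may assume that $\M$ and $\M'$ are trees. By positional determinacy of parity games, fix positional winning strategies $\strategy_\phi$ for Eve in $\game(\M,\phi)$ and $\strategy_{\phi'}$ for Eve in $\game(\M',\phi')$. For each $v\in M$ let $S_v$ collect the successors of $v$ played by $\strategy_\phi$ at positions of the form $(v,\Diamond\psi)$; since at most $|\phi|$ such formulae $\Diamond\psi$ exist, $|S_v|\leq|\phi|$. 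Define $S'_{v'}$ symmetrically, with $|S'_{v'}|\leq|\phi'|$.

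We construct $\N$ and $\N'$ in parallel, assigning to each node a trace $\tau$ back to a node of $\M$, resp.~$\M'$. The roots are declared matched and trace to the roots of $\M$ and $\M'$. Inductively, at a matched pair $(p,p')$ at depth $k<n$ tracing to $(v,v')$ with $Z(v,v')$: using the forth-clause of $Z$, pick for every $u\in S_v$ some successor $f(u)$ of $v'$ with $Z(u,f(u))$; using the back-clause, pick for every $u'\in S'_{v'}$ some successor $g(u')$ of $v$ with $Z(g(u'),u')$. Let $I=S_v\sqcup S'_{v'}$, so $|I|\leq|\phi|+|\phi'|=d$. For each $i\in I$, create a new child $c_i$ of $p$ in $\N$, with $\tau(c_i)=i$ if $i\in S_v$ and $\tau(c_i)=g(i)$ otherwise, together with a matched child $c'_i$ of $p'$ in $\N'$, with $\tau(c'_i)=f(i)$ if $i\in S_v$ and $\tau(c'_i)=i$ otherwise. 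Below depth $n$, no further matching is required: each node of $\N$ tracing to some $v$ receives one child per element of $S_v$ (and analogously in $\N'$), keeping outdegree bounded by $|\phi|$, resp.~$|\phi'|$.

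Three checks remain. First, $\N,\N'\in\TT^d$, since outdegree is bounded by $d$ at depth $<n$ and by $\max(|\phi|,|\phi'|)\leq d$ elsewhere. Second, $\N\cong_\sigma^n\N'$, since the matched pairs form the required bijection on $n$-prefixes, the successor relation is preserved by construction, and $\sigma$-valuations agree because matched nodes have $Z$-related traces. Third, $\N\models\phi$ (and symmetrically $\N'\models\phi'$): Eve wins $\game(\N,\phi)$ by lifting $\strategy_\phi$ along $\tau$---at $(p,\Diamond\psi)$ she plays the child of $p$ whose trace is $\strategy_\phi(\tau(p),\Diamond\psi)\in S_{\tau(p)}$, which exists by construction; at $(p,\Box\psi)$ any move $c_i$ of Adam has $\tau(c_i)$ a genuine successor of $\tau(p)$ on which $\strategy_\phi$ continues; the resulting play projects componentwise (via $\tau$ on the first coordinate) onto a $\strategy_\phi$-play with the same subformula trace, so the parity outcome is inherited. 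The main obstacle is reconciling three competing demands at each matched pair $(p,p')$: $\N$ must provide enough children to execute $\strategy_\phi$, $\N'$ must provide enough to execute $\strategy_{\phi'}$, and yet $\N$ and $\N'$ must be $(\sigma,n)$-isomorphic within outdegree $d$. Indexing children by the disjoint union $S_v\sqcup S'_{v'}$, rather than identifying coinciding traces, resolves the tension by allowing harmless duplication of subtrees whenever a single $\M$-node would otherwise have to play multiple roles.
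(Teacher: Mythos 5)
Your proof is correct, but it takes a genuinely different route from the paper. The paper factors the nontrivial implication (1)$\Rightarrow$(2) through an intermediate step: first joint $\cong_\sigma^n$-consistency over \emph{all} models (read off from Otto's original argument), and then a trimming step (the implication (3)$\Rightarrow$(4) of Lemma~\ref{lem:graded model theory}), where the two witnesses are assumed to have identical $n$-prefixes and one restricts to the $\Diamond$-witnesses of both positional winning strategies inside the prefix, yielding the bound $d=|\phi|+|\phi'|$. You merge both steps into a single direct construction: starting only from a $(\sigma,n)$-bisimulation, you build the two bounded-outdegree trees in parallel, indexing children at matched nodes by the disjoint union $S_v\sqcup S'_{v'}$ of the strategies' $\Diamond$-witness sets and using the forth/back clauses to supply the partner traces. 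The shared ingredient with the paper is precisely the positional-strategy witness count giving outdegree $\leq|\phi|+|\phi'|$; what is different is that you never pass through joint $\cong_\sigma^n$-consistency over unbounded-outdegree models, and instead enforce the isomorphism of $n$-prefixes yourself via controlled duplication of subtrees (which, unlike in Otto's flawed duplication step that the paper corrects in Lemma~\ref{lem:otto 2}, stays within the outdegree budget because only strategy-witnesses are duplicated). The price is that satisfaction of $\phi$ and $\phi'$ must be re-verified by lifting the strategies along the trace map rather than by passing to submodels; your lifting argument is sound, since every child's trace is a genuine successor (so Adam's $\Box$-moves project correctly) and the designated $\Diamond$-witness child always exists, and the winning condition depends only on the formula components of the play. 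Two small points you should make explicit: the valuations of $\N,\N'$ are inherited in full from the traces (you use this both for the $\sigma$-agreement of matched nodes and for the terminal positions $(\point,\atProp)$, $(\point,\neg\atProp)$ of the lifted plays), and at disjunction positions Eve also follows the projection of the positional strategy, not only at $\Diamond$-positions. With these spelled out, your argument is a self-contained alternative to the paper's two-step proof, at the cost of not reusing Otto's lemma and the graded-case machinery.
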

\begin{proof}
  The implication \eqref{it:unbounded separability 1}$\Leftarrow$\eqref{it:unbounded separability 2} is immediate. To prove the other one \eqref{it:unbounded separability 1}$\Rightarrow$\eqref{it:unbounded separability 2} consider an intermediate property:
  \begin{align}
  \text{$\phi,\phi'$ are jointly $\cong_\sigma^n$-consistent over all models.}\tag{1.5}\label{eq:unbounded sep intermediate}
  \end{align}
  The implication \eqref{it:unbounded separability 1}$\Rightarrow$\eqref{eq:unbounded sep intermediate} can be read off from Otto's original proof. The remaining one \eqref{eq:unbounded sep intermediate}$\Rightarrow$\eqref{it:unbounded separability 2} is a special case of a stronger claim which we prove later: the implication \eqref{it:graded model theory 3}$\Rightarrow$\eqref{it:graded model theory 4} of Lemma~\ref{lem:graded model theory}.
\end{proof}
Lemma~\ref{lem:unbounded separability} allows us to solve $\ML$-separability in exponential time. 

\begin{thm}\label{thm:separability unrestricted}
  Over all models, $\ML$-separability of $\muML$-formulae is \ExpTime-complete.
\end{thm}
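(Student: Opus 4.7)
The plan is to establish the two bounds separately, with the upper bound obtained by composing the results already developed in the paper and the lower bound inherited from definability.

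For the \textbf{lower bound}, I would simply observe that $\ML$-definability is the special case of $\ML$-separability where $\phi'=\neg\phi$. Since Theorem~\ref{thm:otto main} already states that $\ML$-definability of $\muML$-formulae over all models is \ExpTime-hard, the same hardness transfers verbatim to $\ML$-separability. No additional work is needed here.

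For the \textbf{upper bound}, I would chain the available results. Fix $\phi,\phi'\in\muML$ and set $\sigma=\sig(\phi)\cup\sig(\phi')$ and $d=|\phi|+|\phi'|$. By Equivalence~\eqref{eq:non-sep vs n-bis}, $\phi,\phi'$ admit an $\ML$-separator over all models iff there exists some $n\in\NN$ for which they are \emph{not} joint $\bis_\sigma^n$-consistent over all models. By Lemma~\ref{lem:unbounded separability}, this is equivalent to the failure of joint $\cong_\sigma^n$-consistency over $\TT^d$ for that same $n$. So the decision problem reduces to checking whether $\phi,\phi'$ are joint $\cong_\sigma^n$-consistent over $\TT^d$ for \emph{every} $n\in\NN$.

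To decide this, I would invoke Theorem~\ref{thm:munpta} to build, in exponential time, NPTAs $\A$ and $\A'$ over $\TT^d$ with exponentially many states that are equivalent to $\phi$ and $\phi'$ respectively over $\TT^d$. Joint $\cong_\sigma^n$-consistency of the formulae over $\TT^d$ coincides with joint $\cong_\sigma^n$-consistency of the automata, so it remains to test whether $\A,\A'$ are joint $\cong_\sigma^n$-consistent for every $n$. This is exactly the task handled by Proposition~\ref{prop:automata step}, which runs in time polynomial in $|\A|+|\A'|$.

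Putting the steps together yields an algorithm running in time exponential in $|\phi|+|\phi'|$: the automata construction is exponential, and the subsequent consistency test is polynomial in the size of those automata. I do not expect any genuine obstacle, as the paper has already done the substantive work; the only point to watch is that the parameter $d=|\phi|+|\phi'|$ arising from Lemma~\ref{lem:unbounded separability} is polynomial in the input, so feeding it into the bounded-outdegree branch of Theorem~\ref{thm:munpta} still yields an exponential-time construction rather than something worse. The overall structure mirrors the \ExpTime algorithm for definability sketched right after Proposition~\ref{prop:automata step}, with Lemma~\ref{lem:unbounded separability} replacing Lemma~\ref{lem:otto 2} as the bridge from modal equivalence to $(\sigma,n)$-isomorphism on bounded-outdegree trees.
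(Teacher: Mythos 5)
Your proposal is correct and follows essentially the same route as the paper: the paper proves this theorem by rerunning its \ExpTime\ algorithm for Theorem~\ref{thm:otto main} (Equivalence~\eqref{eq:non-sep vs n-bis}, then automata via Theorem~\ref{thm:munpta} and Proposition~\ref{prop:automata step}) with Lemma~\ref{lem:unbounded separability} replacing Lemma~\ref{lem:otto 2}, and inherits the lower bound from definability, exactly as you do. Your explicit remark that $d=|\phi|+|\phi'|$ stays polynomial so the bounded-outdegree automata construction remains exponential is a correct and useful detail the paper leaves implicit.
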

\begin{proof}
The proof is almost the same as our proof of Theorem~\ref{thm:otto main}. The only difference is that we consider an arbitrary $\phi'$ in place of $\neg\phi$, and hence use Lemma~\ref{lem:unbounded separability} in place of Lemma~\ref{lem:otto 2}.
\end{proof}
Apart from deciding separability we also construct separators when they exist. Given a subset $\LL$ of $\muML$-formulae, $\phi\in\muML$, we call $\psi\in\LL$ an \emph{$\LL$-uniform consequence} of $\phi$ if
\[
\psi\models\theta \hspace{0.5cm} \iff \hspace{0.5cm} \phi\models\theta
\]
for every $\theta\in\LL$.
Note that the above condition implies $\phi\models\psi$.
The notion relativizes to a fixed class $\ClassMod$ of models by only considering entailment over that class. 
Observe that if $\phi,\phi'$ are $\LL$-separable and $\psi\in\LL$ is an $\LL$-uniform consequence of $\phi$ then $\psi$ is an $\LL$-separator for $\phi,\phi'$. The same is true over any class $\ClassMod$.


Note that it follows from the proof of Theorem~\ref{thm:separability unrestricted} that if $\phi,\phi'$ are $\ML$-separable then they admit a separator of modal depth $n$ at most exponential in $|\phi|+|\phi'|$.
It follows that constructing an $\ML$-separator for $\phi,\phi'$ boils down to constructing an $\ML_\sigma^n$-uniform consequence of $\phi$ where $\sigma=\sig(\phi)\cup\sig(\phi')$. A naive construction which always works is to take the disjunction of all $\ML_\sigma^n$-types consistent with $\phi$ over $\ClassMod$. Here, by an \emph{$\ML_\sigma^n$-type} we mean a maximal consistent subset of $\ML_\sigma^n$. Since up to equivalence there are only finitely many formulae in $\ML_\sigma^n$, each
$\ML_\sigma^n$-type can be represented as a single $\ML_\sigma^n$-formula and the
mentioned disjunction $\psi$ is well-defined. This construction is
non-elementary in $n$ over all models and doubly exponential in $n$
over models of bounded outdegree.

For some classes $\ClassMod$ a (i) stronger and (ii) more succinct construction is possible. Given $\phi\in\muML$ and $n\in\NN$ we build $\psi\in\ML^n$ which (i) is an $\ML^n$-uniform (rather than $\ML_\sigma^n$-uniform) consequence of $\phi$, and (ii) whose dependence on $n$ is only exponential.
This better construction works over unrestricted models, over $\TT^1$, and over $\TT^2$, but notably not over $\TT^d$ with $d\geq 3$. 
The following example shows that this gap is unavoidable, as already over $\TT^3$, $\ML^n$-uniform consequences need not exist.
Recall the formula $\theta_\infty$ from Example~\ref{exa:separability} and consider $\phi=\Diamond(\theta_\infty\wedge a)\wedge\Diamond(\neg\theta_\infty\wedge a)$ which says that there are at least two children satisfying $a$: one with an infinite path and the other without. Over ternary models $\phi$ enforces that there is at most one child satisfying $\neg a$, and hence entails $\theta_c=\Diamond(\neg a\wedge c)\Rightarrow\Box(\neg a\Rightarrow c)$ for every $c$. It is easy to check that over $\TT^3$ any consequence $\psi\in\ML$ of $\phi$ can only entail $\theta_c$ when $c\in\sig(\psi)$. Consequently, $\phi$ has no $\ML^1$-uniform consequence over $\TT^3$. 

We now go back to the construction of $\ML^n$-uniform consequences in the feasible cases: all models, $\TT^1$, and $\TT^2.$
Since in the following Section~\ref{sec:unary} we will provide a more efficient construction for $\TT^1$, we focus on the unrestricted and binary case. For convenience, we construct $\ML^n$-uniform consequences of \emph{automata} instead of formulae, with definitions adapted in an obvious way. In the binary case we will assume that automata are \emph{duplication safe} meaning that for every singleton transition $(p)\in\delta(q,c)$ its duplication is also legal, that is,  $(p,p)\in\delta(q,c)$. Every automaton equivalent to a $\muML$-formula can be made duplication safe.

\begin{prop}\label{prop:L-uniform construction}
Let $\ClassMod$ be the class of all models or $\TT^2$, and let \Amc be an NPTA over $\ClassMod$ with alphabet $\Sigma$. Assume that \Amc is in trash normal form, and in the case $\ClassMod=\TT^2$ also that it is duplication safe. For every $n\in\NN$, an $\ML^n$-uniform consequence of $\Amc$ over $\ClassMod$ can be constructed in time
${(|\Sigma|\times2^{|\Amc|})^{O(n)}}$ 
if $\ClassMod$ is the class of all models and in time $(|\Sigma|\times|\Amc|)^{O(n)}$ 
if $\ClassMod=\TT^2$. 
\end{prop}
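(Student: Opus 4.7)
The plan is to construct inductively, for each state $q$ of $\A$ and each depth $k \leq n$, a formula $\psi_q^k \in \ML_\sigma^k$ (with $\sigma := \sig(\A)$) such that $\M \models \psi_q^k$ iff there exists $\M'$ accepted by $\A[q_I \mapsfrom q]$ with $\M \bis_\sigma^k \M'$, and then to set $\psi := \psi_{q_I}^n$. Since $\bis_\sigma^n$ coincides with $\equiv_{\ML_\sigma^n}$, this property means precisely that the models of $\psi$ form the $\bis_\sigma^n$-closure of the models of $\A$; by a signature-reduction argument analogous to the one already recalled for separators, this lifts to the statement that $\psi$ is an $\ML^n$-uniform consequence of $\A$.

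The formulas are built by induction on $k$ after one preprocessing step: compute the set of \emph{live} states, i.e.\ those $q$ for which $\A[q_I \mapsfrom q]$ has at least one accepting run (standard parity-automaton emptiness). For a non-live $q$ put $\psi_q^k := \bot$. For a live $q$ put
\[
\psi_q^{k+1} \;:=\; \bigvee_{(c,S)}\Bigl(\mathsf{lbl}(c)\wedge\bigwedge_{q'\in S}\Diamond\psi_{q'}^{k}\wedge\Box\bigvee_{q'\in S}\psi_{q'}^{k}\Bigr),
\]
where $(c,S)$ ranges over $c\in\Sigma$ and $S \in \delta(q,c)$ with every $q'\in S$ live, and $\mathsf{lbl}(c)$ is the conjunction of literals describing label $c$; the base case $\psi_q^0$ is the disjunction of those $\mathsf{lbl}(c)$ admitting such a transition. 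In the $\TT^2$ case the same definition is used with $S$ ranging over tuples in $Q^{\leq 2}$.

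Correctness is proved by induction on $k$. One direction is straightforward: an accepting run of $\A[q_I \mapsfrom q]$ on $\M'$ selects some $(c,S)$ at the root and delivers, inductively, all conjuncts of the matching disjunct, so $\M \models \psi_q^{k+1}$ for every $\M \bis_\sigma^{k+1}\M'$. For the converse, given $\M \models \psi_q^{k+1}$ via a pair $(c,S)$, the $\Diamond$-conjuncts supply, for each $q'\in S$, a child $v_{q'}$ of $\M$ with $v_{q'}\models\psi_{q'}^{k}$ and hence, by induction, a model $\M_{q'}$ accepted from $q'$ with $v_{q'}\bis_\sigma^{k}\M_{q'}$; the $\Box$-conjunct supplies, for each child $w$ of $\M$, some $q'(w)\in S$ and a model $\M_w$ accepted from $q'(w)$ with $w\bis_\sigma^{k}\M_w$. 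Gluing all these subtrees below a fresh $c$-labelled root yields $\M'\models\A[q_I \mapsfrom q]$ via transition $S$, and the matching $v_{q'}\leftrightarrow\M_{q'}$, $w\leftrightarrow\M_w$ witnesses $\M\bis_\sigma^{k+1}\M'$. In the $\TT^2$ case the glued $\M'$ must stay binary, which is exactly where duplication-safety intervenes: whenever $\M$ has two children but the chosen transition is a singleton $(q')$, we switch to the guaranteed $(q',q')\in\delta(q,c)$ and fill both slots from $\M$'s children.

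The main subtlety to justify is that this ``memoryless'' construction respects the parity condition even though it tracks no priority history. This works because parity acceptance is a tail property: once each $q'\in S$ is live and is assigned its own accepting subtree, every infinite path of the glued run has an accepting tail, so the resulting run is accepting irrespective of the priority visited at the root. This is precisely what lets us avoid the extra $|\A|^{O(|\A|)}$ blowup that a signature/LAR-style construction would incur. A routine size estimate then bounds the per-step blowup by $|\Sigma|\cdot|\delta(q,c)|\cdot|Q|$, which is $2^{O(|\A|)}$ in the unrestricted case (where $|\delta(q,c)|\le 2^{|Q|}$) and polynomial in $|\A|\cdot|\Sigma|$ in the $\TT^2$ case (where transitions are tuples of length $\le 2$); iterating $n$ levels yields the claimed $|\A|^{O(n|\A|)}$ and $2^{O(n|\A|)}$ bounds respectively.
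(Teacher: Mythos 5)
Your construction is essentially the paper's: the formulae $\psi^k_q$ are exactly the $\nabla$-style unfoldings of the automaton used there, the correctness argument is the same induction with the same gluing of witness models below a fresh root, and duplication-safety plays exactly the role you assign to it. The size analysis and the remark that parity acceptance is a tail property are fine. The one genuine gap is your last step. You state the invariant only with $\bis_\sigma^k$ and then claim that ``a signature-reduction argument analogous to the one recalled for separators'' upgrades ``$\mathrm{Mod}(\psi)$ is the $\bis_\sigma^n$-closure of $\mathrm{Mod}(\Amc)$'' to $\psi$ being an $\ML^n$-uniform consequence, i.e.\ uniform with respect to formulae over \emph{arbitrary} signatures. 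That argument does not transfer: the separator reduction replaces non-$\sigma$ propositions occurring in a given formula by $\bot$ and passes to models where non-$\sigma$ propositions are false, so it goes from an arbitrary-signature formula to a $\sigma$-formula; here you need the opposite direction, namely that every $\M\models\psi$ satisfies every $\theta\in\ML^n$ with $\Amc\models\theta$ even when $\theta$ mentions propositions outside $\sigma$. For that you need a model of $\Amc$ that is $n$-bisimilar to $\M$ over the \emph{full} signature, and a $(\sigma,n)$-bisimilar one does not suffice on its own: a $(\sigma,n)$-bisimulation need not be functional, so the non-$\sigma$ valuation of $\M$ cannot simply be copied across it, and turning the $\sigma$-witness into a full-signature witness uses precisely the closure of the models of $\Amc$ under subtree duplication (set-valued transitions, respectively duplication-safety) --- in effect a re-run of your gluing argument. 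This distinction is not cosmetic: the paper stresses the difference between $\ML^n_\sigma$- and $\ML^n$-uniform consequences, and over $\TT^3$ the former always exist while the latter may not, so the full-signature claim is where the actual content of the proposition lies.

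The repair is small and is what the paper does: state and prove the invariant with full $n$-bisimilarity, i.e.\ $\M\models\psi^k_q$ iff there is $\N\models\Amc[q_I\mapsfrom q]$ with $\M\bis^k\N$. Your induction already yields this once the fresh root of the glued model inherits the full valuation of the root of $\M$ (the automaton reads only the $\sigma$-part, so acceptance is unaffected) and the induction hypothesis is invoked for full $k$-bisimilarity; with that invariant, $\ML^n$-uniformity is immediate. A smaller remark: in the $\TT^2$ case you spell out only the subcase of a singleton transition with two children; for a pair transition one also needs the (easy) observation that the $\nabla$-semantics allows a surjective assignment of the at most two children of $\M$ to the two states, and one must avoid adding both per-state and per-child witnesses, which would exceed outdegree two --- but this is at the level of detail the paper itself leaves implicit.
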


\begin{proof}
Let $\A=(Q,\Sigma,q_I,\delta,\rank)$ be an NPTA over $\ClassMod$ in trash normal form. In the case $\ClassMod=\TT^2$ assume that $\A$ is also duplication safe.
We construct $\psi_{m,q}$ for every $q\in Q$ and $m\leq n$ by induction on $m$. For the base case we put:
\[
  \psi_{0,q} = \bigvee\{c\in\alphabet\ |\ \text{there is $\N\in\ClassMod$ with $\N\models\Amc[q_I\mapsfrom q]$ and $\N\models c$}\}
\]
For the induction step define:
\[
  \psi_{m+1,q} = \bigvee_{c\in\alphabet} \bigvee_{S\in\delta(q,c)}c \wedge \nabla \{\psi_{m,p}\ |\ p\in S\}
\]
where $\nabla\Phi$ is an abbreviation for $\bigwedge_{\theta\in\Phi}\Diamond\theta\wedge\Box\bigvee_{\theta\in\Phi}\theta$. Assume $\ClassMod$ is either the class of all models or $\TT^2$.
The construction preserves the following invariant:
\begin{align}
  \M\models\psi_{m,q} \text{\hspace*{0.5cm} $\iff$ \hspace*{0.5cm} there exists $\N\in\ClassMod$ with $\N\models\Amc[q_I\mapsfrom q]$ and $\M\bis^m\N$}\label{eq:universal consequence}
\end{align}
for every structure $\M\in\ClassMod$. 
Hence, $\psi_{n,q_I}$ is an $\ML^n$-uniform consequence of $\A$ over $\ClassMod$. Thanks to the trash normal form of \Amc, the base case formula is computed in polynomial time. The inductive case consists of a simple syntactic manipulation. Thus, the claimed complexity follows from the fact that 
$\psi_{q,n}$ has the right size for all $q$ and $n$, which can be checked by a routine induction on $n$.

The proof of~\eqref{eq:universal consequence} proceeds by induction. The base case is immediate so we focus on the inductive case. We start with the easier implication $\Leftarrow$ from right to left. Assume $\M\bis^{m+1}\N$ for some $\N\in\ClassMod$ such that $\N\models\Amc[q_I\mapsfrom q]$. Let $c$ and $N_0$ denote respectively the color and the set of all children of the root of $\N$. There is an accepting run $\rho$ of $\Amc[q_I\mapsfrom q]$ on $\N$. We have $\rho[N_0]=S$ for some transition $S\in\delta(q,c)$. For each $\point\in N_0$ the run $\rho$ witnesses that the subtree of $\N$ rooted at $\point$ is accepted by $\Amc[q_I\mapsfrom\rho(\point)]$. By induction hypothesis, this means that the subtree satisfies $\psi_{\rho(\point),m}$. It follows that $\N$ satisfies $\nabla \{\psi_{m,p}\ |\ p\in S\}$ and hence also $\psi_{q,m+1}$. By $\M\bis^{m+1}\N$ this implies $\M\models\psi_{q,m+1}$.

We now prove the other implication $\Rightarrow$ from left to right. The proof uses similar but slightly different arguments in the cases of binary and unrestricted models. It is worth to point out, however, that the implication $\Rightarrow$ would not be valid over $\TT^d$ with $d\geq 3$.
Assume $\M\models\psi_{m+1,q}$, let $c$ be the color of the root and $M_0$ its children. There is $S\in\delta(q,c)$ such that $\nabla(\Phi)$ where $\Phi=\{\psi_{m,p}\ |\ p\in S\}$ is satisfied in the root.
This means that the subtree of each $\point\in M_0$ satisfies some $\psi_{m,p}$, and conversely each $\psi_{m,p}$ is satisfied in some of these subtrees. It is worth noting that $\Phi$ contains at most two elements in case $\ClassMod=\TT^2$.

We define a model $\N'$. Its exact definition depends on which class $\ClassMod$ of models we consider. 
If $\ClassMod$ is the class of all models we obtain $\N'$ from $\M$ by replacing the subtree of each $\point\in M_0$ with $|Q|$ distinct copies of that subtree. If $\ClassMod=\TT^2$ then $\N'$ is the full binary tree obtained from $\M$ by duplicating subtrees.
In either case, let $N_0'$ denote the children of the root of $\N'$.
By design $\N'$ has a separate witness for each formula in $\Phi$.
That is, there is a surjective assignment $h:N_0'\to\Phi$ which maps every $\point\in N_0'$ to some formula $\psi_{m,p}$ true in the subtree of $\point$.

By the induction hypothesis, for each $\point\in N_0'$ with $h(\point)=\psi_{m,p}$ there is a model $\N_p\models\Amc[q_I\mapsfrom p]$ $m$-bisimilar to the subtree of $\point$. Define $\N$ as follows: first take the disjoint union $\{\point\}\uplus\biguplus\{\N_p\ |\ p\in S\}$ of all the $\N_p$'s and a fresh point $v$ of color $c$; then for every $\N_p$ add an edge from $\point$ to the root of $\N_p$ and set $\point$ as the new root. It is easy to see that $\M\bis\N'\bis^{m+1}\N$ and $\N\models\Amc[q_I\mapsfrom q]$ (in the case $\ClassMod=\TT^2$ the latter may use the assumption that $\Amc$ is duplication safe). This completes the proof of~\eqref{eq:universal consequence}, and hence proves Proposition~\ref{prop:L-uniform construction}.
\end{proof}

Given the exponential construction of automata from
Corollary~\ref{cor:muML to NPTA} and the exponential upper bound on
modal depth $n$ of separators, Proposition~\ref{prop:L-uniform
construction} yields an efficient construction of separators.
%
\begin{thm}\label{thm:construction unrestricted}
  If $\phi,\phi'$ are $\ML$-separable, then one can compute an $\ML$-separator in time doubly exponential in $|\phi|+|\phi'|$.
  %
\end{thm}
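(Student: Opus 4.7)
The plan is to assemble the pieces that have already been developed in this subsection. The connecting observation is that if $\phi,\phi'$ are $\ML$-separable and some separator has modal depth at most $n$, then any $\ML^n$-uniform consequence $\psi$ of $\phi$ (over the class of all models) is itself an $\ML$-separator for $\phi,\phi'$: by definition of uniform consequence we have $\phi \models \psi$, and for any candidate separator $\theta \in \ML^n$ witnessing $\phi \models \theta \models \neg\phi'$ the uniform-consequence property yields $\psi \models \theta$, hence $\psi \models \neg\phi'$. So constructing a separator reduces to constructing an $\ML^n$-uniform consequence of $\phi$ for a suitably chosen $n$.

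First, I would apply Theorem~\ref{thm:separability unrestricted} to decide separability; if the answer is negative, output ``no separator'' and halt. Otherwise, I would extract from the proof of Theorem~\ref{thm:separability unrestricted} an exponential upper bound $n = 2^{O(|\phi|+|\phi'|)}$ on the modal depth of some separator: this bound arises from the $|\A|\cdot|\A'|+1$ threshold in Proposition~\ref{prop:automata step}, applied to the exponentially sized NPTAs $\A, \A'$ for $\phi, \phi'$ produced by Theorem~\ref{thm:munpta}, together with the equivalences in Lemma~\ref{lem:unbounded separability} and~\eqref{eq:non-sep vs n-bis}.

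Next, I would take the NPTA $\A$ of size $2^{O(|\phi|)}$ equivalent to $\phi$ from Theorem~\ref{thm:munpta} and feed the pair $(\A, n)$ into Proposition~\ref{prop:L-uniform construction}, yielding an $\ML^n$-uniform consequence $\psi$ of $\A$ over the class of all models. By the opening observation, $\psi$ is the sought $\ML$-separator. A direct estimate with $|\A| = 2^{O(|\phi|)}$ and $n = 2^{O(|\phi|+|\phi'|)}$ gives a running time of
\[
|\A|^{O(n \cdot |\A|)} \;=\; 2^{O(|\phi|) \cdot 2^{O(|\phi|+|\phi'|)}} \;=\; 2^{2^{O(|\phi|+|\phi'|)}},
\]
matching the claimed doubly exponential bound. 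The heavy lifting has already been absorbed into Proposition~\ref{prop:L-uniform construction} and into the modal-depth bound extracted inside Theorem~\ref{thm:separability unrestricted}, so the main task here is just careful bookkeeping and there is no serious obstacle left.
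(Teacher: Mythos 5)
Your proposal is correct and follows essentially the same route as the paper: bound the modal depth of a separator exponentially via the decision procedure (Theorem~\ref{thm:separability unrestricted}, i.e.\ Lemma~\ref{lem:unbounded separability}, Equivalence~\eqref{eq:non-sep vs n-bis} and Proposition~\ref{prop:automata step}), then output the $\ML^n$-uniform consequence of the NPTA for $\phi$ given by Proposition~\ref{prop:L-uniform construction}. The only blemish is the intermediate expression in your running-time estimate, which drops the factor $|\A|$ from the exponent of $|\A|^{O(n\cdot|\A|)}$; since $|\A|=2^{O(|\phi|)}$ this does not affect the final bound $2^{2^{O(|\phi|+|\phi'|)}}$.
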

It is not difficult to show that, in the presence of two
modalities $\Diamond_1,\Diamond_2$ over independent accessibility relations $\to_1,\to_2$, the construction is
optimal: one can express by a $\muML$-formula of size polynomial in $n$ that the model embeds a full binary tree of depth $2^n$ in which each
inner node has both a $\to_1$- and a $\to_2$-successor.
Using standard techniques, one can show that any modal formula
expressing this property is of doubly exponential size~\cite{French13}. Intuitively, for every inner node, the modal formula needs to have a subformula inside a sequence of modalities that addresses this node. Since there are doubly exponentially many possible addresses in the stipulated tree, a doubly exponential lower bound on formula size follows. Whether having two accessibility relations is necessary for this lower bound is an interesting question which we leave open. Note that the obvious idea of simulating the two accessibility relations by a single one and using an auxiliary proposition $p$ (that is, $\Diamond_1\psi$ is replaced by $\Diamond (p\wedge\psi)$ and $\Diamond_2\psi$ is replaced by $\Diamond (\neg p\wedge\psi)$) does not work here, since we can simply enforce such embedded tree using~$\bigwedge_{i<2^n}\Box^i(\Diamond p\wedge \Diamond \neg p)$.

The remaining open cases are the problems of $\ML$-separability (and separator construction) over $\TT^d$ for $d\geq 1$. We investigate the cases of unary ($d=1$), binary ($d=2$), and higher maximal outdegree ($d\geq 3$) in turn, starting in Section~\ref{sec:unary}. 
We emphasize that the outdegree $d$ is not a part of the input but rather a property of the considered class of models.

\section{Unary Case}\label{sec:unary}

We first investigate $\ML$-separability over $\TT^1$, that is, models
that are essentially finite or infinite \emph{words}. Note that satisfiability of $\muML$
over words is \PSpace-complete: an upper bound follows, e.g.~via the
translation to automata and the lower bound is inherited from
$\LTL$~\cite[Theorem 4.1]{DBLP:journals/jacm/SistlaC85}. This suggests that also
definability and separability could be easier. Indeed, we show: 
\begin{thm}
  $\ML$-definability and $\ML$-separability of $\muML$-formulae are
  \PSpace-complete over $\TT^1$.
\end{thm}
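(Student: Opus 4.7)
The theorem has a \PSpace lower bound and a matching upper bound, which I would handle separately.

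For the lower bound, I would reduce from $\muML$-satisfiability over $\TT^1$, which is \PSpace-hard by the standard reduction from \LTL-satisfiability. For separability the reduction is immediate: given $\theta\in\muML$, set $\phi := \phi' := \theta$; an $\ML$-separator $\psi$ would satisfy $\theta\models\psi\models\neg\theta$ and thus contradict every model of $\theta$, making $\theta,\theta$ \ML-separable iff $\theta$ is unsatisfiable. For definability I would reduce from satisfiability by infinite words (still \PSpace-hard) via $\phi := \theta\wedge\nu x.\Diamond x$: if $\theta$ has no infinite word model then $\phi\equiv\bot$ is trivially \ML-definable, while if $w\models\theta$ is an infinite word, then for every $n$ the pair $(w, w_{|_n})$ witnesses joint $\bis^n$-consistency of $\phi,\neg\phi$ (we have $w\bis^n w_{|_n}$ but $w_{|_n}\not\models\phi$ because $w_{|_n}$ is finite), so by Equivalence~\eqref{eq:non-sep vs n-bis} $\phi$ is not \ML-definable.

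For the upper bound, since $\muML$ admits polynomial-size negation by dualization, definability reduces to separability, so it suffices to decide $\ML$-separability in \PSpace. By Equivalence~\eqref{eq:non-sep vs n-bis}, $\phi,\phi'$ are not separable over $\TT^1$ iff they are joint $\bis^n$-consistent over $\TT^1$ for every $n$. On unary models $\bis^n$ coincides with $\cong^n$ (isomorphism of $n$-prefixes), and a compactness/König argument reduces ``joint $\bis^n$-consistent for every $n$'' to the disjunction of (a) $\phi\wedge\phi'$ is satisfiable on $\TT^1$, and (b) there is an infinite label sequence $\ell_0\ell_1\ldots$ each of whose finite prefixes extends both to a $\phi$- and to a $\phi'$-word model of length at least the length of the prefix.

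To decide (a) and (b) in \PSpace I would convert $\phi,\phi'$ into alternating parity word automata $\A,\A'$ of polynomial size via the standard subformula-to-state construction. Condition~(a) is nonemptiness of the product alternating parity automaton $\A\times\A'$, a classical \PSpace problem. For~(b) I would build a polynomial-size alternating automaton $\B$ over \emph{finite} words that recognises the intersection of the two prefix languages (essentially $\A\times\A'$ with every state made accepting and priorities dropped) and test whether $L(\B)$ is infinite; by a pumping argument this reduces to the existence of an accepting run of length greater than $|\B|$, decidable in \PSpace by on-the-fly state enumeration. The main obstacle is keeping all constructions polynomial: the route through nondeterministic parity automata provided by Theorem~\ref{thm:munpta} is already exponential and would only give \ExpTime. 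The argument must therefore stay within the alternating setting throughout, exploiting that product, prefix-closure, and infiniteness tests for alternating word automata can all be carried out in polynomial workspace.
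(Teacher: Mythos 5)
Your lower bounds are correct and essentially the paper's (the paper also reduces from satisfiability over $\TT^1$, using $\mu x.(a\vee\Diamond x)$ with a fresh proposition $a$ where you use $\nu x.\Diamond x$ and infinite-word satisfiability), and your two-case characterisation of non-separability -- (a) $\phi\wedge\phi'$ satisfiable over $\TT^1$, or (b) arbitrarily long words that are prefixes of both a $\phi$-model and a $\phi'$-model -- is the right target: (b) is exactly the paper's criterion that the language $L\cap L'$ of common reduct-prefixes is infinite, and keeping (a) explicit is sensible, since it covers the corner case where all joint witnesses have bounded length. Deciding (a) via a polynomial-size alternating parity word automaton is also fine (this is the standard \PSpace{} argument for linear-time fixpoint satisfiability). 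The genuine gap is in how you decide (b). For an \emph{alternating} automaton the prefix language is not obtained by ``making every state accepting and dropping priorities'': a run tree on a finite word records several obligations that must all be discharged on one and the same extension, so the existence of such a partial run does not imply that any extension is accepted. For example, for $\Diamond a\wedge\Box\neg a$ every one-letter word admits a partial run tree, yet no word extends to a model. Hence your automaton $\B$ overapproximates $L\cap L'$, and in addition the pumping threshold $|\B|$ is wrong in the alternating setting: one must pump on reachable \emph{configurations} (sets of states), so the correct bound is exponential in $|\B|$.

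The reason you give for rejecting the nondeterministic route is also a misjudgment, and it is precisely where the paper's upper bound lives. The automaton of Theorem~\ref{thm:munpta} has exponentially many states, but each state has a polynomial-size description, so it is never constructed explicitly: the paper decides infiniteness of $L\cap L'$ by nondeterministically guessing a sufficiently long word letter by letter, storing only the current pair of states and a binary counter, which is nondeterministic polynomial space and hence \PSpace. If you insist on staying alternating, a repair is possible but amounts to the same thing: maintain the set of frontier obligations while guessing letters (on-the-fly dealternation), use the exponential pumping bound, and at the end verify that the frontier configuration is jointly extendable to full models of $\phi$ and of $\phi'$ (a nonemptiness test for the conjunction of the frontier states). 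As written, however, your step (b) does not establish the \PSpace{} upper bound.
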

\begin{proof}
The lower bound is by a reduction from satisfiability, and
applies to definability. Indeed, let $\phi\in\muML$ and let $\psi=\mu x.(a\vee \Diamond x)$ be the $\muML$-formula expressing that $a$ is satisfied at some point in the model, for a proposition $a$ not occurring in $\phi$. It is easy to see that $\phi$ is unsatisfiable over $\TT^1$ iff $\phi\wedge \psi$ is $\ML$-definable (namely $\bot$).

For the upper bound, consider formulae $\phi,\phi'\in\muML$ and a signature $\sigma$. Note that two models in $\TT^1$ are bisimilar iff they are isomorphic.
Hence, by Equivalence~\eqref{eq:non-sep vs n-bis}, $\phi,\phi'\in\muML$ are not $\ML_\sigma$-separable over $\TT^1$ iff they are jointly $\cong_\sigma^n$-consistent for all $n$. By Proposition~\ref{prop:automata step} and Corollary~\ref{cor:muML to NPTA}, we get a constant $m$ exponential in $|\phi|+|\phi'|$ such that $\phi,\phi'$ are jointly $\cong_\sigma^n$-consistent over $\TT^1$ for all $n$ iff they are jointly $\cong_\sigma^m$-consistent over $\TT^1$. We therefore decide the latter. We do that by a reduction to satisfiability of $\muML$ over $\TT^1$, which is known to be in \PSpace~\cite[Corollary 4.6]{DBLP:conf/popl/Vardi88}.

Let $\psi,\psi'$ be respective copies of $\phi,\phi'$ in which every symbol was replaced with a fresh copy, so that $\psi$ and $\psi'$ have disjoint signatures. Observe that $\phi,\phi'$ are jointly $\cong_\sigma^m$-consistent over $\TT^1$ iff there are models $\M\models\psi$ and $\M'\models\psi'$ in $\TT^1$ which agree on copies of all propositions from $\sigma$, up to position $m$. That is, every position $i\leq m$ satisfies $\theta_\mathsf{agree}=\bigwedge_{a\in\sigma}(a_\psi\Leftrightarrow a_{\psi'})$, where $a_\psi$ and $a_{\psi'}$ denote the respective copies of $a$ in $\psi$ and $\psi'$. The requirement that $\theta_\mathsf{agree}$ holds in all points up to depth $m$ can be enforced by a formula $\theta$ which uses $k=\lceil\log(m)\rceil$ fresh auxiliary variables to implement a binary counter in a standard way. This completes the reduction, as $\psi\wedge\psi'\wedge\theta$ is satisfiable in $\TT^1$ iff $\phi,\phi'$ are not $\ML_\sigma$-separable over $\TT^1$. Since $k$ is polynomial in $|\phi|+|\phi'|$, this completes the proof.
\end{proof}

We conclude this section with proving that $\ML$-separators can be constructed in
exponential time and are thus of at most exponential size. 
Note that this is optimal, since 
over~$\TT^1$, $\muML$ is exponentially more succinct than $\ML$.
Indeed, it is standard to implement an exponential counter using a
polynomially sized $\muML$-formula.
\begin{thm} \label{thm:word-construction} 
  If $\varphi,\varphi'\in\muML$ are $\ML$-separable over
  $\TT^1$, then one can compute an $\ML$-separator 
  in time exponential in $|\varphi|+|\varphi'|$. 
\end{thm}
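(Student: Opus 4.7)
The plan is to upgrade the decision procedure from the preceding theorem into an explicit construction of a separator. First, using Theorem~\ref{thm:munpta}, I translate $\varphi$ and $\varphi'$ into word NFAs $\A$ and $\A'$ of exponential size recognizing the prefix languages $L$ and $L'$ that already appeared in the previous proof. By Equivalence~\eqref{eq:non-sep vs n-bis}, $\varphi$ and $\varphi'$ are $\ML^n_\sigma$-separable iff $L \cap L'$ contains no word of length $n+1$. Forming the trimmed product automaton $\B$ for $L \cap L'$, separability forces $\B$ to be acyclic. Setting $N$ to the length of its longest path yields $N \leq |\A|\cdot|\A'|$, hence exponential in $|\varphi|+|\varphi'|$, and guarantees the existence of an $\ML^N_\sigma$-separator.

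Next, I construct an $\ML^N_\sigma$-uniform consequence $\psi$ of $\varphi$ by adapting the scheme of Proposition~\ref{prop:L-uniform construction} to $\TT^1$. For each state $q$ of $\A$ and each $k \leq N$ I define $\psi_{k,q}$ inductively: $\psi_{0,q}$ describes the atomic types at possible endpoints of runs starting in $q$, and $\psi_{k+1,q}$ is a disjunction over letters $c$ with a termination disjunct $c \wedge \Box\bot$ whenever a run from $q$ may terminate on $c$, and a recursive disjunct $c \wedge \Diamond\psi_{k,q'}$ whenever $q' \in \delta(q,c)$. The separator is $\psi_{N, q_I}$. The correctness argument parallels the proof of Proposition~\ref{prop:L-uniform construction}; the simplification particular to $\TT^1$ is that each node has at most one successor, so the $\nabla$ operator of the general proposition collapses to a single $\Diamond$ (or $\Box\bot$), and no branching over subformula sets is needed.

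The main obstacle is the size analysis, specifically keeping $\psi$ single exponential in $|\varphi|+|\varphi'|$ rather than the doubly exponential size produced by a naive tree-unfolding. The key observation is that on $\TT^1$ the recursion is essentially linear in depth: the absence of multiple successors prevents the multiplicative branching per modal level that causes the doubly exponential blowup in the general proposition for $\TT^d$ with $d\geq 2$. This yields the claimed exponential upper bound and matches the lower bound implicit in the exponential succinctness gap between $\muML$ and $\ML$ over words highlighted in the paragraph preceding the theorem.
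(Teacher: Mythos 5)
Your reduction of the problem to bounding the modal depth is fine: the trimmed product automaton argument gives a depth bound $N\leq|\A|\cdot|\A'|$, exponential in $|\varphi|+|\varphi'|$, matching the paper, and like the paper you then aim at an $\ML^N$-uniform consequence of the automaton for $\varphi$. The gap is in the size analysis of your construction. Your recursion $\psi_{k+1,q}=\bigvee_c\bigl(\dots\vee\bigvee_{q'\in\delta(q,c)}(c\wedge\Diamond\psi_{k,q'})\bigr)$ copies, at every modal level, one subformula of the previous level for each pair $(c,q')$ with $q'\in\delta(q,c)$. The blowup in the general construction does \emph{not} come from branching of the models (the $\nabla$); it comes from this branching over letters and nondeterministic successor states of the automaton, which is still present over $\TT^1$. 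So $|\psi_{k+1,q}|$ is roughly $|\Sigma|\cdot|Q|$ times $\max_{q'}|\psi_{k,q'}|$, and after $N$ levels the formula, written as a string (the paper's size measure), has size about $(|\Sigma|\cdot|Q|)^N$, i.e.\ doubly exponential, since $N$ is already exponential. Your claim that the recursion is ``essentially linear in depth'' on words is therefore false in general; it would only hold under sharing of subformulae (DAG-size), which does not give a separator computable in exponential time as a formula.

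The paper's proof avoids exactly this problem with an idea your proposal is missing: a divide-and-conquer (middle-state) decomposition of runs. It builds formulae $\psi^m_{pq}$ expressing ``there is a run of $\A$ from $p$ to $q$ on the $m$-prefix'' via
\begin{equation*}
\psi^{m}_{pq}\;=\;\bigvee_{q'\in Q}\Bigl(\psi^{\lfloor m/2\rfloor}_{pq'}\wedge\Diamond^{\lfloor m/2\rfloor}\psi^{\lceil m/2\rceil}_{q'q}\Bigr),
\end{equation*}
which crucially exploits that over $\TT^1$ the operator $\Diamond^{\lfloor m/2\rfloor}$ addresses the unique point at that depth, so run segments can be composed without re-branching at every level; the resulting size is polynomial in $N$ and $|\A|$, hence single exponential in $|\varphi|+|\varphi'|$. (The final separator then also needs the bookkeeping for words shorter or longer than $N$, via the sets $\textit{Acc}_q$ and $\textit{Cont}_q$, which your sketch only gestures at.) Without this or an equivalent trick, your construction does not establish the exponential-time bound of the theorem.
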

As argued in the previous section, it suffices to construct
an $\ML^n$-uniform consequence of the NPTA equivalent to
$\varphi$, which we do next. 
\begin{restatable}{prop}{propluniformone}\label{prop:L-uniform construction1}
  Let $\A$ be an NPTA over $\TT^1$ with alphabet $\Sigma$.
  For every $n\in\NN$, an $\ML^n$-uniform consequence of $\A$ over $\TT^1$ can be
  constructed in time polynomial in $n$, $|\Sigma|$, and~$|\Amc|$.
\end{restatable}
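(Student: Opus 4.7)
The plan is to adapt the inductive construction of Proposition~\ref{prop:L-uniform construction} to the unary setting. For each state $q\in Q$ of $\A$ and each $i\leq n$ I would define a formula $\psi_{i,q}\in\ML^i$ maintaining the invariant that $\M\models\psi_{i,q}$ iff there exists $\N\in\TT^1$ with $\N\models\A[q_I\mapsfrom q]$ and $\M\bis^i\N$. The desired $\ML^n$-uniform consequence is then $\psi_{n,q_I}$. Compared to Proposition~\ref{prop:L-uniform construction}, the unary structure of $\TT^1$ causes the $\nabla$-construction to collapse---for a singleton we have $\nabla\{\theta\}\equiv\Diamond\theta$ and $\nabla\emptyset=\Box\bot$---so no subset-enumeration is needed at any step.

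For the base case I would first compute in polynomial time the set $A\subseteq Q$ of states $p$ for which $\A[q_I\mapsfrom p]$ is non-empty over $\TT^1$; this is a standard non-emptiness check for parity automata on words. Then $\psi_{0,q}=\bigvee_{c\in R(q)}\chi_c$, where $\chi_c$ is the characteristic modal formula of the label $c$ and $R(q)$ collects those $c$ admitting either a terminating transition $()\in\delta(q,c)$ or a transition $(p)\in\delta(q,c)$ with $p\in A$. For the inductive step I would set $\psi_{i+1,q}=\bigvee_{c\in\Sigma}\chi_c\wedge\theta_{c,q,i}$, where $\theta_{c,q,i}$ combines a termination alternative ($\Box\bot$, when $()\in\delta(q,c)$) with a continuation alternative $\bigvee_{(p)\in\delta(q,c)}\Diamond\psi_{i,p}$. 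Correctness of the invariant is proved by induction on $i$ using the back-and-forth properties of $\bis^{i+1}$, which on words amount to matching labels level by level and preserving the leaf/internal status at each depth below $n$.

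The main obstacle is the polynomial size bound. A naive tree representation would duplicate each $\psi_{i,p}$ up to $|Q|$ times per inductive step and lead to $|Q|^{\Omega(n)}$ blow-up. The construction should instead exploit shared representation of subformulas: at each level there are only $|Q|$ distinct formulas $\psi_{i,p}$, and each has local description of size $O(|Q|\cdot|\Sigma|)$ on top of its references to level $i-1$. Sharing these occurrences across all $\psi_{i+1,q'}$ yields a total construction of size $O(n\cdot|Q|^2\cdot|\Sigma|)$, polynomial in $n$ and $|\A|$. The contrast with the binary and unrestricted cases of Proposition~\ref{prop:L-uniform construction} is precisely that the unary case is the only one where $\nabla$ does not force an exponential enumeration of witness-sets, which is what allows the construction to be carried out in polynomial time.
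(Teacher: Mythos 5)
Your level-by-level construction is semantically fine: the invariant ``$\M\models\psi_{i,q}$ iff some word accepted from $q$ is $i$-bisimilar to $\M$'' holds, the leaf/empty-transition handling and the preliminary non-emptiness check are correct, and $\psi_{n,q_I}$ is indeed an $\ML^n$-uniform consequence. The genuine gap is the size/time bound. Your recursion has depth $n$, and each $\psi_{i+1,q}$ mentions up to $|Q|\cdot|\Sigma|$ formulae of level $i$, so as an actual $\ML$-formula (the paper measures size as the length of the formula written as a string) the output can be of size $(|Q|\cdot|\Sigma|)^{\Omega(n)}$. Your proposed remedy -- sharing common subformulae in a DAG -- changes the \emph{representation}, not the formula: $\ML$ has no mechanism for naming or reusing subformulae, so a polynomial-size circuit does not yield a polynomial-size formula, and unfolding the DAG reinstates the exponential blow-up. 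This matters downstream: in Theorem~\ref{thm:word-construction} the automaton and the depth bound $n$ are already exponential in $|\varphi|+|\varphi'|$, so your construction would give separators of doubly exponential (string) size, whereas the point of the proposition is to stay single-exponential, matching the succinctness lower bound. Your closing diagnosis is also off: the collapse of $\nabla$ to $\Diamond$/$\Box\bot$ is not what buys polynomiality -- even with that collapse your scheme is exponential in $n$, just as the binary and unrestricted constructions are.

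The paper's proof uses a genuinely different decomposition that you are missing. Instead of formulae indexed by a single state and a depth, it builds formulae $\psi^m_{pq}$ expressing ``there is a run of $\A$ from state $p$ to state $q$ on the $m$-prefix of the word,'' and defines them by halving: $\psi^m_{pq}=\bigvee_{q'}\bigl(\psi^{\lfloor m/2\rfloor}_{pq'}\wedge\Diamond^{\lfloor m/2\rfloor}\psi^{\lceil m/2\rceil}_{q'q}\bigr)$. The midpoint can be addressed by $\Diamond^{\lfloor m/2\rfloor}$ only because over $\TT^1$ there is a unique point at each depth; this drops the recursion depth from $n$ to $O(\log n)$ and is what keeps the formula small. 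A final formula then assembles the $\psi^n_{q_Iq}$ with continuation conditions ($\textit{Cont}_q$, via $\Box^n$) and acceptance conditions ($\textit{Acc}_q$, via $\Box^{m+1}\bot$) to cover accepted words that are longer or shorter than $n$ -- a case split your single-state formulae handle implicitly, but which has to be made explicit once one works with run-segment formulae between pairs of states. To repair your argument you would have to adopt this pairwise-state, divide-and-conquer structure (or some other device achieving logarithmic recursion depth); polynomial-size sharing alone does not establish the statement.
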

\begin{proof} 
  Take  an NPTA $\Amc=(Q,\Sigma,\delta,q_I,\rank)$ and $n\in \NN$.
  We first construct, for every $m\leq n$ and $p,q\in Q$, formulae $\psi^m_{pq}\in
  \ML_\sigma^n$, such that for every $\M\in\TT^1$: 
\begin{equation}
  \M\models \psi^m_{pq}\text{\quad iff\quad there is a run of \Amc from $p$ to $q$ on the $m$-prefix of $\M$.}\label{eq:word-run}
\end{equation}
The definition is by induction on $m$:
\begin{align*}
  \psi^0_{pq} & = \text{ if $p\neq q$ then $\bot$ else $\top$} \\
  \psi^1_{pq} & = \bigvee\{c\ \mid\ c\in\alphabet, \{q\}\in \delta(p,c)\} \\
  \psi^{m}_{pq} & = \bigvee_{q'\in Q} \big(\psi^{\lfloor m/2\rfloor
  }_{pq'}
  \wedge\Diamond^{\lfloor m/2\rfloor}\psi^{\lceil m/2\rceil
  }_{q'q}\big)\quad \text{for $1<m\leq n$}
\end{align*}
It is routine to verify that $\psi^m_{pq}$ satisfies~\eqref{eq:word-run} and is
of size $|\psi^m_{pq}|\in O(|Q|\cdot m^2)$.

In the construction of the $n$-uniform consequence of \Amc, a bit of care has to be taken since \Amc may accept words shorter or longer than $n$. To deal with this, we introduce some more notation. Since we are working over
$\TT^1$, $\delta(q,c)$
contains only sets of cardinality at most $1$. The 
case $\emptyset\in \delta(q,c)$ is of particular interest because
this means that the automaton can accept in state $q$ reading color $c$. Denote with
$\textit{Acc}_q$ the set of $c\in\Sigma$ with $\emptyset\in
\delta(q,c)$. Further denote with $\textit{Cont}_q$ the set of all
$c$ such that $\Amc[q_I\mapsfrom q]$ accepts a word starting with $c$.
We finish the construction by setting:
\[\psi_n = \bigvee_{q\in Q} \left(\psi^n_{q_Iq}\wedge 
  \Box^n\bigvee_{c\in \textit{Cont}_q} c\right)\vee
  \bigvee_{m\leq n}\bigvee_{q\in Q} \left(\psi^m_{q_Iq}\wedge \Box^{m+1}\bot \wedge
\Box^{m}\bigvee_{c\in \textit{Acc}_q}c\right).\]
Intuitively, $\psi_n$ describes all possible prefixes of length $n$ of words accepted by $\Amc$. 
It is readily checked that $\psi_n$ satisfies the required size bounds and that $\Amc\models\psi_n$.
To verify that $\psi_n\models \theta$ for every $\theta\in \ML^n$ with
$\Amc\models \theta$, we show the following equivalence for all
$\Mmc\in\TT^1$:
\begin{align}
  \M\models\psi_n \text{\hspace*{0.5cm} $\iff$ \hspace*{0.5cm} there exists
  $\N\models \Amc$ with $\N\leftrightarroweq^n \M$.}
  \label{eq:word-universal-consequence} 
\end{align} 

For ``$\Rightarrow$'', fix $\M\in\TT^1$ with $\M\models
\psi_n$. If $\M\models \psi^n_{q_Iq}\wedge
  \Box^n\bigvee_{c\in\textit{Cont}_q}c$ for some $q$, then by Invariant~\eqref{eq:word-run},
there is a run of $\Amc$ from the initial state $q_0$ to some state
$q\in Q$ when reading the $n$-prefix of $\M$, and the last color in
the prefix is $c$. Since $c\in \textit{Cont}_q$, we can extend the $n$-prefix of $\M$ to
an $\N\in \TT^1$ accepted by \Amc. Clearly, in this way $\N\bis^n\M$. If $\M\models
\psi^m_{q_Iq}\wedge\Box^{m+1}\bot\wedge \Box^m\bigvee_{c\in
\textit{Acc}_q}c$, for some $m\leq n$ and $q\in Q$, then $\M$ is a
finite word
of length $m$ that is accepted by the automaton. We can take
$\N=\M$ in this case. 

For ``$\Leftarrow$'', let $\M\in \TT^1$ a word such that there is some
$\N\models\Amc$ with $\N\bis^n \M$. The former condition $\N\models\Amc$
implies the existence of an accepting
run $\rho$ of $\Amc$ on
$\N$. The latter, $\N\bis^n \M$,
implies that $\N$ and $\M$ coincide on their
$n$-prefixes. We distinguish cases. If the length of $\N$ is greater than
$n$, then the $n$-prefix of $\rho$ ending in state $q$ witnesses 
$\M\models \psi^n_{q_Iq}\wedge \Box^n\bigvee_{c\in\textit{Cont}_q}c$.
Otherwise, the depth of $\N$ is $m\leq n$ and the run $\rho$ ending in
$q$ witnesses that $\M\models
\psi^m_{q_Iq}\wedge\Box^{m+1}\bot\wedge \Box^m\bigvee_{c\in
\textit{Acc}_q}c$.
\end{proof}

\section{Binary Case}\label{sec:binary}

We next handle the binary case $\TT^2$. The key observation here is that, between full binary trees, bisimilarity entails isomorphism.
\begin{prop}\label{prop:bin bis=iso}
Assume full binary trees $\M,\M'\in\TT^2$. If $\M$ and $\M'$ are $\sigma$-bisimilar then they are $\sigma$-isomorphic.
\end{prop}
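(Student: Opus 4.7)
The plan is to promote a given $\sigma$-bisimulation $Z\subseteq M\times M'$ into a bijective one $f:M\to M'$ by constructing $f$ level by level, exploiting the paper's convention that bisimulations only relate points of the same depth. Before starting, I would record a preliminary observation: at every related pair $vZv'$, by $(\forthCond)$ and $(\backCond)$ both $v$ and $v'$ have at least one child or neither does, so fullness of $\M,\M'\in\TT^2$ forces them to have the same number of children, namely $0$ or $2$. Moreover, $(\baseCond)$ ensures that $v$ and $v'$ carry the same $\sigma$-label.

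I would then build a chain of partial maps $f_n$ between the $n$-prefixes of $\M$ and $\M'$. Start with $f_0(\point_I)=\point_I'$, valid because $\point_I Z\point_I'$. Assuming inductively that $f_n$ is a bijection between the $n$-prefixes that preserves $\sigma$-labels and edges and satisfies $vZf_n(v)$ throughout, I would extend it to $f_{n+1}$ as follows: for each $v$ at depth $n$ with children $\{v_1,v_2\}$ and image $v'=f_n(v)$ with children $\{v_1',v_2'\}$, form the bipartite graph on $\{v_1,v_2\}\cup\{v_1',v_2'\}$ whose edges record $Z$-related pairs; the forth condition forces every $v_i$ to have degree at least $1$, and back forces the same for every $v_j'$. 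Choose any perfect matching of this bipartite graph to define $f_{n+1}$ on the children of $v$, and perform the choice independently at each depth-$n$ node.

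The combinatorial heart of the argument, and the only place where fullness plays a role, is the existence of such a perfect matching. A bipartite graph on $2+2$ vertices with no isolated vertex always admits one: if only two edges are present they must form a perfect matching (otherwise some vertex is isolated), and with three or four edges the conclusion is even easier. I expect this step to be the main, though mild, obstacle. It is worth remarking that the analogous statement fails already for $2+3$ with one side of min-degree $1$ (a ``star'' configuration), which foreshadows the complications that make $\TT^d$ with $d\geq 3$ substantially harder later in the paper.

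Finally, setting $f=\bigcup_n f_n$ yields a map $M\to M'$ that is a bijection (each $f_n$ is, and the $f_n$'s agree on overlaps), edge-preserving by construction, and $\sigma$-label-preserving because $vZf(v)$ holds throughout and $Z$ respects $(\baseCond)$ on $\sigma$. Hence $f$ is the desired $\sigma$-isomorphism, finishing the proof.
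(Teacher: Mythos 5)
Your proof is correct and follows essentially the same route as the paper: a level-by-level construction whose key step is that, at each pair of related nodes, fullness plus the forth/back conditions force a perfect matching among the $Z$-related children (the paper's case distinction between the ``parallel'' and ``crossed'' pairings is exactly your $2+2$ matching argument). The only cosmetic difference is that you build the bijection $f$ bottom-up as a union of partial maps, whereas the paper shrinks the given bisimulation to a bijective sub-bisimulation via a descending chain $Z\supset Z_1\supset Z_2\supset\cdots$ and takes the intersection.
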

\begin{proof}
By definition a $\sigma$-bisimulation between two models is a bisimulation between their reducts to $\sigma$, and $\sigma$-isomorphism is such a bisimulation which is additionally bijective. It therefore suffices to show that if $\M,\M'$ are full binary trees and $Z$ is a bisimulation between them then
there is a bijective bisimulation $Z'\subset Z$. We construct such a $Z'$. To that end, we inductively construct a descending sequence:
\[
  Z \supset Z_1\supset Z_2\supset Z_3\supset...
\]
of bisimulations such that for each $n$ the restriction of $Z_n$ to the $n$-prefixes of $\M$ and $\M'$ is bijective. 
The induction base $n=1$ is trivial with $Z_1=Z$.
For the induction step $n+1$ let $Z_n\subset Z$ be the bisimulation given by the inductive hypothesis. $Z_n$ bijectively maps the points $\point_1,...,\point_k$ at depth $n$ in $\M$ to the respective points $\altpoint_1,...,\altpoint_k$ at depth $n$ in $\M'$. For each $i$ we have $\point_i Z \altpoint_i$. Hence, either both $\point_i$ and $\altpoint_i$ are leaves (i), or both have respective children $\point_i^l,\point_i^r$ and $\altpoint_i^l,\altpoint_i^r$. In the latter case either (ii) $\point_i^l Z\altpoint_i^l$ and $\point_i^r Z\altpoint_i^r$ or (iii) $\point_i^l Z\altpoint_i^r$ and $\point_i^r Z\altpoint_i^l$ (the cases (ii) and (iii) are not exclusive). Consider the bijective relation $K^i\subset Z$ between children of $\point_i$ and children of $\altpoint_i$:
\begin{align*}
  K^i =
  \begin{cases}
    \emptyset & \text{if $\point_i$ and $\altpoint_i$ are leaves (i),}\\
    \{(\point_i^l,\altpoint_i^l), (\point_i^r,\altpoint_i^r)\} & \text{if (ii),}\\
    \{(\point_i^l,\altpoint_i^r), (\point_i^r,\altpoint_i^l)\} & \text{otherwise.}
  \end{cases}
\end{align*}
The bisimulation $Z_{n+1}$ is constructed as follows. It is identical to $Z_n$ between points at levels at most $n$, to $\bigcup_{i\leq k}K^i$ between points at level exactly $n+1$, and to $Z$ between points at strictly greater levels. No points at mismatching levels are linked. It is straightforward to verify that such $Z_{n+1}\subset Z_n$ is a bisimulation, and that its restriction to the $n+1$-prefixes of $\M$ and $\M'$ is bijective.

We conclude the proof by taking the limit $Z'=\bigcap_{n\in\NN}Z_n$ as the desired bijective bisimulation between $\M$ and $\M'$.
\end{proof}
Proposition~\ref{prop:bin bis=iso} can be used to prove the Craig interpolation
property of $\ML$ over $\TT^2$.

\begin{prop}\label{prop:Craig interpolation bin}
$\ML$ over $\TT^2$ has the Craig interpolation property.
\end{prop}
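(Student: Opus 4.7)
The plan is a contradiction argument built from two ingredients already established: the characterisation in Lemma~\ref{lem:char-interpolant} (which turns the absence of a Craig interpolant into the existence of two $\sigma$-bisimilar witnesses) and Proposition~\ref{prop:bin bis=iso} (which turns a $\sigma$-bisimilarity between full binary trees into a $\sigma$-isomorphism). Assume $\phi,\phi'\in\ML$ with $\phi\models\phi'$ over $\TT^2$, set $\sigma=\sig(\phi)\cap\sig(\phi')$, and let $n$ be the maximum of the modal depths of $\phi,\phi'$. Suppose no Craig interpolant exists; the aim is to produce a single $\N\in\TT^2$ with $\N\models\phi\wedge\neg\phi'$, contradicting the entailment.

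By Lemma~\ref{lem:char-interpolant} there are $\M,\M'\in\TT^2$ with $\M\models\phi$, $\M'\models\neg\phi'$, and $\M\bis_\sigma^n\M'$. Following the argument in the proof of that lemma, I would first truncate $\M,\M'$ to depth $n$; this preserves satisfaction of $\phi$ and $\neg\phi'$ and upgrades $\bis_\sigma^n$ to full $\bis_\sigma$. The truncated models are finite elements of $\TT^2$. To make Proposition~\ref{prop:bin bis=iso} applicable I would then pass to bisimilar \emph{full} binary trees by repeatedly duplicating every lone child -- a standard bisimulation-preserving operation. The resulting models, which I continue to call $\M,\M'$, still satisfy $\phi,\neg\phi'$ (invariance of $\ML$ under $\bis$), and composing the duplication bisimulations with the previous $\bis_\sigma$ shows that the new $\M,\M'$ are still $\sigma$-bisimilar. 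Proposition~\ref{prop:bin bis=iso} now yields a $\sigma$-isomorphism $f:\M\to\M'$.

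Finally, I would glue $\M$ and $\M'$ into a single model $\N$ built on the universe and tree structure of $\M$ by interpreting each proposition in $\sig(\phi)$ as in $\M$, and each proposition in $\sig(\phi')\setminus\sigma$ as in $\M'$ transported along $f^{-1}$. Consistency on the shared signature $\sigma$ is automatic because $f$ preserves $\sigma$-valuations. By construction $\N$ agrees with $\M$ on $\sig(\phi)$, hence $\N\models\phi$; and via $f$ the reduct of $\N$ to $\sig(\phi')$ is isomorphic to the corresponding reduct of $\M'$, hence $\N\models\neg\phi'$. Since $\N\in\TT^2$, this contradicts $\phi\models\phi'$ over $\TT^2$ and therefore establishes the CIP.

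The step I expect to be most delicate is the full-binary completion: one must check that duplicating singleton children can be performed simultaneously in $\M$ and $\M'$ while keeping them bisimilar, so that satisfaction of $\phi,\neg\phi'$ and the $\sigma$-bisimulation between them all survive into the setting where Proposition~\ref{prop:bin bis=iso} applies. Once this is in place, assembling $\N$ from the two models via the resulting $\sigma$-isomorphism is routine bookkeeping, so no further serious obstacle is anticipated.
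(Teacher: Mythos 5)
Your proposal is correct and follows essentially the same route as the paper's proof: invoke Lemma~\ref{lem:char-interpolant} to get $\bis_\sigma^n$-related witnesses in $\TT^2$, trim to depth $n$ and duplicate subtrees to obtain full binary trees so that Proposition~\ref{prop:bin bis=iso} yields a $\sigma$-isomorphism, then amalgamate the two valuations along that isomorphism into one model of $\phi\wedge\neg\phi'$, contradicting $\phi\models\phi'$. The step you flag as delicate (fullness via duplication preserving satisfaction and the $\sigma$-bisimulation) is exactly what the paper also asserts without further ado, and it is indeed unproblematic.
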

\begin{proof}
Assume $\phi,\phi'\in\ML$, let $n$ be the maximum of their modal depths and denote $\sigma=\sig(\phi)\cap\sig(\phi')$. Assume towards contradiction that $\phi\models\phi'$ but $\phi,\phi'$ do not have a Craig interpolant. By Lemma~\ref{lem:char-interpolant} the latter implies that $\phi,\neg\phi'$ are jointly $\bis_\sigma^n$-consistent over $\TT^2$. Let $\M\bis_\sigma^n\M'$ be respective binary trees witnessing this. By duplicating subtrees and trimming if needed, we may assume that $\M,\M'$ are full and have depth at most $n$, which therefore by Proposition~\ref{prop:bin bis=iso} implies $\M\cong_\sigma\M'$. Let $\N$ be a model obtained from $\M,\M'$ by taking their common underlying graph, and valuation of every proposition $\atProp\in\sig(\phi)$ or $\atProp\in\sig(\phi')$ inherited from $\M$ or $\M'$, respectively (by $\M\cong_\sigma\M'$ this is well-defined for $\atProp\in\sig(\phi)\cap\sig(\phi')$). It follows that $\M\cong_{\sig(\phi)}\N\cong_{\sig(\phi')}\M'$ and hence $\N$ satisfies both $\phi$ and $\neg\phi'$, a contradiction.
\end{proof}

Apart from the above Proposition~\ref{prop:Craig interpolation bin}, Proposition~\ref{prop:bin bis=iso} also implies the following separability-variant of Lemma~\ref{lem:otto 2} over $\TT^2$.
\begin{lem}\label{lem:bin model theory}
For every $\phi,\phi'\in\muML$ and $n\in\NN$ the following are equivalent:
\begin{enumerate}
  \item\label{it:bin model theory 1} $\phi,\phi'$ are jointly
    $\bis_\sigma^n$-consistent over $\TT^2$.
  \item\label{it:bin model theory 2} $\phi,\phi'$ are jointly $\cong_\sigma^n$-consistent over $\TT^2$.
\end{enumerate}
\end{lem}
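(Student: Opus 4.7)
The implication \eqref{it:bin model theory 2}$\Rightarrow$\eqref{it:bin model theory 1} is immediate, since any $(\sigma,n)$-isomorphism is in particular a $(\sigma,n)$-bisimulation. The plan for the nontrivial direction \eqref{it:bin model theory 1}$\Rightarrow$\eqref{it:bin model theory 2} is to mimic the trick already used in the proof of Proposition~\ref{prop:Craig interpolation bin}: massage the given witnesses into full binary trees so that Proposition~\ref{prop:bin bis=iso} can upgrade bisimilarity to isomorphism.

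Concretely, suppose $\M\models\phi$ and $\M'\models\phi'$ with $\M,\M'\in\TT^2$ witness joint $\bis_\sigma^n$-consistency. I would first produce from $\M$ a full binary tree $\hat\M\in\TT^2$ by duplicating, for each internal node with only one child, the subtree rooted at that child to serve as the second child. This duplication does not affect (full) bisimilarity, so $\hat\M\bis\M$ and hence $\hat\M\models\phi$. Construct $\hat\M'$ from $\M'$ analogously so that $\hat\M'\bis\M'$ and $\hat\M'\models\phi'$. By transitivity of bisimilarity along $\hat\M\bis\M\bis_\sigma^n\M'\bis\hat\M'$ I obtain $\hat\M\bis_\sigma^n\hat\M'$, i.e.~the $\sigma$-reducts of the $n$-prefixes $(\hat\M_{|_n})^\sigma$ and $(\hat\M'_{|_n})^\sigma$ are bisimilar.

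The crucial observation is that both $(\hat\M_{|_n})^\sigma$ and $(\hat\M'_{|_n})^\sigma$ are themselves full binary trees: a node at depth strictly less than $n$ either was a leaf already in $\hat\M$ (and stays so) or has both of its children preserved in the $n$-prefix, while nodes at depth $n$ become leaves. Consequently Proposition~\ref{prop:bin bis=iso} applies and yields that $(\hat\M_{|_n})^\sigma\cong(\hat\M'_{|_n})^\sigma$. Unfolding the definitions, this isomorphism is precisely a $(\sigma,n)$-isomorphism between $\hat\M$ and $\hat\M'$, so $\hat\M,\hat\M'$ witness joint $\cong_\sigma^n$-consistency of $\phi,\phi'$ over $\TT^2$.

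There is no real obstacle here beyond checking that each step preserves the relevant satisfaction and bisimilarity conditions, since the heavy lifting is done by Proposition~\ref{prop:bin bis=iso}. The only point requiring a line of care is verifying that the $n$-prefix of a full binary tree remains full binary, but this is immediate from the definitions of fullness and of the $n$-prefix.
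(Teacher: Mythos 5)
Your proof is correct and follows essentially the same route as the paper: duplicate subtrees to turn the witnesses into full binary trees (preserving satisfaction by bisimulation invariance) and then invoke Proposition~\ref{prop:bin bis=iso} to upgrade the $(\sigma,n)$-bisimulation to a $(\sigma,n)$-isomorphism. The only difference is that you make explicit the (correct) observation that $n$-prefixes of full binary trees remain full binary, which the paper leaves implicit.
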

\begin{proof}
We show only the nontrivial implication~\ref{it:bin model theory 1}~$\Rightarrow$~\ref{it:bin model theory 2}. Assume binary $\M\models\phi$, $\M'\models\phi'$ with $\M\bis_\sigma^n\M'$. Let $\N\models\phi$ and $\N'\models\phi'$ be full binary trees obtained from $\M$ and $\M'$ by duplicating subtrees. By Proposition~\ref{prop:bin bis=iso}, $\N\cong_\sigma^n\N'$ which proves~\ref{it:bin model theory 2}.
\end{proof}
Similarly to the definability case, Lemma~\ref{lem:bin model theory}
combined with Equivalence~\eqref{eq:non-sep vs n-bis} and
Proposition~\ref{prop:automata step} immediately gives an exponential
procedure for separability. Since the lower bound is inherited from
definability, we get the following result.
\begin{thm}
$\ML$-separability and $\ML$-definability of $\muML$-formulae are \ExpTime-complete over
$\TT^2$.
\end{thm}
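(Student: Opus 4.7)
The plan is to glue together machinery already in place. For the upper bound on $\ML$-separability, given $\phi,\phi' \in \muML$, set $\sigma = \sig(\phi) \cup \sig(\phi')$ and use Theorem~\ref{thm:munpta} to translate $\phi,\phi'$ into NPTAs $\Amc,\Amc'$ over $\TT^2$ of size exponential in $|\phi|+|\phi'|$. Chaining the earlier reformulations, $\phi,\phi'$ are $\ML$-separable over $\TT^2$ iff they are $\ML_\sigma^n$-separable for some $n$; by Equivalence~\eqref{eq:non-sep vs n-bis} this is equivalent to failure of joint $\bis_\sigma^n$-consistency for some $n$; and by Lemma~\ref{lem:bin model theory} to failure of joint $\cong_\sigma^n$-consistency for some $n$. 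Since $\Amc,\Amc'$ are equivalent to $\phi,\phi'$ over $\TT^2$, the same criterion applies to the automata. Proposition~\ref{prop:automata step} decides joint $\cong_\sigma^n$-consistency for all $n$ in time polynomial in $|\Amc|+|\Amc'|$; negating the answer yields separability. The overall runtime is exponential in $|\phi|+|\phi'|$.

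The $\ML$-definability upper bound is the special case $\phi' = \neg\phi$ of the argument above, so no additional work is needed. For the matching lower bounds, since $\ML$-definability is a special case of $\ML$-separability it suffices to recall that $\ML$-definability over $\TT^2$ is already \ExpTime-hard by the binary-tree part of Otto's Theorem~\ref{thm:otto main}.

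I do not anticipate any real obstacle: all the mathematical content has been pushed into Proposition~\ref{prop:bin bis=iso} and Lemma~\ref{lem:bin model theory}, and what remains is routine bookkeeping. The one point deserving care is that the signature parameter $\sigma$ remains the same across the three reformulation steps, and that the exponential blow-up in Theorem~\ref{thm:munpta} is absorbed by the polynomial-time check of Proposition~\ref{prop:automata step}; with that observation the \ExpTime bound drops out.
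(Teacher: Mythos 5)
Your proposal is correct and follows exactly the route the paper takes: reduce non-separability to joint $\bis_\sigma^n$-consistency via Equivalence~\eqref{eq:non-sep vs n-bis}, upgrade to joint $\cong_\sigma^n$-consistency via Lemma~\ref{lem:bin model theory}, decide it on the exponential-size automata from Theorem~\ref{thm:munpta} using Proposition~\ref{prop:automata step}, and inherit the lower bound from Otto's \ExpTime-hardness of definability over $\TT^2$. No gaps; this is the paper's argument spelled out in slightly more detail.
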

With the same argument as for Theorem~\ref{thm:construction unrestricted} we use Proposition~\ref{prop:L-uniform construction} to conclude:
\begin{thm}
  If $\phi,\phi'$ are $\ML$-separable over $\TT^2$, then one
  can compute an $\ML$-separator in time doubly exponential in $|\phi|+|\phi'|$.
\end{thm}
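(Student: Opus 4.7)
The plan is to mirror the proof of Theorem~\ref{thm:construction unrestricted} by (i) bounding the modal depth $n$ of a hypothetical separator by an exponential in $|\phi|+|\phi'|$ and then (ii) explicitly constructing an $\ML^n$-uniform consequence of $\phi$ using Proposition~\ref{prop:L-uniform construction} instantiated in the binary case. Such a uniform consequence will automatically separate $\phi$ from $\phi'$: it is a consequence of $\phi$, and since $\phi,\phi'$ are $\ML$-separable, some $\theta\in\ML^n$ separates them, whence the uniform consequence entails $\theta$ and thus $\neg\phi'$.

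First I would invoke Theorem~\ref{thm:munpta} to produce NPTAs $\A,\A'$ over $\TT^2$ equivalent to $\phi,\phi'$, respectively, of size $2^{O(|\phi|+|\phi'|)}$; the construction itself runs in single exponential time. By standard manipulations I would moreover ensure that $\A$ is duplication safe, as required by Proposition~\ref{prop:L-uniform construction} in the $\TT^2$ case (this only costs a constant-factor blow-up). To bound the depth $n$, I would rely on Lemma~\ref{lem:bin model theory} and Proposition~\ref{prop:automata step}: if $\phi,\phi'$ are $\ML$-separable over $\TT^2$, then they are not joint $\cong_\sigma^n$-consistent for $n=|\A|\times|\A'|+1$, and by Equivalence~\eqref{eq:non-sep vs n-bis} together with Lemma~\ref{lem:bin model theory} they admit an $\ML^n_\sigma$-separator at this depth. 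Thus $n$ is at most singly exponential in $|\phi|+|\phi'|$.

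Next, I would apply Proposition~\ref{prop:L-uniform construction} to the duplication-safe NPTA $\A$ and the bound $n$ in the class $\TT^2$, producing an $\ML^n$-uniform consequence $\psi$ of $\A$ over $\TT^2$ in time $2^{O(n\cdot|\A|)}$. Substituting the exponential bounds on $n$ and $|\A|$ yields a total running time doubly exponential in $|\phi|+|\phi'|$, and in particular $|\psi|$ is doubly exponential as well. Finally, since $\phi$ and $\A$ are equivalent over $\TT^2$, $\psi$ is also an $\ML^n$-uniform consequence of $\phi$; in particular $\phi\models\psi$. Combining this with the existence of an $\ML^n$-separator $\theta$ from the previous paragraph, we get $\psi\models\theta\models\neg\phi'$, so $\psi$ is the desired separator.

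Since every step is essentially a direct reuse of results already proved, I do not anticipate a genuine obstacle. The only subtle points are the bookkeeping ones: checking that the depth bound from Proposition~\ref{prop:automata step} really applies to the binary-case separability criterion via Lemma~\ref{lem:bin model theory}, and confirming that $\A$ can be made duplication safe without affecting the size bound. Both are routine, and the rest is a clean composition of Theorem~\ref{thm:munpta} with Proposition~\ref{prop:L-uniform construction} for $\ClassMod=\TT^2$.
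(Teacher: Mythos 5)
Your proposal is correct and follows essentially the same route as the paper: the paper proves this theorem by the same argument as Theorem~\ref{thm:construction unrestricted}, i.e.\ translating $\phi$ to an exponential-size (duplication-safe) NPTA via Theorem~\ref{thm:munpta}, obtaining the exponential depth bound from Lemma~\ref{lem:bin model theory}, Equivalence~\eqref{eq:non-sep vs n-bis} and Proposition~\ref{prop:automata step}, and then invoking Proposition~\ref{prop:L-uniform construction} for $\TT^2$ so that the resulting $\ML^n$-uniform consequence serves as the separator. Your bookkeeping of the bounds and the uniform-consequence-implies-separator step match the paper's argument.
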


\section{Ternary and Beyond}\label{sec:ternary}

In this section we address the case of models with outdegree
bounded by a number $d\geq 3$. We illustrate that this case behaves differently as it lacks the Craig interpolation property. 

\begin{thm}\label{thm:nocraig}
  For $d\geq 3$, $\ML$ over $\TT^d$ does not enjoy the Craig interpolation property.
\end{thm}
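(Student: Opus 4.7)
The plan is to invoke Lemma~\ref{lem:char-interpolant}, by which two formulae $\phi,\phi'\in\ML^n$ with $\sigma=\sig(\phi)\cap\sig(\phi')$ admit no Craig interpolant over $\TT^d$ iff $\phi$ and $\neg\phi'$ are joint $\bis_\sigma^n$-consistent over $\TT^d$. For each fixed $d\geq 3$ I will exhibit a pair of depth-one $\ML$-formulae witnessing such joint consistency while being mutually exclusive over $\TT^d$. This adapts the principle already hinted at in the discussion after Proposition~\ref{prop:L-uniform construction}: over $\TT^d$, having many pairwise distinguishable $a$-successors implicitly limits the number of $\neg a$-successors, yet this implication cannot be witnessed through the common signature alone.

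Concretely, let $a$ be the sole shared proposition, let $c$ be private to $\phi'$, and let $p_1,\dots,p_k$ with $k=\lceil\log_2(d-1)\rceil$ be private to $\phi$; pick pairwise inconsistent propositional atoms $\alpha_1,\dots,\alpha_{d-1}$ over $\{p_1,\dots,p_k\}$ and define
\[
  \phi = \Diamond\neg a \wedge \bigwedge_{i=1}^{d-1}\Diamond(\alpha_i\wedge a),
  \qquad
  \phi' = \Diamond(\neg a\wedge c)\Rightarrow\Box(\neg a\Rightarrow c).
\]
For the entailment $\phi\models\phi'$ over $\TT^d$ the pairwise incompatibility of the $\alpha_i$ forces $d-1$ distinct $a$-successors, and $\Diamond\neg a$ contributes a further distinct $\neg a$-successor; the bound $d$ on the outdegree leaves room for no others, so every model of $\phi$ has exactly one $\neg a$-successor and hence satisfies $\phi'$ vacuously. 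For the joint $\bis_{\{a\}}^1$-consistency, take $\M$ to be the canonical $d$-successor model of $\phi$ and $\M'$ the model whose root has exactly three children, labelled $a$, $\neg a\wedge c$, and $\neg a\wedge\neg c$ respectively. Then $\M'\in\TT^d$, $\M'\models\neg\phi'$, and the $1$-prefixes of the $\{a\}$-reducts of $\M$ and $\M'$ each consist of a root with some $a$-successors and some $\neg a$-successors, whence the obvious label-pairing is a $(\{a\},1)$-bisimulation.

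The delicate part is using the arity bound $d$ asymmetrically: tightly enough on the $\phi$-side to squeeze out the entailment $\phi\models\phi'$, yet loosely enough on the $\neg\phi'$-side to admit a countermodel within $\TT^d$. This is resolved by designing $\neg\phi'$ to be realizable with only three successors, independent of $d$, so that the counterexample fits uniformly in every $\TT^d$ with $d\geq 3$. Applying Lemma~\ref{lem:char-interpolant} with $n=1$ and $\sigma=\{a\}$ then yields the non-existence of a Craig interpolant for $\phi$ and $\phi'$, proving the theorem.
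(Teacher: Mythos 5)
Your proposal is correct and follows essentially the same route as the paper: the paper's counterexample for $d=3$ uses $\varphi=\Diamond(a\wedge b)\wedge\Diamond(a\wedge\neg b)$ and $\varphi'=\Diamond(\neg a\wedge c)\wedge\Diamond(\neg a\wedge\neg c)$ (exactly the negation of your $\phi'$), establishes the entailment from the outdegree bound, and exhibits joint $\bis_{\{a\}}$-consistent models, invoking the same characterization of interpolant non-existence that underlies Lemma~\ref{lem:char-interpolant}. Your only deviations are cosmetic: a redundant $\Diamond\neg a$ conjunct and a uniform treatment of all $d\geq 3$ via $\lceil\log_2(d-1)\rceil$ private propositions, where the paper handles $d=3$ explicitly and notes the general case is similar.
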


%
\begin{proof}
  We give the proof for $d=3$, the proof for $d>3$ is similar.
  Consider $\ML$-formulae $\varphi = \Diamond(a\wedge b)\wedge
  \Diamond (a\wedge \neg b)$ and $\varphi'=\Diamond(\neg a\wedge
  c)\wedge \Diamond (\neg a\wedge \neg c)$. Clearly,
  $\varphi\models\neg \varphi'$ over $\TT^3$. 
  Observe that models $\M,\M'$ in Figure~\ref{fig:witnesses} witness
  that $\varphi,\varphi'$ are jointly $\bis_{\{a\}}$-consistent and thus
  jointly $\bis_{\{a\}}^n$-consistent
  for every $n\in \NN$. By Equivalence~\eqref{eq:non-sep vs n-bis} 
  there is no $\ML_{\{a\}}$-separator, which is nothing else than a
  Craig interpolant.
\end{proof}

\begin{figure}[t]
  \centering
  \begin{tikzpicture} \tikzset{
			dot/.style = {draw, fill=black, circle, inner
			sep=0pt, outer sep=1pt, minimum size=3pt},
			wdot/.style = {draw, fill=white, circle, inner
			sep=0pt, outer sep=1pt, minimum size=3pt}
		}
		

\draw (0,0.5) node[label=$\M$] {};
\draw (3,0.5) node[label=$\M'$] {};


\draw (0,0) node[dot, red, label=north:$\point_I$] (w) {};
\draw (-1,-.5) node[dot, green, label=south:{$a,b$}] (w1) {};
\draw (0,-.5) node[dot, green, label=south:{$a,\neg b$}] (w2) {};
\draw (1,-.5) node[dot, blue, label=south:{$\neg a$}] (w3) {};

\draw[->] (w)--(w1);
\draw[->] (w)--(w2);
\draw[->] (w)--(w3);

\draw (0+3,0) node[dot, red, label=north:$\point_I'$] (v) {};
\draw (-1+3,-.5) node[dot, blue, label=south:{$\neg a,c$}] (v1) {};
\draw (0+3,-.5) node[dot, blue, label=south:{$\neg a,\neg c$}] (v2) {};
\draw (1+3,-.5) node[dot, green, label=south:{$a$}] (v3) {};

\draw[->] (v)--(v1);
\draw[->] (v)--(v2);
\draw[->] (v)--(v3);

\path (w) edge [dashed,red,bend left=30] node {} (v);
\path (w1) edge [dashed,green, bend right=38] node {} (v3);
\path (w2) edge [dashed,green, bend right=30] node {} (v3);
\path (w3) edge [dashed,blue, bend left=30] node {} (v1);
\path (w3) edge [dashed,blue, bend left=30] node {} (v2);

\end{tikzpicture}
\caption{Witness of joint consistency: dashed lines and colors indicate
  the $\{a\}$-bisimulation.}
\label{fig:witnesses}
	
\end{figure}

Motivated by the lack of the Craig interpolation property, we study the
Craig interpolant existence problem for $\ML$ over $\TT^d$ with $d\geq3$. We show that this problem is harder than $\ML$-separability of $\muML$-formulae over arbitrary models.


\begin{restatable}{thm}{thmmlinterpolants}\label{thm:ml-interpolant-existence}
  For $d\geq 3$, the Craig interpolant existence for $\ML$ over $\TT^d$ is
  \coNExpTime-complete.
  %
\end{restatable}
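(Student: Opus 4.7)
The plan is to use Lemma~\ref{lem:char-interpolant} together with an exponential small-model property for the upper bound, and to reduce from an $\NExpTime$-hard tiling problem for the lower bound, following the techniques of~\cite{DBLP:conf/lics/JungW21,DBLP:journals/tocl/ArtaleJMOW23}. Throughout, let $\sigma=\sig(\phi)\cap\sig(\phi')$ and let $n$ be the maximum of the modal depths of $\phi$ and $\phi'$, so $n\leq|\phi|+|\phi'|$.

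For the upper bound, Lemma~\ref{lem:char-interpolant} implies that $\phi,\phi'$ admit \emph{no} Craig interpolant over $\TT^d$ iff $\phi,\neg\phi'$ are joint $\bis_\sigma^n$-consistent over $\TT^d$. If such a witness pair exists, one may truncate the trees at depth $n$ without affecting satisfaction of $\phi$ or $\neg\phi'$ (both lie in $\ML^n$), so the witnesses can be taken in $\TT^d$ of depth at most $n$ and hence at most $d^{n+1}$ nodes each, which is single exponential in $|\phi|+|\phi'|$. A nondeterministic exponential-time algorithm then guesses such $\M,\M'$ together with a relation $Z\subseteq M\times M'$, and verifies in exponential time that $\M\models\phi$, $\M'\models\neg\phi'$ (standard $\ML$ model checking), and that $Z$ witnesses $\M\bis_\sigma^n\M'$ on the $n$-prefixes. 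This places interpolant \emph{non}-existence in $\NExpTime$, hence existence in $\coNExpTime$.

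For the lower bound, the plan is to reduce from a canonical $\NExpTime$-complete problem such as exponential corridor tiling. Given a tiling instance $T$ of width $2^n$, one constructs $\ML$-formulae $\phi_T,\phi'_T$ (with $\phi_T\models\phi'_T$ over $\TT^d$) so that $T$ admits a valid tiling iff $\phi_T,\neg\phi'_T$ are joint $\bis_\sigma^n$-consistent, equivalently no Craig interpolant exists. A candidate tiling is encoded in a tree of polynomial depth with exponentially many leaves indexed by a binary counter; the shared signature $\sigma$ exposes only enough structure to enforce local tiling constraints, while additional unshared propositions are hidden. The gadget of Theorem~\ref{thm:nocraig}—two $\sigma$-bisimilar ternary subtrees that disagree on unshared symbols—is planted at every branching point so that an exponential amount of information can be embedded in $\M$ and $\M'$ invisibly to $\sigma$. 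The formulae are arranged so that matching $\M\models\phi_T$ with $\M'\models\neg\phi'_T$ along a $\bis_\sigma^n$-bisimulation is possible precisely when $T$ has a valid tiling.

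The main obstacle is the lower bound: engineering the counter and local tiling constraints in $\ML$ of polynomial modal depth, iterating the CIP-failing gadget along the tree to provide the necessary exponential nondeterminism, and coordinating the shared/unshared signature split so that the correspondence between witness pairs and valid tilings is exact in both directions. The construction must also genuinely use the outdegree bound $d\geq 3$, since by Proposition~\ref{prop:Craig interpolation bin} the analogous reduction cannot go through over $\TT^2$. Adapting the encoding templates of~\cite{DBLP:conf/lics/JungW21,DBLP:journals/tocl/ArtaleJMOW23} to the tree setting with bounded branching provides the guiding strategy, but the interplay between branching and counter encoding is where the technical work lies.
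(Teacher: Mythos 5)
Your upper bound is essentially the paper's own argument: by Lemma~\ref{lem:char-interpolant}, non-existence of an interpolant is joint $\bis_\sigma^n$-consistency over $\TT^d$, witnesses can be truncated to depth $n$ and hence have single-exponential size, and a guess-and-check procedure puts non-existence in \NExpTime. That part is fine.

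The lower bound, however, is a plan rather than a proof, and the one idea that actually makes the reduction work is missing. You say you will ``plant the gadget of Theorem~\ref{thm:nocraig} at every branching point so that an exponential amount of information can be embedded in $\M$ and $\M'$ invisibly to $\sigma$,'' and that ``the formulae are arranged so that matching $\M\models\phi_T$ with $\M'\models\neg\phi'_T$ along a $\bis_\sigma^n$-bisimulation is possible precisely when $T$ has a valid tiling'' --- but this last sentence is exactly the statement that has to be engineered and proved, and hiding information in unshared propositions does not by itself coordinate anything. The difficulty is that the tiling lives on exponentially many cells, while the $\ML$-formulae are of polynomial size and modal depth; some mechanism must force the tile chosen for a given cell to be the \emph{same} wherever that cell is referenced (as its own cell, as horizontal neighbour, as vertical neighbour), i.e.\ a global synchronization across exponentially many separate branches. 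The paper's proof achieves this with an iterated, asymmetric gadget (Lemma~\ref{lem:aux-psi-lower}): polynomial-size formulae $\psi_i,\psi_i'$ such that in \emph{every} witness pair $\M\bis_\sigma^\ell\M'$ there are $2^i$ points $w_0,\ldots,w_{2^i-1}$ in $\M$, pairwise distinguished by address propositions $b_0,\ldots,b_{i-1}$, that are all bisimilar to a \emph{single} point $\widehat w$ in $\M'$ (the outdegree bound $d\geq 3$ is what forces the two $a$-successors on the $\M$-side onto the unique $a$-successor on the $\M'$-side at each level). Below $\widehat w$ one stipulates a tree in which every address-leaf carries at most one tile proposition $t_d$ from the shared signature, and below each $w_j$ one stipulates $\sigma$-labelled paths carrying the addresses and tiles of its cell and its two neighbours; bisimilarity of all the $w_j$ with the one point $\widehat w$ is then what forces all stipulations of the same cell to agree, yielding a tiling. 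Your sketch of symmetric ``two bisimilar subtrees disagreeing on unshared symbols'' at branching points neither produces this many-to-one collapse nor explains how tile choices in distinct branches of $\M$ get identified, and you yourself flag the counter/gadget/signature interplay as ``where the technical work lies.'' As it stands, the \coNExpTime-hardness claim is therefore unsupported; to close the gap you need (i) the inductive gadget formulae with the quantitative guarantee above, proved against \emph{arbitrary} $(\sigma,\ell)$-bisimulations, and (ii) an explicit tiling encoding (the paper uses the exponential torus tiling problem, with $\chi_1,\chi_2$ stipulating the address-labelled paths and the at-most-one-tile tree) together with both directions of the correctness claim.
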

We start with the upper bound, for any fixed $d\geq 3$. Let 
$\varphi,\varphi'\in\ML^n$ and let $\sigma=\sig(\phi)\cap\sig(\phi')$ be the common signature.
By Lemma~\ref{lem:char-interpolant} and using the observation that witnesses for joint $\bis_\sigma^n$-consistency over $\TT^d$ can be assumed to be of depth at most $n$, the
\emph{non}-existence of a Craig interpolant for $\varphi,\neg\varphi'$ can be decided by a standard ``guess-and-check''-procedure:
\begin{enumerate}

  \item Guess two structures $\M,\M'\in \TT^d$ of depth $n$ in which no point satisfies a non-$\sigma$ proposition;
  
  \item Verify that $\M\models\varphi$, $\M'\models\neg\varphi'$,
    and $\M\bis^n_\sigma M'$.

\end{enumerate}
The runtime of the procedure is exponential: any structure in (1) can be represented in size
polynomial in $|\sigma|$ and $d^n$. Moreover, model checking in modal
logic and bisimulation testing is possible in polynomial time in the
size of the structure and the given formulae. 


The lower bound is more intriguing and relies on an extension of
the counterexample to the Craig interpolation property in the proof of Theorem~\ref{thm:nocraig}. For the sake of simplicity, we do the proof for $d=3$; the proof for $d>3$ is similar. Reconsidering the proof of Theorem~\ref{thm:nocraig} it is
important to note that in \emph{every} witness $\M,\M'$ of joint
$\bis_{\{a\}}^n$-consistency of $\phi,\phi'$, there are two successors
of $\point_I$ that are bisimilar to the same successor of $\point_I'$.
We extend the idea and enforce \emph{exponentially many} bisimilar points by providing suitable 
families of formulae
$(\psi_i)_{i\in \NN},(\psi_i')_{i\in\NN}$.
\begin{lem} \label{lem:aux-psi-lower}
  There are families of formulae $(\psi_i)_{i\in
  \NN},(\psi_i')_{i\in\NN}$ such that:
  \begin{enumerate}

    \item The size of the formulae $\psi_i,\psi_i'$ is polynomial in $i$.

    \item $\sig(\psi_i)=\{a,b_0,\ldots,b_{i-1}\}$ and
      $\sig(\psi_i')=\{a,c\}$.

    \item For every $i\in \NN$, for every $\M,\M'\in
      \TT^3$ with $\M\models \psi_i$, $\M'\models
      \psi_i'$, for every signature $\tau$ with $a\in\tau$ and
      $\{c,b_0,\ldots,b_{i-1}\}\cap\tau=\emptyset$, and every $(\tau,\ell)$-bisimulation $Z$
      witnessing $\M\bis^\ell_{\tau}\M'$ for some $\ell\geq i$, there are points $w_0,\ldots,w_{2^i-1}$
      in depth $i$ in $\M$ and a point $\widehat w$ in depth $i$ in
      $\M'$ such that: 
      \begin{enumerate}
        \item $(w_j,\widehat w)\in Z$ for all $j$,
        \item all $w_j$ and $\widehat w$ satisfy $a$, and
        \item distinct $w_j,w_k$ can be distinguished by some proposition
      in $b_0,\ldots,b_{i-1}$.

      \end{enumerate}
    \end{enumerate}
\end{lem}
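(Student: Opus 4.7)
The plan is to construct $\psi_i,\psi_i'$ recursively so that each level mimics the counterexample of Theorem~\ref{thm:nocraig} and doubles the number of $\tau$-bisimilar $a$-nodes while only adding one new bit. Concretely, set $\psi_0=\psi_0'=a$ and
\begin{align*}
\psi_{i+1} &{}= a \wedge \Diamond\bigl(a \wedge b_i \wedge \Box^i(a \to b_i)\bigr) \wedge \Diamond\bigl(a \wedge \neg b_i \wedge \Box^i(a \to \neg b_i)\bigr) \wedge \Diamond\neg a \wedge \Box(a \to \psi_i),\\
\psi_{i+1}' &{}= a \wedge \Diamond(a \wedge \psi_i') \wedge \Diamond(\neg a \wedge c) \wedge \Diamond(\neg a \wedge \neg c).
\end{align*}
Item~(2) is immediate from inspection. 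For item~(1), note that pulling the recursive occurrence of $\psi_i$ out into the single conjunct $\Box(a \to \psi_i)$ avoids duplication: $|\psi_{i+1}| = |\psi_i| + O(i)$, so $|\psi_i|,|\psi_i'| \in O(i^2)$.

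For item~(3) I would prove by induction on $i$ the following strengthened statement: under the assumptions of the lemma, for every $a$-node $\widehat w$ at depth $i$ in $\M'$ all of whose ancestors also satisfy $a$, there exist $w_0,\dots,w_{2^i-1}$ with the required properties (and with all ancestors $a$). Existence of at least one such $\widehat w$ in any $\M' \models \psi_i'$ follows by a trivial structural induction, and then the lemma is obtained by picking any such $\widehat w$. The base case is immediate with $w_0 = v_I$ and $\widehat w = v_I'$. For the step, observe that $\psi_{i+1}$ forces $\M$ to have exactly three root-children $v_0,v_1,v_\star$ (with $v_0,v_1$ carrying $a$ and split by $b_i$, and $v_\star$ carrying $\neg a$), and similarly $\psi_{i+1}'$ forces $\M'$ to have three root-children $v_a',v_c',v_{\bar c}'$. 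Since $a \in \tau$ and the outdegree is three, the forth/back conditions of $Z$ force $(v_0,v_a'),(v_1,v_a')\in Z$ and $(v_\star,v_c'),(v_\star,v_{\bar c}')\in Z$, exactly as in Theorem~\ref{thm:nocraig}. The chosen $\widehat w$ lies in the subtree of $v_a'$ at depth $i$ from $v_a'$. Restricting $Z$ to subtrees yields a $(\tau,\ell-1)$-bisimulation (using $\ell-1 \geq i$) between the subtree of $v_0$ and the subtree of $v_a'$, both of which satisfy $\psi_i$ and $\psi_i'$ respectively. Applying the IH with $\widehat w$ gives $2^i$ witnesses in the subtree of $v_0$, all of which satisfy $b_i$ by the propagation $\Box^i(a \to b_i)$ attached to $v_0$. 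Applying the IH again to the subtree of $v_1$ with the \emph{same} $\widehat w$ gives $2^i$ more witnesses, now all satisfying $\neg b_i$. Their union is $2^{i+1}$ nodes, pairwise distinguished by $\{b_0,\dots,b_i\}$ ($b_i$ separates the two halves, and by IH the bits $b_0,\dots,b_{i-1}$ separate within each half).

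The central obstacle is ensuring that the two separate IH applications produce witnesses pointing to a \emph{common} $\widehat w$: the weaker ``there exists $\widehat w$'' form of the IH would let each application choose its own target in $\M'$, and we would be unable to merge the two families. The strengthening ``for every $\widehat w$ along an $a$-path'' resolves this at the cost of the small structural lemma on existence of such a $\widehat w$. A secondary technical point is keeping the size polynomial in $i$; the factoring via $\Box(a \to \psi_i)$ rather than duplicating $\psi_i$ under each of the two $a$-branches is what turns an otherwise exponential blowup into the quadratic bound required by item~(1).
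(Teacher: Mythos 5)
Your proof is correct, and it follows the paper's overall strategy: build the gadgets recursively on top of the Theorem~\ref{thm:nocraig} counterexample, argue by induction on $i$, use the fact that in $\TT^3$ the two forced $\neg a$-children of the root of $\M'$ leave room for only one $a$-child $v_a'$, so both $a$-children of the root of $\M$ must be $Z$-linked to $v_a'$, and then merge the two inductive families. The interesting divergence is how you guarantee that the two applications of the induction hypothesis target a \emph{common} $\widehat w$. The paper solves this at the level of the formula: its $\psi_{i+1}'$ propagates $\neg a$ through the subtrees of the two $\neg a$-children (the conjuncts $\bigwedge_{j<i}\Box^j\neg a$), so that $\M'$ has at most one $a$-node per depth and the two targets produced by the induction hypothesis coincide automatically. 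You instead keep $\psi'$ lean and strengthen the induction statement to quantify universally over all depth-$i$ targets $\widehat w$ lying on the all-$a$ spine, which lets you point both applications at the same node; this is a legitimate trade (a slightly stronger statement to prove, a slightly simpler formula) and your verification of the restricted bisimulation, the uniqueness of the $a$-child at each level, and the size bound are all in order. A further small difference: your $b_i$-propagation $\Box^i(a\to b_i)$ is pinned to exactly the depth where the witnesses of the inner induction live, whereas the paper's conjunct $b_i\to\bigwedge_{j<i}\Box^j b_i$, read literally, reaches only relative depth $i-1$ and so stops one level short of the witnesses (an indexing slip that your formulation avoids); the extra conjuncts $a$ and $\Diamond\neg a$ in your $\psi_{i+1}$ are harmless.
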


\begin{proof}
Define the families $(\psi_i)_{i\in
\NN}$, $(\psi_i')_{i\in \NN}$ of modal formulae inductively as
follows:
\begin{align*}
  \psi_0 & = \psi_0' = a\\ 
  \psi_{i+1} & = \Diamond (a\wedge b_i)\wedge \Diamond (a\wedge \neg
  b_i) \wedge \Box \big( a\to (\psi_i \wedge (b_i\to
  \textstyle\bigwedge_{j<i}\Box^j b_i) \wedge (\neg b_i\to
  \textstyle\bigwedge_{j<i}\Box^j \neg b_i))\big) \\
  %
  %
  \psi_{i+1}' & = \Diamond (\neg a\wedge \textstyle\bigwedge_{j<i}\Box^j\neg a\wedge c)\wedge \Diamond (\neg
  a\wedge \bigwedge_{j<i}\Box^j\neg a\wedge \neg c) \wedge \Diamond (a \wedge \psi_i')
\end{align*}
For Item~(1), it is routine to verify that both $\psi_i$ and $\psi_i'$ are of quadratic size.
Item~(2) is immediate from the definition.  
Item~(3) is shown by induction on $i$, using the same arguments from the proof of Theorem~\ref{thm:nocraig}. The base case $i=0$ is trivial. For the induction step, let $\M,\M'\in \TT^3$ with $\M\models \psi_{i+1}$, $\M'\models \psi_{i+1}'$, let $\tau$ be any signature with $a\in\tau$ and $\{c,b_0,\ldots,b_{i-1}\}\cap\tau=\emptyset$, and let $Z$ be a $(\tau,\ell)$-bisimulation witnessing $\M\bis^\ell_{\tau}\M'$ for some $\ell\geq i+1$. By the definition of $\psi_{i+1}$, there are distinct successors $v_1,v_2$ of the root $\point_I$ of $\M$ both satisfying $\psi_i$. By the definition of $\psi_{i+1}'$, there is a successor $v'$ of the root $\point_I'$ of $\M'$ satisfying $\psi_i'$. By the same argument as in the proof of Theorem~\ref{thm:nocraig}, we must have $v_1 Z v'$ and $v_2 Z v'$. Applying the induction hypothesis to $v_1$ and $v'$ gives us points $w_0,\ldots,w_{2^i-1}$ in depth $i$ below $v_1$ in $\M$ and a point $\widehat w$ in depth $i$ below $v'$ in $\M'$ satisfying 3(a)--(c). 
Applying the induction hypothesis to $v_2$ and $v'$ gives us points $w_0',\ldots,w_{2^i-1}'$ in depth $i$ below $v_2$ in $\M'$ and a point $\widehat w'$ in depth $i$ below $v'$ in $\M'$ satisfying 3(a)--(c). Note that $\widehat w=\widehat w'$ since, by 3(b), both satisfy $a$ and the definition of $\psi_{i+1}'$ ensures that there is a unique element satisfying $a$ at every depth below $v'$. The sequence $w_0,\ldots,w_{2^i-1},w_0',\ldots,w_{2^i-1}'$ together with $\widehat w$ satisfy 3(a)--(c) for $i+1$. This concludes the induction step and thus the proof of Item~(3).
\end{proof}

Intuitively, this means that $\psi_i,\psi_i'$ from Lemma~\ref{lem:aux-psi-lower}
enforce in jointly $\bis_{\{a\}}^i$-consistent models $\M,\M'$ that $\M$
contains $2^i$ points $w_0,\ldots,w_{2^i-1}$ which are all linked to
the same point $\widehat w$ in $\M'$. We exploit this link to
synchronize information between the $w_j$, following a strategy that
has recently been used to show \coNExpTime-hardness for interpolant existence
in some description logics~\cite{DBLP:journals/tocl/ArtaleJMOW23}.

We reduce a suitable \emph{torus tiling problem}. Let $\Delta$ be a (finite) set of
tile types,
$H,V\subseteq \Delta\times \Delta$ be horizontal and vertical compatibility relations, and let $m\in \NN$. Then a function $\tau:\{0,\ldots,m-1\}\times\{0,\ldots,m-1\}\to \Delta$ is a \emph{torus tiling} for $\Delta,H,V,n$ if for all $x,y\in \{0,\ldots,m-1\}$:
\begin{itemize}
\item $(\tau(x,y),\tau(x,y\oplus_m 1))\in H$, and 
\item $(\tau(x,y),\tau(x\oplus_m 1,y))\in V$,
\end{itemize}
where $\oplus_m$ denotes addition modulo $m$. It is well-known that the \emph{exponential torus 
tiling problem}, which takes as input $\Delta,H,V,n$ with $n$ in unary and asks whether there is a torus tiling for $\Delta,H,V,2^n$, is \NExpTime-complete~\cite{DBLP:conf/lam/Furer83}.

We first sketch the proof idea. For a given input $\Delta,H,V,n$, we use the gadget formulae from Lemma~\ref{lem:aux-psi-lower} to enforce the existence of $2^{2n}$ bisimilar points $w_0,\ldots,w_{2^{2n}-1}$. These points shall represent the $2^n\times 2^n$ cells of the exponential torus, which can be suitably addressed using propositions $b_0,\ldots,b_{2n-1}$.
Suppose for the sake of simplicity that the outdegree of the points $w_i$ is
at most $2^{2n}$ (instead of 3). Then we could proceed by enforcing at each $w_i$ with address $(x,y)$ three successors with addresses $(x,y)$, $(x,y\oplus_{2^n} 1)$, and $(x\oplus_{2^n} 1,y)$, and stipulate that they satisfy tile types (encoded using propositions in $\sigma$) compatible with~$H,V$. Then each point of the torus is stipulated three times (from different cells), and due to the bisimilarity of the $w_i$ all stipulated copies of the same point satisfy the same tile type. Hence, existence of a torus tiling will coincide with some joint consistency condition. We will make this now more precise, lifting also the assumption on the outdegree. 

Let $\Delta,H,V,n$ be the input to the exponential torus tiling problem. 
We will provide formulae $\varphi_n,\varphi'_n$ of modal
depth $4n$ and with common signature
$\sigma=\sig(\varphi_n)\cap\sig(\varphi'_n)$ such that:
\[\text{$\varphi_n,\varphi'_n$ are jointly $\bis^{4n}_\sigma$-consistent iff there is a
torus tiling for $\Delta,V,H,2^n$.}\] 
The common signature $\sigma$ 
will consist of propositions $a,b$, and one proposition $t_d$ for
every $d\in \Delta$. Both $\varphi_n$ and $\varphi_n'$ will use
auxiliary propositions to encode counters. The formulae
$\varphi_n,\varphi_n'$ are based on the families of formulae
$(\psi_{i})_{i\in\NN},(\psi_i')_{i\in\NN}$ defined
in Lemma~\ref{lem:aux-psi-lower} and will take the shape:
\begin{align*}
  \varphi_n & = \psi_{2n}\wedge\Box^{2n}\chi_1 \\
  \varphi_n' & = \psi_{2n}'\wedge\Box^{2n}\chi_2
\end{align*}
for formulae $\chi_1,\chi_2$ to be defined below. 

\smallskip Consider models
$\M\models \psi_{2n}$ and $\M'\models \psi_{2n}'$ with
$\M\bis_{\sigma}^{4n}\M'$. Let $w_0,\ldots,w_{2^{2n}-1}$ be the points in
$\M$ and $\widehat w$ be the point in $\M'$ that exist due to
Lemma~\ref{lem:aux-psi-lower}. Recall that, by the lemma, all $w_i$ are linked to
$\widehat w$ by a $(\sigma,2n)$-bisimulation. We
associate two numbers $x_i,y_i$ with each point $w_i$ as follows:

\begin{itemize}

  \item $x_i$ is the number encoded by the valuation of 
    $b_0,\ldots,b_{n-1}$ in $w_i$, and 

  \item $y_i$ is the number encoded by the valuation of 
    $b_n,\ldots,b_{2n-1}$ in $w_i$.

\end{itemize}
Note that by the properties of $w_0,\ldots,w_{2^{2n}-1}$, for every
pair $x,y$ with $0\leq x,y<2^n$, there is
some $w_i$ with $x=x_i$ and $y=y_i$. We denote that point with
$w(x,y)$. Hence, the numbers $x_i,y_i$
can serve as addresses of the $2^n\times 2^n$ cells in the
intended torus tiling. As described above, we will exploit that all $w_i$ are
linked to $w$ by a
$(\sigma,2n)$-bisimulation to synchronize the tile types in each cell. For what follows, it is convenient to denote with
$b_0^i,\ldots,b_{2n-1}^i$ the values of propositions $b_j$ in point
$w_i$, and with $c_0^i,\ldots,c_{2n-1}^i$ and $d_0^i,\ldots,d_{2n-1}^i$ the value of the
propositions $b_j$ in the encoding of $x_i,y_i\oplus_{2^n} 1$ and $x_i\oplus_{2^n} 1,y_i$, respectively. Using auxiliary
non-$\sigma$ propositions, it is not difficult to write formulae
(of polynomial size) in modal logic that express the following Conditions~1
and~2, respectively. Recall that $\sigma$ contains $a,b$ and propositions $t_d$ for
every $d\in \Delta$.
\begin{enumerate}

  \item There are three paths $p_1,p_2,p_3$ of length $2n$ with
    the following properties:

    \begin{enumerate}

      \item each path satisfies $a$ in each point;

      \item on $p_1$, the $j$-th point satisfies $b$ iff
	$b_{j-1}^i=1$, for $1\leq j\leq 2n$;

      \item on $p_2$, the $j$-th point satisfies $b$ iff
	$c_{j-1}^i = 1$, for $1\leq j\leq 2n$;

      \item on $p_3$, the $j$-th point satisfies $b$ iff
	$d_{j-1}^i = 1$, for $1\leq j\leq 2n$;

      \item the ends of $p_1,p_2,p_3$ are labeled with propositions
	$t_{d_1},t_{d_2},t_{d_3}$, respectively, such that $(d_1, d_2)\in H$ and
	$(d_1,d_3)\in V$.

   \end{enumerate}

 \item There is a (ternary) tree of depth $2n$ with the
   following properties: 

   \begin{enumerate}
       
     \item each node has three successors: one satisfying
       $\neg a$, one satisfying $a,b$, and one satisfying $a,\neg b$;

     \item in the leaves of this tree, at most one
       proposition $t_d$ is true. 

   \end{enumerate}

\end{enumerate}
We leave the (rather straightforward) construction of $\chi_2$ to the reader, but provide a few more details on the construction of $\chi_1$. Stipulating three paths of length $2n$ satisfying~(a) is simply done with a modal formula using only $\Diamond$ and $\wedge$, and using non-$\sigma$ propositions to enforce different paths. Conditions~(b)--(d) are realized by propagating the valuation of propositions $b^i_j,c^i_j,d^i_j$ in $w_i$ using non-$\sigma$ propositions to all successors of $w_i$ up to depth $2n$. Finally, Condition~(e) is realized by \enquote{guessing} compatible tiles $d_1,d_2,d_3$ in the root $w_i$ and making them visible only in the end of the paths using $t_{d_1},t_{d_2},t_{d_3}$. Thus, $\chi_1$ takes the form of a disjunction over all compatible choices $d_1,d_2,d_3$. Again, propagation of the guessed tiles to the end of the paths is achieved using non-$\sigma$ propositions.

This finishes the definition of the formulae $\varphi_n,\varphi_n'$
and we can proceed with showing the correctness of the reduction.

\begin{clm} 
 The formulas $\psi_{2n}\wedge \Box^{2n}\chi_1$
und $\psi'_{2n}\wedge\Box^{2n}\chi_2$ are jointly
$\bis_\sigma^{4n}$-consistent
iff there is a torus tiling for $\Delta,H,V,2^n$.
\end{clm}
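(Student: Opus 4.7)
The plan is to prove each direction of the equivalence by exploiting the structure enforced by $\chi_1$ and $\chi_2$ via the bisimulation provided by Lemma~\ref{lem:aux-psi-lower}. For the implication from joint consistency to the existence of a tiling, suppose $\M \models \psi_{2n} \wedge \Box^{2n}\chi_1$ and $\M' \models \psi'_{2n} \wedge \Box^{2n}\chi_2$ with $\M \bis^{4n}_\sigma \M'$ witnessed by a relation $Z$. Apply Lemma~\ref{lem:aux-psi-lower} (its hypothesis is met because $a \in \sigma$ while the $b_j$'s and $c$ lie outside $\sigma$) to obtain points $w(x,y)$ at depth $2n$ in $\M$ for every $(x,y) \in \{0, \ldots, 2^n - 1\}^2$, all $Z$-linked to a single point $\widehat w$ at depth $2n$ in $\M'$. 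The formula $\Box^{2n}\chi_1$ then forces at each $w(x,y)$ three all-$a$ paths $p_1, p_2, p_3$ of length $2n$ whose $b$-patterns encode the addresses $(x,y)$, $(x, y\oplus_{2^n} 1)$, $(x\oplus_{2^n} 1, y)$ respectively, with leaf tile labels $d_1(x,y), d_2(x,y), d_3(x,y)$ satisfying $(d_1, d_2) \in H$ and $(d_1, d_3) \in V$. Set $\tau(x,y) := d_1(x,y)$.

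To verify horizontal compatibility, observe that $Z$ is a $(\sigma, 4n)$-bisimulation and thus matches $p_2$ from $w(x,y)$ and $p_1$ from $w(x, y\oplus_{2^n} 1)$ node-by-node against all-$a$ paths of length $2n$ below $\widehat w$. By Condition 2(a) at $\widehat w$, the all-$a$ descendants of $\widehat w$ at depth $2n$ are in bijection with the $2^{2n}$ possible $b$-patterns, so any such all-$a$ path is uniquely determined by its $b$-pattern. Since $p_2$ from $w(x,y)$ and $p_1$ from $w(x, y\oplus_{2^n} 1)$ both encode the address $(x, y\oplus_{2^n} 1)$, they land on the same leaf below $\widehat w$. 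Condition 2(b), ensuring at most one tile label per such leaf, then forces $d_2(x,y) = d_1(x, y\oplus_{2^n} 1)$ and hence $(\tau(x,y), \tau(x, y\oplus_{2^n} 1)) \in H$. Vertical compatibility follows by the symmetric argument using $p_3$ from $w(x,y)$ and $p_1$ from $w(x\oplus_{2^n} 1, y)$.

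For the converse, given a torus tiling $\tau$, I construct $\M$, $\M'$ together with an explicit $(\sigma, 4n)$-bisimulation. In $\M$ I take the tree enforced by $\psi_{2n}$ up to depth $2n$ (which realizes the points $w(x,y)$) and attach below each $w(x,y)$ the three paths required by Condition 1, using tile labels $\tau(x,y)$, $\tau(x, y\oplus_{2^n} 1)$, $\tau(x\oplus_{2^n} 1, y)$; these are consistent with $H, V$ by the tiling property. In $\M'$ I take the structure enforced by $\psi'_{2n}$ and attach below $\widehat w$ the full ternary tree required by Condition 2, placing the label $t_{\tau(x,y)}$ at the unique leaf whose $b$-pattern encodes $(x,y)$. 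The bisimulation links each $w(x,y)$ to $\widehat w$ and matches each all-$a$ path from $w(x,y)$ with the unique all-$a$ path below $\widehat w$ sharing the same $b$-pattern; tile labels at matched leaves agree by design, and both sides can be freely padded with $\neg a$-subtrees to honour the $\neg a$-branches mandated by Condition 2(a).

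The main obstacle is the synchronization of tile labels across exponentially many cells through the single point $\widehat w$: each cell $(x, y\oplus_{2^n} 1)$ is approached by the path $p_2$ from $w(x,y)$ and by the path $p_1$ from $w(x, y\oplus_{2^n} 1)$ (and analogously in the vertical direction), and the bisimulation must collapse these onto the same leaf below $\widehat w$. This is exactly what Condition 2 is tailored to achieve, and careful bookkeeping of the $b$-patterns on both sides of the bisimulation ensures that the resulting identification of leaves implements precisely the intended torus adjacencies.
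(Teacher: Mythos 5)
Your first direction (joint consistency implies a tiling) is essentially the paper's own argument: force the $2^{2n}$ points $w(x,y)$ onto the single point $\widehat w$, use Condition~2(a) plus the outdegree bound to see that below $\widehat w$ each $b$-pattern determines a unique all-$a$ path, and then combine Condition~2(b) (at most one $t_d$ per leaf) with Condition~1(e) to synchronize tile labels; your pairwise bookkeeping $d_2(x,y)=d_1(x,y\oplus_{2^n}1)$ is just a reorganisation of the paper's "all three paths addressing a cell end in the same leaf" argument, and it is correct.

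The converse direction, however, has a genuine gap. The models you describe are not joint $\bis_\sigma^{4n}$-consistent: any $(\sigma,4n)$-bisimulation linking the roots is forced (since $a\in\sigma$ and the only $a$-points of $\M'$ at depths $\le 2n$ lie on the unique $a$-path) to link every $w(x,y)$ to $\widehat w$, and then the \emph{back} condition at $(w(x,y),\widehat w)$ and below requires the subtree under $w(x,y)$ to simulate the \emph{entire} ternary tree that $\chi_2$ stipulates under $\widehat w$: every node of that tree has an $a\wedge b$-child, an $a\wedge\neg b$-child and a $\neg a$-child, so all $2^{2n}$ addressed all-$a$ branches, together with their $t_d$-labels at depth $4n$, need partners below each $w(x,y)$. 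Attaching only the three paths of Condition~1 and "padding with $\neg a$-subtrees" supplies neither the missing complementary $a$-successors along the paths nor the $2^{2n}-3$ remaining addressed branches; moreover, if the three paths start at three distinct children of $w(x,y)$ the outdegree-$3$ bound leaves no room even for the required $\neg a$-child. The repair is to attach below \emph{every} $w(x,y)$ a full copy of the same $\sigma$-labelled ternary tree as below $\widehat w$ (each node with one $\neg a$-, one $a\wedge b$- and one $a\wedge\neg b$-child, and the all-$a$ leaf with address $(x',y')$ labelled exactly $t_{\tau(x',y')}$); Condition~1 is then witnessed by three branches of this tree, its compatibility requirements hold because $\tau$ is a tiling, Condition~2 holds at $\widehat w$, and the level- and label-respecting relation is the desired $(\sigma,4n)$-bisimulation. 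The paper leaves this direction as "not difficult", but as written your construction would not verify it.
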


\smallskip To prove the claim, suppose first that $\psi_{2n}\wedge
\Box^{2n}\chi_1$ und $\psi'_{2n}\wedge\Box^{2n}\chi_2$ are jointly
$\bis_\sigma^{4n}$-consistent, witnessed by models $\M,\M'$. Let also $w_0,\ldots,w_{2^{2n}-1}$ and $\widehat w$ be the points that exist in $\M,\M'$
due to Lemma~\ref{lem:aux-psi-lower}.

We define a torus tiling $\tau$ as follows. Let $x,y$ be any cell of the exponential torus, that
is, $0\leq x,y<2^n$.  Conditions~1b--1d enforce that three paths are
stipulated: one (via~1b) in point $w(x,y)$, one (via~1c) in point
$w(x\ominus_{2^n} 1,y)$, and one (via~1d) in point $w(x,y\ominus_{2^n}1)$, where,
similar to $\oplus_{2^n}$, $\ominus_{2^n}$ denotes subtraction modulo $2^n$. Each
of these paths is labeled, using $b\in\sigma$ with (the encoding of) $x,y$ along its
elements.  Due to Conditions~1a and~2a, these three paths (in $\M$) can only be
bisimilar to the path (in $\M'$) in the tree stipulated below
$\widehat w$ that is labeled
with (the encoding of) $x,y$. In particular, the ends of the paths are
bisimilar to the same leaf in the tree. Since, by
Condition~2b, every leaf in the tree satisfies at most one $t_d$, and,
by Condition~1e, the end of each path satisfies at least one
$t_d$, all ends are labeled with the same $t_d$. We set $\tau(x,y)=d$. Synchronization is
then achieved by Condition~1e. Indeed, consider $(x,y)$ and
$(x\oplus_{2^n} 1,y)$. Then Condition~1e ensures that the end of the paths for
$(x,y)$ and $(x\oplus_{2^n} 1,y)$ stipulated in $w(x,y)$ are labeled with $t_d,t_{d'}$ such that
$(d,d')\in V$. The argument for horizontal compatibility is symmetric.

In the other direction, it is not difficult to construct jointly
$\bis^{4n}_\sigma$-consistent models from a given torus tiling for
$\Delta,H,V,2^n$. This concludes the proof of the claim and thus of Theorem~\ref{thm:ml-interpolant-existence} as well.

We show next that the situation for the more general $\ML$-separability, for which the input formulae are in $\muML$, is even worse. 
 
\begin{restatable}{thm}{thmternarycomplexity}\label{thm:ternary complexity}
  For every $d\geq 3$, $\ML$-separability of $\muML$-formulae over $\TT^d$ is
  $\TwoExpTime$-complete.
  %
\end{restatable}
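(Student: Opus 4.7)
The plan is to prove matching $\TwoExpTime$ lower and upper bounds for $\ML$-separability of $\muML$-formulae over $\TT^d$, $d\geq 3$. The general strategy follows the earlier cases: reduce the decision, via Equivalence~\eqref{eq:non-sep vs n-bis}, to checking whether $\phi$ and $\phi'$ are joint $\bis_\sigma^n$-consistent over $\TT^d$ for all $n\in\NN$. The lower bound is an adaptation of the torus reduction of Theorem~\ref{thm:ml-interpolant-existence}, scaled up using the expressive power of $\muML$ on the input side. The upper bound is via an automata product on $d$-ary trees which, in the absence of the binary collapse of Proposition~\ref{prop:bin bis=iso} and the duplication freedom of Lemma~\ref{lem:unbounded separability}, must carry extra structural information and hence incurs an extra exponential blowup.

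For the lower bound, I would reduce from the doubly exponential torus tiling problem, which is $\TwoExpTime$-complete. The gadget formulae $(\psi_i)_{i\in\NN}$ and $(\psi_i')_{i\in\NN}$ of Lemma~\ref{lem:aux-psi-lower} enforce $2^i$ bisimilarity-linked points at depth $i$ through an iterative definition of polynomial size; since the input is now in $\muML$, I can replace this iteration by a fixpoint and obtain polynomial-size $\muML$-formulae $\Psi_m,\Psi_m'$ that play the analogous role and enforce $2^m$ bisimilarity-linked points, even when $m$ is exponential in the input. Combined with a standard polynomial-size $\muML$ encoding of a $2^n\times 2^n$ counter to address the cells of the doubly exponential torus, and with the compatibility-and-synchronisation gadgets of the proof of Theorem~\ref{thm:ml-interpolant-existence} reused essentially verbatim, this produces $\muML$-formulae $\phi,\phi'$ that are joint $\bis_\sigma^{n'}$-consistent over $\TT^d$ for every $n'$ iff the given doubly exponential torus is tileable. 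The $\TwoExpTime$-hardness follows.

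For the upper bound, I translate $\phi,\phi'$ into NPTAs $\A,\A'$ over $\TT^d$ of size $2^{O(|\phi|+|\phi'|)}$ via Theorem~\ref{thm:munpta}, so that, by Equivalence~\eqref{eq:non-sep vs n-bis}, separability is equivalent to the existence of some $n$ for which $\A,\A'$ fail to be joint $\bis_\sigma^n$-consistent over $\TT^d$. I then construct an automaton $\B$ on $d$-ary trees whose inputs encode pairs $(\M,\M')\in\TT^d\times\TT^d$ together with a $\sigma$-bisimulation between them, in such a way that accepting runs simultaneously certify accepting runs of $\A$ on $\M$ and of $\A'$ on $\M'$. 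Without the collapse of Proposition~\ref{prop:bin bis=iso}, the local state of $\B$ cannot be taken to be a single pair from $Q\times Q'$: in order to witness both the forth and the back clauses for each pair of nodes of outdegree up to $d$, it must track subset-like information describing which bisimulation classes (i.e.\ combinations of states of $\A$ and $\A'$) are realised among the children. I therefore expect $|\B|$ to be doubly exponential in $|\phi|+|\phi'|$. A pumping argument in the spirit of Proposition~\ref{prop:automata step} then reduces joint $\bis_\sigma^n$-consistency for all $n$ to non-emptiness of $\B$, decidable in polynomial time in $|\B|$, yielding the overall $\TwoExpTime$ upper bound.

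The main obstacle is the correct design of $\B$ and the matching pumping bound. For $d\leq 2$ and for unrestricted models there is a uniform pair-of-states witness propagating through a single tree, either because of the binary duplication of Proposition~\ref{prop:bin bis=iso} or because of the unrestricted duplication of Lemma~\ref{lem:unbounded separability}. Neither is available for $d\geq 3$, as already witnessed by the example $\phi=\Diamond(\theta_\infty\wedge a)\wedge\Diamond(\neg\theta_\infty\wedge a)$ discussed after Proposition~\ref{prop:L-uniform construction}. The real work will be to pin down exactly the minimal amount of bisimulation-type information that must be tracked in the states of $\B$, to justify that doubly exponentially many such types suffice, and to align the resulting bound tightly with the $\TwoExpTime$ lower bound, thereby explaining the gap between the binary and ternary cases.
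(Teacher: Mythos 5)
There is a genuine gap in your lower bound, on two counts. First, the source problem: the (existential) doubly exponential torus tiling problem is complete for \emph{nondeterministic} doubly exponential time, not for \TwoExpTime. Even though reducing from it would formally yield \TwoExpTime-hardness, such a reduction cannot exist unless nondeterministic and deterministic doubly exponential time coincide, because combined with the \TwoExpTime upper bound it would collapse the two classes; so this plan cannot be carried out, and a \TwoExpTime-hard source must involve alternation (the paper reduces from the word problem of exponentially space-bounded \emph{alternating} Turing machines and encodes universal/existential branching explicitly via $p_\forall,p_\exists^1,p_\exists^2$ and $z$). Second, the proposed scaling of Lemma~\ref{lem:aux-psi-lower} fails: the gadget $\psi_{i+1}$ consumes a fresh proposition $b_i$ per level, both to define the formula and, crucially, to make the $2^i$ bisimilarity-linked points pairwise addressable (item~3(c)); a fixpoint unfolding cannot introduce fresh propositions, and with only polynomially many propositions one cannot address $2^m$ points for $m$ exponential in the input. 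The paper keeps exactly $2^n$ bisimilar points and obtains the second exponential not from more bisimilar points but from the fixpoint $\Box^*$, which lets the tree below each $w_i$ carry an unboundedly deep computation tree while synchronizing only the $(2^n-i)$-th cell of each configuration. Finally, your construction implicitly requires the separator to avoid the auxiliary counter and marker propositions; this is automatic for interpolant existence (Theorem~\ref{thm:ml-interpolant-existence}) but not for \ML-separability, where separators may use all of $\sig(\phi)\cup\sig(\phi')$. The paper therefore first proves hardness of \emph{signature-restricted} separability (Lemma~\ref{lem:craigmodalsep}) and then needs the relativization reduction of Lemma~\ref{lem:craig-to-general} to remove the signature restriction---a step entirely absent from your plan.

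The upper bound sketch is in the right spirit but leaves the decisive construction open, and as described it is problematic: an NPTA reads a single $d$-ary tree, and it is unclear how a pair $(\M,\M')$ of $d$-ary trees together with a (generally many-to-many) $\sigma$-bisimulation between them could be encoded as one such input. The paper's route is to pass to bisimulation quotients: two trees are bisimilar iff their bisimilarity quotients are isomorphic, so it constructs (Proposition~\ref{prop:quotient automaton}, via a parity game whose positional winning strategies are annotated onto the tree and then projected away) an automaton for the quotients of (prefixes of) models of $\A$, and then decides joint $\cong_\sigma^n$-consistency of the two quotient automata for all $n$ by the product-and-pumping argument of Proposition~\ref{prop:automata step}. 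Your idea of tracking which combinations of states are realised among the children gestures toward this, but the correctness of such an automaton is exactly the ``real work'' you defer, so neither bound is established by the proposal.
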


Thus, over $\TT^d$ for $d\geq 3$, $\ML$-separability is provably 
harder than $\ML$-definability---recall that $\ML$-definability was shown to be \ExpTime-complete in Theorem~\ref{thm:otto main}. 
Both the lower and the upper bound of Theorem~\ref{thm:ternary
complexity} are non-trivial. We provide proofs for both of them in the following two subsections: first the lower bound in Subsection~\ref{sec:lower} and then the upper bound in Subsection~\ref{sec:upper}. We also observe in the end of the latter Subsection~\ref{sec:upper} that the procedure can be easily made constructive.

\subsection{Proof of Lower Bound in Theorem~\ref{thm:ternary complexity}}\label{sec:lower}

We proceed in two steps. We first show a lower bound for a slightly modified separability problem, and then reduce this to $\ML$-separability. In detail, \emph{signature-restricted $\ML$-separability of $\muML$-formulae} is the problem that takes as an input $\muML$-formulae $\varphi,\varphi'$ and a signature $\sigma$ and decides whether there is an $\ML_\sigma$-separator for $\varphi,\varphi'$. Hence, it is the variant of $\ML$-separability in which we can additionally restrict the signature of the sought separator.

\begin{lem} \label{lem:craigmodalsep} 
  For any $d\geq 3$, signature-restricted $\ML$-separability of $\muML$-formulae over $\TT^d$ is
  \TwoExpTime-hard.
\end{lem}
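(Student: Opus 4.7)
The plan is to strengthen the reduction used in the proof of Theorem~\ref{thm:ml-interpolant-existence}: instead of reducing from the exponential torus tiling problem I would reduce from the \emph{doubly}-exponential torus tiling problem (tile a $2^{2^n}\times 2^{2^n}$ torus subject to horizontal and vertical compatibility relations $H,V$ on a set of tile types $\Delta$), which is well known to be \TwoExpTime-complete. Since the input formulae are now in $\muML$ instead of $\ML$, the extra expressive power of fixpoints will be used to compress gadgets that would be of doubly-exponential size in $\ML$ into polynomial-size $\muML$-formulae.

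The central technical step is to construct $\muML$-formulae $\Psi_n,\Psi_n'$ of polynomial size in $n$ that play the same role as $\psi_{2n}$ and $\psi_{2n}'$ from Lemma~\ref{lem:aux-psi-lower} but enforce $2^{2^n}$ rather than $2^{2n}$ mutually distinguishable points in $\M$, all linked to a single point in $\M'$ via a $(\sigma,N)$-bisimulation of doubly-exponential depth $N=O(2^n)$. The gadget of Lemma~\ref{lem:aux-psi-lower} is defined by an induction of $2n$ steps, each step doubling the number of bisimilar points and fixing one more address bit via a fresh proposition $b_i$; in $\muML$ such an iteration of $2^n$ steps can be captured by a fixpoint of polynomial size, and the $2^n$ address bits must be re-encoded as paths of length $2^n$ over a fixed finite alphabet whose bit-by-bit comparison is again expressible by a polynomial-size $\muML$-formula. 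Once $\Psi_n,\Psi_n'$ are in place, the remainder of the reduction mirrors the construction in the proof of Theorem~\ref{thm:ml-interpolant-existence}: I would conjoin $\Psi_n,\Psi_n'$ with $\muML$-formulae $\chi_1,\chi_2$ of polynomial size that, respectively, stipulate below each of the $2^{2^n}$ enforced points three address-labelled paths encoding the tile at $(x,y)$ and at $(x,y\oplus 1)$, $(x\oplus 1,y)$ (modulo $2^{2^n}$) together with $H,V$-compatibility on their tile types, and below the synchronization point in $\M'$ a doubly-exponentially deep ternary tree whose leaves carry at most one tile-type proposition. With the common signature $\sigma = \{a,b\}\cup\{t_d\mid d\in\Delta\}$, the synchronization argument from the proof of Theorem~\ref{thm:ml-interpolant-existence} will yield that the resulting $\muML$-formulae $\varphi,\varphi'$ are joint $\bis_\sigma^m$-consistent for some (equivalently, every sufficiently large) $m$ iff a torus tiling for $\Delta,H,V,2^{2^n}$ exists; by Equivalence~\eqref{eq:non-sep vs n-bis}, an $\ML_\sigma$-separator exists iff no such tiling exists, establishing \TwoExpTime-hardness.

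The main obstacle is the $\muML$-gadget construction: one must replicate, in polynomial size and with only a constant-size signature, the effect achieved in Lemma~\ref{lem:aux-psi-lower} by $2n$ inductive doublings together with $2n$ fresh address-bit propositions. The natural approach is to arrange the $2^{2^n}$ bisimilar points as leaves of a doubly-exponentially deep ternary tree whose root-to-leaf branches encode $2^n$-bit addresses, and to rephrase the key separation-enforcing property (3) of Lemma~\ref{lem:aux-psi-lower} as a $\mu$-fixpoint statement, now proved by induction on fixpoint approximants rather than on the inductive structure of the formula. Correctly coupling the two sides $\Psi_n,\Psi_n'$ so that the single ``synchronization'' point in $\M'$ is still forced despite the more intricate tree structure on the $\M$-side, and matching the polynomial-sized counter gadgets with the doubly-exponential bisimulation depth, is where the bulk of the technical work is expected to lie.
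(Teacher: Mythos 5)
Your plan has a genuine gap, and it starts with the choice of source problem. Tiling a $2^{2^n}\times 2^{2^n}$ torus is \emph{not} well known to be \TwoExpTime-complete: like the $2^n\times 2^n$ version used in Theorem~\ref{thm:ml-interpolant-existence} is \NExpTime-complete, the doubly exponential version is complete for \emph{nondeterministic} doubly exponential time (guess the doubly exponential tiling and check it). Reducing from it would formally still give \TwoExpTime-hardness, but it would in fact give co-$\mathsf{N2ExpTime}$-hardness, and this collides with the paper's own upper bound: by Theorem~\ref{thm:ternary complexity} together with the reduction of Lemma~\ref{lem:craig-to-general}, signature-restricted $\ML$-separability over $\TT^d$ is \emph{in} \TwoExpTime, so a correct polynomial-time reduction of the kind you sketch would yield $\mathsf{N2ExpTime}\subseteq\TwoExpTime$. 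Hence the step you defer as ``the bulk of the technical work''---polynomial-size $\muML$-formulae $\Psi_n,\Psi_n'$ forcing $2^{2^n}$ pairwise distinguishable points all $\sigma$-bisimilar to one point, plus coordination of doubly exponentially many addressed cells through that point---cannot be carried out as envisaged (barring that collapse); it is not merely technically demanding. The concrete obstructions are visible already in the coordination formulae: a $2^n$-bit address can no longer be stored in a point's valuation over a polynomial-size signature, $\muML$ has no converse modalities with which a formula evaluated at $w$ could read an address spread along the branch above $w$, and the key property (3) of Lemma~\ref{lem:aux-psi-lower} is proved by induction on the structure of $\psi_{i+1}$ using a fresh proposition $b_i$ per doubling, none of which survives the proposed compression into a single fixpoint over a constant signature.

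The paper's proof avoids doubly exponentially many synchronized points altogether. It reduces from acceptance of exponentially space-bounded \emph{alternating} Turing machines (alternating exponential space equals \TwoExpTime) and keeps exactly the gadget of Lemma~\ref{lem:aux-psi-lower} with index $n$, i.e.\ only $2^n$ bisimilar points $w_0,\ldots,w_{2^n-1}$, each carrying its $n$-bit address in its valuation. The fixpoint is used only where $\ML$ genuinely lacks power: the conjunct $\Box^\ast\chi_1$ enforces an unbounded-depth computation tree below each $w_i$, in which every configuration is a path of $2^n$ cells, and below $w_i$ only the single cell offset $2^n-i$ is coordinated between consecutive configurations; the bisimilarity of all $w_i$ with the single point $\widehat w$ in $\M'$ then synchronizes all cells at once. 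In other words, the doubly exponential hardness is extracted from the time/alternation dimension of the machine rather than from a doubly exponential address space, which is precisely what keeps the formulae polynomial and the reduction compatible with the \TwoExpTime\ upper bound.
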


Again, we show the lower bound only for $d=3$; the proof for $d>3$ is analogous. 
The overall proof strategy is inspired by recently studied interpolant existence problems for description logics~\cite{DBLP:journals/tocl/ArtaleJMOW23}, but the details have to be adapted to the present setup. The proof is via a reduction of the word problem for
languages recognized by exponentially space bounded, alternating
Turing machines, which we recall next. 

An \emph{alternating Turing machine (ATM)} is a tuple $\ATM=(Q,\Theta,\Gamma,q_0,\Delta)$ where
$Q=Q_{\exists}\uplus Q_{\forall}$
is a finite
set of states partitioned into \emph{existential
states}~$Q_{\exists}$ and \emph{universal states}~$Q_{\forall}$.
Further,~$\Theta$ is the input alphabet and $\Gamma$ is the tape
alphabet that contains a \emph{blank symbol} $\textsf{blank} \notin \Theta$,
$q_0\in Q_{\forall}$ is the \emph{initial state}, and $\Delta\subseteq
Q\times \Gamma\times Q\times \Gamma \times \{L,R\}$ is the
\emph{transition relation}.  We assume without loss of generality that
the set $\Delta(q,a):=\{(q',a',M)\mid (q,a,q',a',M)\in\Delta\}$
contains exactly zero or two transitions for every $q\in Q_\exists\cup Q_\forall$ and $a \in
\Gamma$. 
Moreover, the state $q'$ in the transition $(q,a,q',a',M)$ must be from $Q_\forall$ if $q \in
Q_\exists$ and from $Q_\exists$ otherwise, that is, existential and
universal states alternate. A
\emph{configuration} of an ATM is a word $wqw'$ with
\mbox{$w,w'\in\Gamma^*$} and $q\in Q$. 
We say that $wqw'$ is \emph{existential} if~$q$ is, and likewise for
\emph{universal}.  \emph{Successor configurations} are defined in the
usual way.  Note that every configuration has exactly zero or two
successor configurations. 

We define next when an ATM accepts an input. Note that according to our definition an ATM does not have accepting or rejecting states. Instead, acceptance is defined in terms of the existence of infinite computations, which is slightly non-standard but more convenient for our purposes (and easily seen to be equivalent to the standard definition). A \emph{computation tree} of an ATM \ATM on input $w$ is a (possibly infinite) tree whose nodes are labeled with
configurations of \ATM such that
\begin{itemize}

  \item the root is labeled with the initial configuration
    $q_0w$;

  \item if a node is labeled with an existential configuration 
    $wqw'$, then it has a single successor which is labeled 
    with a successor configuration of~$wqw'$;

  \item if a node is labeled with a universal configuration $wqw'$,
    then it has two successors which are labeled with the two
    successor configurations of~$wqw'$.

\end{itemize}
An ATM \ATM \emph{accepts} an input $w$ if there is a
computation tree of \ATM on $w$ in which every branch is
infinite. 
An ATM is called \emph{$f$-space bounded} for some function $f:\NN\to\NN$ if in every configuration of every computation tree of \ATM on an input of length $n$ the non-blank part of the tape is of
length at most $f(n)$. The class $\AExpSpace$ is the class of all languages recognizable by some $2^{p(n)}$-space bounded ATM, for a polynomial $p$. It is well-known that $\AExpSpace=\TwoExpTime$, which means that it suffices to reduce the word problem for languages recognized by $2^{p(n)}$-space bounded ATMs~\cite{DBLP:journals/jacm/ChandraKS81}.

For the reduction, let $\ATM=(Q,\Theta,\Gamma,q_0,\Delta)$ be any $2^{p(n)}$-space bounded ATM, for some fixed polynomial $p$. 
Let further $w = a_0\ldots a_{n_0-1}$ be an input of length $n_0$, and set $n=p(n_0)$. We will provide formulae $\varphi_n,\varphi'_n$ and signature $\sigma$ such that 
\[\text{$\varphi_n,\varphi'_n$ are jointly $\bis^m_\sigma$-consistent
for every $m\in \NN$ \quad iff\quad \ATM accepts $w$.}\] 
This suffices by Equivalence~\eqref{eq:non-sep vs n-bis}. Formulae $\varphi_n,\varphi'_n$ use the family of formulae $\psi_i,\psi_i'$ defined in
Lemma~\ref{lem:aux-psi-lower}, and are defined as
\begin{align*}
  \varphi_n & = \psi_n \wedge \Box^n \chi \\
  \varphi_n' & = \psi_n'
\end{align*}
for a formula $\chi$ to be defined below. The signature
$\sigma$ is given by
\[\sigma=\{a,z,p_\exists^1,p_\exists^2,p_\forall\}\cup \{c_\alpha\mid \alpha \in \Gamma\cup (Q\times\Gamma)\}.\]
Here, proposition $a$ is the one used in the gadget formulae $\psi_n,\psi_n'$, $z$ is an auxiliary
proposition whose meaning becomes clear below, propositions $p_\exists^*, p_\forall$ are responsible for marking configurations existential and universal, respectively, and finally each $c_\alpha$ describes a possible content of a single cell in a configuration: it is just a letter or a letter together with a state, also indicating the head position of the configuration. 

To explain the main idea behind formula $\chi$ in $\varphi_n$, consider models $\M,\M'$ witnessing joint $\bis_\sigma^m$ consistency
of $\varphi_n,\varphi_n'$ for some sufficiently large $m\geq n$. By
Lemma~\ref{lem:aux-psi-lower}, there are points $w_0,\ldots,w_{2^n-1}$ in depth of $n$ $\M$
and $\widehat w$ in $\M'$ which are linked by a $(\sigma,m-n)$-bisimulation. Additionally, they all satisfy $\chi$, due to the shape of $\varphi_n$. The job of $\chi$ is to enforce below each $w_i$ a computation tree of \ATM on input $w$, as illustrated in Figure~\ref{fig:computation-tree}. Using auxiliary non-$\sigma$-propositions, it is routine to give a $\muML$-formula which:
\begin{itemize}

  \item enforces the skeleton of a computation tree for $\ATM$, in
    which each configuration is modeled by a path of length $2^n$,
    
  \item enforces that each point of the skeleton satisfies exactly one of the $\sigma$-propositions $c_\alpha$, and

   \item enforces that universal configurations have two successors and that universal and existential configurations alternate. 

\end{itemize}
At this point, it is a good moment to mention that the propositions $p_\exists^1,p_\exists^2$ not only mark that the current configuration is existential, but also that they are the first or respectively second successor of a universal configuration, see beginning of the two existential configurations in Figure~\ref{fig:computation-tree}.

The challenge is to enforce that all successor configurations are consistent with $\ATM$'s transition relation $\Delta$, and we exploit here the bisimilarity of the $w_i$ to achieve this. The idea is that, below each $w_i$ we coordinate only the $(2^n-i)$-th position of each configuration with its successor configuration(s) in the computation tree, as depicted in Figure~\ref{fig:computation-tree}.\footnote{The figure contains some additional labeling that will be explained in more detail later.} This is possible using auxiliary non-$\sigma$-propositions and due to the fact that we can access the index $i$ of $w_i$ via propositions $b_0,\ldots,b_{n-1}$ mentioned in Lemma~\ref{lem:aux-psi-lower}. Note that this provides sufficient coordination, since, due to the bisimilarity of the $w_i$, \emph{all} the positions will be synchronized. As mentioned above, this overall strategy is similar to one employed in~\cite{DBLP:journals/tocl/ArtaleJMOW23}.

%
%
%
%
%
%
%
\begin{figure}[t]
  \centering
	\begin{tikzpicture}
		\tikzset{
			dot/.style = {draw, fill=black, circle, inner
			sep=0pt, outer sep=1pt, minimum size=3pt},
			wdot/.style = {draw, fill=white, circle, inner
			sep=0pt, outer sep=1pt, minimum size=3pt}
		}
		

\draw (-1,0.75) node[label=$T_{i}$] (Ti) {};
		
\draw (1.2,0.75) node[label={$\text{universal conf.}$}] (Ti) {};
\draw (5,-.3) node[label={$\text{existential conf.}$}] (Ti) {};
\draw (8.2,-.3) node[label={$\text{universal conf.}$}] (Ti) {};


\draw (0,0) node[wdot, label=south:$p_\forall$, label=west:$w_i$] (a2) {};
\draw (0.75,0) node[dot, label=$$] (a3) {};
\draw (1.125,-0.275) node[label=$\ldots$] (dot1) {};
\draw (1.5,0) node[dot, label=south:$\!\!\!\!\!\!2^n -i$] (f) {};
\draw (1.875,-0.275) node[label=$\ldots$] (dot2) {};
\draw (2.25,0) node[dot, label=$$] (a4) {};
\draw (3,0) node[dot, label=$$] (a5) {};

\draw (3.75,-1) node[wdot, label=north:$p_\exists^{2}$] (aa1) {};
\draw (4.5,-1) node[dot, label=$$] (ii1) {};
\draw (4.875,-1.25) node[label=$\ldots$] (dota1) {};
\draw (5.25,-1) node[dot, label=south:$2^n-i$] (aa2) {};
\draw (5.625,-1.25) node[label=$\ldots$] (dota2) {};
\draw (6,-1) node[dot,label=$$] (aa3) {};
\draw (6.75,-1) node[wdot,label=north:$p_\forall$] (aa4) {};
\draw (7.5,-1) node[dot, label=$$] (kk1) {};
\draw (7.875,-1.25) node[label=$\ldots$] (dota3) {};
\draw (8.25,-1) node[dot, label=south:$2^n-i$] (abk1) {};
\draw (8.625,-1.25) node[label=$\ldots$] (dota4) {};
\draw (9,-1) node[dot, label=$$] (aa5) {};
\draw (9.75,-1) node[dot, label=$$] (aa6) {};
\draw (10.5,-1.5) node[wdot, label=$$] (aaa1) {};
\draw (10.875,-1.775) node[label=$\ldots$] (dotaaf) {};
\draw (10.5,-0.5) node[wdot, label=$$] (aab1) {};
\draw (10.875,-0.75) node[label=$\ldots$] (dotabf) {};

\draw (3.75,1) node[wdot, label=south:$p_\exists^{1}$] (ab1) {};
\draw (4.5,1) node[dot, label=$$] (ii2) {};
\draw (4.875,0.725) node[label=$\ldots$] (dotb1) {};
\draw (5.25,1) node[dot, label=south:$2^n-i$] (ab2) {};
\draw (5.625,0.725) node[label=$\ldots$] (dotb2) {};
\draw (6,1) node[dot,label=$$] (ab3) {};
\draw (6.75,1) node[wdot,label=south:$p_\forall$] (ab4) {};
\draw (7.5,1) node[dot, label=$$] (kk2) {};
\draw (7.875,0.725) node[label=$\ldots$] (dotb3) {};
\draw (8.25,1) node[dot,label=south:$2^n-i$] (abk2) {};
\draw (8.625,0.725) node[label=$\ldots$] (dotb4) {};
\draw (9,1) node[dot, label=$$] (ab5) {};
\draw (9.75,1) node[dot, label=$$] (ab6) {};
\draw (10.5,0.5) node[wdot, label=$$] (aba1) {};
\draw (10.875,0.225) node[label=$\ldots$] (dotbaf) {};
\draw (10.5,1.5) node[wdot, label=$$] (abb1) {};
\draw (10.875,1.25) node[label=$\ldots$] (dotbbf) {};


\draw[->, >=stealth] (a2) edge (a3);
\draw[->, >=stealth] (a4) edge (a5) node[label={[shift={(0.35,0.2)}]:$$}] {};
\draw[->, >=stealth] (a5) edge (aa1) node[label={[shift={(0.35,0.2)}]:$$}] {};
\draw[->, >=stealth] (a5) edge (ab1) node[label={[shift={(0.35,0.2)}]:$$}] {};
\draw[->, >=stealth] (aa1) edge (ii1) node[label={[shift={(0.35,0.2)}]:$$}] {};
%
\draw[->, >=stealth] (aa3) edge (aa4) node[label={[shift={(0.35,0.2)}]:$$}] {};
\draw[->, >=stealth] (aa4) edge (kk1) node[label={[shift={(0.35,0.2)}]:$$}] {};
\draw[->, >=stealth] (aa5) edge (aa6) node[label={[shift={(0.35,0.2)}]:$$}] {};
\draw[->, >=stealth] (aa6) edge (aaa1) node[label={[shift={(0.35,0.2)}]:$$}] {};
\draw[->, >=stealth] (aa6) edge (aab1) node[label={[shift={(0.35,0.2)}]:$$}] {};

\draw[->, >=stealth] (ab1) edge (ii2) node[label={[shift={(0.35,0.2)}]:$$}] {};
%
\draw[->, >=stealth] (ab3) edge (ab4) node[label={[shift={(0.35,0.2)}]:$$}] {};
\draw[->, >=stealth] (ab4) edge (kk2) node[label={[shift={(0.35,0.2)}]:$$}] {};
\draw[->, >=stealth] (ab5) edge (ab6) node[label={[shift={(0.35,0.2)}]:$$}] {};
\draw[->, >=stealth] (ab6) edge (aba1) node[label={[shift={(0.35,0.2)}]:$$}] {};
\draw[->, >=stealth] (ab6) edge (abb1) node[label={[shift={(0.35,0.2)}]:$$}] {};

			
\path[black, dashed, bend right] (f) edge (aa2);
\draw (2.5,0.15) node[label=$$] () {};
\path[black, dashed, bend right] (aa2) edge (abk1);
\draw (2.5,0.15) node[label=$$] () {};

\path[black, dashed, bend left] (f) edge (ab2);
\draw (2.5,0.15) node[label=$$] () {};
\path[black, dashed, bend left] (ab2) edge (abk2);
\draw (2.5,0.15) node[label=$$] () {};

\end{tikzpicture}	
		
\caption{Model of a computation tree of $\ATM$, enforced by formula $\chi$.}
\label{fig:computation-tree}
	
\end{figure}



We will now provide $\chi$ more concretely. It is a conjunction $\chi
= \chi_0\wedge \Box^*\chi_1$, where $\Box^*\chi_1$ is an abbreviation for the $\muML$-formula $\nu x. (\chi_1\wedge \Box x)$ expressing that $\chi_1$ holds in the entire subtree below the current point. Among others, $\chi$ uses propositions $c_0,\ldots,c_{n-1}$ to implement an exponential counter called the \emph{C-Counter} to address the cells in all configurations, and propositions $p_\forall,p_{\exists}^1,p_{\exists}^2$ to mark universal and existential configurations. Formula $\chi_0$ initializes the
C-counter to $0$ and marks the first configuration as universal (recall that $q_0\in
Q_\forall$) using proposition $p_\forall$ as follows:
\[\chi_0 = \neg c_0\wedge\ldots\wedge \neg c_{n-1}\wedge p_\forall.\] 
Formula $\chi_1$ is a conjunction of several formulae in turn. The more standard ones are detailed in the following list. We use $(C=i)$, or similar expressions, as an abbreviation for the
Boolean combination of the propositions $c_0,\ldots,c_{n-1}$ that encodes value
$i$.
\begin{itemize}

  \item One conjunct is responsible for incrementing the C-counter modulo $2^n$ along edges in models. This is entirely standard, so we refrain from detailing it.

  \item The conjunct $\Diamond\top$ enforces infinite trees.

  \item The following conjuncts enforce the structure of a computation tree of \ATM:
\begin{align}
  (C<2^n-1) \wedge p_\forall & \to \Box p_{\forall} \label{eq:startone}\\
  (C<2^n-1) \wedge p_\exists^i & \to \Box p_{\exists}^i && i\in\{1,2\} \\
  (C=2^n-1) \wedge p_\forall & \to \Box(p_{\exists}^1\vee p_{\exists}^2) \\
  (C=2^n-1) \wedge p_\exists^i& \to \Box p_\forall && i\in\{1,2\} \\
  (C=2^n-1) \wedge p_{\forall} & \to \Diamond z\wedge \Diamond \neg z \label{eq:endone}
\end{align}
These implications enforce that all points which represent a configuration
satisfy one of $p_{\forall},p_{\exists}^1,p_{\exists}^2$
indicating the kind of configuration and, if existential, also a choice of the transition
function. The symbol $z\in \Sigma$ enforces the branching.
We refer the reader once again to Figure~\ref{fig:computation-tree} for an illustration of the enforced structure. Note that $\circ$ marks the beginning of a configuration, that is, where the C-counter is $0$.

\item The initial configuration on input $w=a_0\ldots a_{n-1}$ is enforced
by 
\[c_{q_0,a_0}\wedge \Diamond (c_{a_1}\wedge \Diamond (c_{a_2} \wedge (
\ldots \wedge \Diamond (c_{a_{n-1}}\wedge \Diamond
\chi_\textit{blank})\ldots))),\]
where $\chi_{\textit{blank}}$ enforces label $c_{\textsf{blank}}$
until the end of the configuration. Again, this is entirely standard so we omit the details.

\item One conjunct enforces that each point is labeled with exactly one of the propositions $c_\alpha$, $\alpha\in \Gamma\cup (Q\times \Gamma)$.

\end{itemize}

The remaining conjuncts of $\chi_1$ are responsible for the
coordination of the successor configurations. Recall that the goal is to coordinate in the subtree below $w_i$ the $2^n-i$-th position of each configuration with the
$2^n-i$-th position of its successor configuration(s). As announced, we exploit the propositions $b_0,\ldots,b_{n-1}$ from Lemma~\ref{lem:aux-psi-lower} for this purpose, assuming that in each $w_i$ the value encoded by these propositions is exactly $i$. We refer to the counter encoded by propositions $b_0,\ldots,b_{n-1}$ as the \emph{B-counter}. Analogously to the C-counter, the B-counter is incremented modulo $2^n$ along edges in models (enforced using another conjunct in $\chi_1$).



To coordinate successor configurations, we associate with $\ATM$
functions $f_i$, $i\in \{1,2\}$ that map the content of three
consecutive cells of a configuration to the content of the middle cell
in the $i$-th successor configuration (assuming an arbitrary order on
the set $\Delta(q,a)$, for all $q,a$). In what follows, we ignore the corner cases that occur
at the border of configurations; they can be treated in a similar way. 
Clearly, for each possible triple
$(\alpha_1,\alpha_2,\alpha_3)\in (\Gamma\cup(Q\times \Gamma))^3$, the
$\ML$-formula $\varphi_{\alpha_1,\alpha_2,\alpha_3}=c_{\alpha_1}\wedge
\Diamond (c_{\alpha_2}\wedge \Diamond c_{\alpha_3})$ is true at a
point $v$ of the computation tree iff $v$ is labeled with
$c_{\alpha_1}$, a successor $u$ of $v$ is labeled with
$c_{\alpha_2}$, and a successor $t$ of $u$ is labeled with
$c_{\alpha_3}$. In each configuration, we synchronize points
with B-counter $0$ by including for every $(\alpha_1,\alpha_2,\alpha_3)$ and $i\in\{1,2\}$ the following
implications: 
\begin{align}
  (B=2^n-1) \wedge (C<2^n-2) \wedge \varphi_{\alpha_1,\alpha_2,\alpha_3}
  \wedge p_{\forall} & \to
  \Box q^1_{f_1(\alpha_1,\alpha_2,\alpha_3)}\wedge \Box
  q^2_{f_2(\alpha_1,\alpha_2,\alpha_3)}\label{eq:starttwo}\\
  (B=2^n-1) \wedge (C<2^n-2) \wedge
  \varphi_{\alpha_1,\alpha_2,\alpha_3} \wedge  p_\exists^i
  & \to \Box q^i_{f_i(\alpha_1,\alpha_2,\alpha_3)}\label{eq:endtwo}
\end{align}
At this point, the importance of the superscript in
$p_\exists^\ast$ becomes apparent: since different cells of a
configuration are synchronized
in trees below different $w_k$ the superscript makes sure that all trees
rely on the same choice for existential configurations. Propositions
$q^i_{\alpha}$ are used as markers (not in $\sigma$) and
are propagated for $2^n$ steps, exploiting the C-counter.
The superscript $i\in\{1,2\}$ determines the successor configuration
that the symbol is referring to. After crossing the end of a
configuration, the symbol $\alpha$ is propagated using propositions
$q_{\alpha}'$ (the superscript is not needed anymore because the
branching happens at the end of the configuration, based on $z$). This is achieved by adding the following implications as conjuncts of $\chi_1$:
\begin{align}
  (C<2^n-1) \wedge q_\alpha^i & \to \Box q_{\alpha}^i \label{eq:startthree}\\
  (C=2^n-1) \wedge p_\forall \wedge q_\alpha^1 & \to \Box(z\to
  q'_\alpha)\\
  (C=2^n-1) \wedge p_\forall \wedge q_\alpha^2 & \to \Box
  (\neg z\to q'_\alpha)\\
  (C=2^n-1) \wedge p_\exists^i \wedge q_\alpha^i & \to \Box
  q'_\alpha && i\in\{1,2\}\\
  (B<2^n-1) \wedge q'_\alpha & \to \Box q'_\alpha \\
  (B=2^n-1) \wedge q'_\alpha & \to \Box q_\alpha\label{eq:endthree}
\end{align}
 
For those $(q,a)$ with $\Delta(q,a)=\emptyset$, we add the conjunct
\[\neg c_{q,a}.\]

It remains to establish correctness of the reduction.

\begin{clm}\label{claim:sim1}
  The following conditions are equivalent:
  \begin{enumerate} 

    \item $\ATM$ accepts $w$;

    \item $\varphi_n,\varphi_n'$ are jointly $\bis_\sigma^m$-consistent
      for every $m\in\NN$.

  \end{enumerate} 		

\end{clm}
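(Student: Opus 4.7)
The plan is to prove the two directions of the equivalence separately, exploiting the scaffolding set up by $\psi_n, \psi_n'$ on top and by the formula $\chi$ inside the $w_i$/$\widehat w$-subtrees.

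For the direction \emph{(1)}~$\Rightarrow$~\emph{(2)}, I would take an accepting computation tree $T$ of $\ATM$ on $w$ (with every branch infinite) and construct explicit witness models. The model $\M \models \varphi_n$ is obtained by taking the $\psi_n$-gadget on top and attaching, below each point $w_i$ at depth $n$, a faithful encoding of $T$: configurations realised as paths of length $2^n$, cells labelled with the appropriate $c_\alpha$, universal configurations branching at their end via $z,\neg z$, existential configurations marked by $p_\exists^1$ or $p_\exists^2$ according to which transition they realise, and the $B$-counter initialised to $i$ inside $w_i$. The auxiliary propagation markers $q^i_\alpha, q'_\alpha$ can be set coherently precisely because $T$ respects $\Delta$, and $\Diamond\top$ is satisfied since every branch of $T$ is infinite. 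The model $\M' \models \varphi_n'$ is built analogously from the $\psi_n'$-gadget, attaching a single copy of the same encoding below $\widehat w$. Joint $\bis_\sigma^m$-consistency for every $m$ then follows by combining the gadget-level bisimulation given by Lemma~\ref{lem:aux-psi-lower} with the observation that the $B$-counter and $q$-markers lie outside $\sigma$, so the $\sigma$-reducts of all attached subtrees (below $w_0,\ldots,w_{2^n-1}$ and $\widehat w$) are pairwise isomorphic.

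For the direction \emph{(2)}~$\Rightarrow$~\emph{(1)}, I would fix a large $m$, take witnesses $\M_m \models \varphi_n,\ \M'_m \models \varphi_n'$ with $\M_m \bis_\sigma^m \M'_m$, apply Lemma~\ref{lem:aux-psi-lower} to extract bisimulation-linked $w_0^m,\ldots,w_{2^n-1}^m$ in $\M_m$ and $\widehat w^m$ in $\M'_m$, and let $U_m$ be the subtree of $\M'_m$ rooted at $\widehat w^m$. Since each $w_i^m$ sits at depth $n$ it satisfies $\chi$, hence its subtree satisfies $\chi_0 \wedge \Box^*\chi_1$; by $\sigma$-bisimilarity this structure is shared with $U_m$, which therefore starts with the initial configuration $q_0 w$, has the correct skeleton forced by \eqref{eq:startone}--\eqref{eq:endone}, and is infinite thanks to $\Diamond\top$. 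The crucial point is cell-correctness: the implications \eqref{eq:starttwo}--\eqref{eq:endthree} fire in the subtree below $w_i^m$ exactly when the local $B$-counter reaches $2^n{-}1$, i.e.\ on the cell at position $2^n{-}1{-}i$ of each configuration, forcing the $\sigma$-label $c_\alpha$ of the corresponding cell in the successor configuration(s) to agree with $\Delta$; by bisimilarity of $w_i^m$ with $\widehat w^m$ these $c_\alpha$-labels also appear in $U_m$. Varying $i$ over $\{0,\ldots,2^n-1\}$ covers every cell, so $U_m$ encodes a valid computation tree of $\ATM$ on $w$ up to depth $m{-}n$. Letting $m \to \infty$ and applying K\"onig's lemma to the family of finite partial trees (of bounded local branching) yields an actually infinite accepting computation tree, establishing that $\ATM$ accepts $w$.

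The main obstacle is the synchronization step in the backward direction. One must verify carefully that (a) the constraints enforced via the auxiliary non-$\sigma$ markers $q^i_\alpha, q'_\alpha$ below the various $w_i^m$ really translate, through $\sigma$-bisimilarity, into coincident $c_\alpha$-labels in the shared target $U_m$; (b) the per-cell coordinations performed in the different subtrees below $w_0^m,\ldots,w_{2^n-1}^m$ commit to the same existential successor configuration, which is exactly the role played by the superscripts on $p_\exists^1$ and $p_\exists^2$; and (c) the boundary cells of configurations---where the three-cell test $\varphi_{\alpha_1,\alpha_2,\alpha_3}$ would straddle a configuration boundary---are handled by analogous but omitted conjuncts. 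The forward direction is mostly bookkeeping: the only delicate point is checking that $B$-counter offsets in different subtrees do not introduce inconsistencies, which they cannot because $T$ is globally consistent with $\Delta$.
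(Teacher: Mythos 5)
Your proposal is correct, and its forward direction is essentially the paper's: attach an encoding of an all-infinite computation tree below each $w_i$ with the B-counter offset by $i$ (and the auxiliary markers set minimally), attach one purely-$\sigma$ copy below $\widehat w$, and observe that since the B-counter and the $q$-markers lie outside $\sigma$ the whole thing is $\sigma$-bisimilar; your phrase ``the gadget-level bisimulation given by Lemma~\ref{lem:aux-psi-lower}'' is slightly off, since that lemma only extracts consequences of a given bisimulation, but the required bisimulation on the gadget part is easy to exhibit directly, as the paper does. The backward direction is where you genuinely diverge. The paper first turns the hypothesis ``joint $\bis_\sigma^m$-consistent for every $m$'' into a single pair of fully $\sigma$-bisimilar models of $\varphi_n,\varphi_n'$ via a ``skipping bisimulations'' limit construction (this is where it needs the remark that $\varphi_n,\varphi_n'$ contain only greatest fixpoints, so the limit models still satisfy them), and only then applies Lemma~\ref{lem:aux-psi-lower} once and reads off an entire computation tree. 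You instead work with each finite $m$ separately, extract from the witnesses a correct computation-tree prefix of depth roughly $m-n$ (using that the $\chi$-constraints hold unboundedly below each $w_i^m$, while the cross-synchronization through $\widehat w^m$ is only guaranteed up to the depth of the $(\sigma,m)$-bisimulation), and then take a K\"onig/inverse-limit of these finitely many prefixes at the level of computation trees. This works precisely because acceptance here is the safety-style condition ``all branches infinite,'' which is preserved under such limits; what it buys is that you never need to build limit models or argue that $\mu$-calculus formulae survive the limit, at the price of more careful depth bookkeeping in the extraction. The synchronization issues you flag as the ``main obstacle'' (agreement of $c_\alpha$-labels through the shared target, the role of the superscripts on $p_\exists^1,p_\exists^2$, and the configuration-boundary cases) are exactly the points the paper also treats only in outline, so your level of rigor there matches the original.
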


\begin{proof} ``1 $\Rightarrow$ 2''. If $M$ accepts $w$, there is a
  computation tree of $\ATM$ on $w$. We construct two models
  $\M\models\varphi_n$ and $\M'\models \varphi_n'$ such that $\M\bis_\sigma
  \M'$, which implies joint $\bis_\sigma^m$-consistency for every $m$.
  Let $\widehat\M$ be the infinite tree-shaped
  model that represents the computation tree of $\ATM$ on $w$ as
  described above, that is, configurations are represented by
  sequences of $2^n$ points and labeled by
  $p_\forall,p_\exists^1,p_\exists^2$ depending on whether the
  configuration is universal or existential, and in the latter case
  the superscript indicates which choice has been made for the
  existential state. Finally, the first point of the first successor
  configuration of a universal configuration is labeled with
  $z$. Observe that $\widehat\M$
  interprets only the symbols in $\sigma$ as non-empty. Now, we
  obtain models $\M_k$, $k<2^n$ from $\widehat\M$ by interpreting
  non-$\sigma$-symbols as follows: 
  \begin{itemize}

    \item the C-counter starts at $0$ at the root and counts modulo
      $2^n$ along each path starting in the root;

    \item the B-counter starts at $k$ at the root and counts modulo
      $2^n$ along each path starting in the root;

    \item the auxiliary propositions of the shape $q_\alpha^i$ and
      $q_\alpha'$ are interpreted in a minimal way as to satisfy
      implications~\eqref{eq:startone}--\eqref{eq:endthree}. Note that, since these implications are essentially Horn formulae, there is a unique result.

  \end{itemize}
  Now, $\M'$ is defined as follows:

  \begin{itemize}

    \item start with a path of length $n$ in which each node satisfies
      $a$,

    \item add one successor satisfying $\neg a,\neg c$ and one successor
      satisfying $\neg a,c$ to each node in the path, and

    \item at the end $\widehat w$ of the path, attach a copy of $\widehat M$.

  \end{itemize}
  Next, obtain $\M$ from the $\M_k$ as follows:
  \begin{itemize}

    \item Start with a full binary tree of depth $n$,
    
    \item add one successor satisfying $\neg a$ to each node in the
      tree, 

    \item interpret propositions $b_0,\ldots,b_{n-1}$
      in a way such that the B-counter values of the $2^n$ leaves
      $w_0,\ldots,w_{2^n-1}$ (of the original binary tree) range from
      $0$ to $2^{n}-1$, and

    \item attach at each leaf $w_k$ the tree $\M_k$.

  \end{itemize}
  It can be verified that the reflexive, transitive, and symmetric
  closure of 
  \begin{itemize}

    \item all pairs $(u,v)$ for points $u$ satisfying $a$ in $\M$ at level $i$
      and points $v$ satisfying $a$ in $\M'$ at level $i$,

    \item all pairs $(u,v)$ for points $u$ satisfying $\neg a$ in $\M$ at level $i$
      and points $v$ satisfying $\neg a$ in $\M'$ at level $i$, and

    \item all pairs $(v,v')$, with $v$ in $\widehat\M$ and $v'$ a
      copy of $v$ in some tree $\M_{k}$

  \end{itemize}
  witnesses $\M\bis_\sigma\M'$.

  ``2 $\Rightarrow$ 1''. Suppose $\varphi_n,\varphi_n'$ are jointly
  $\bis_\sigma^m$-consistent for every $m\in \NN$. We aim to construct models $\M,\M'$
  with $\M\models\varphi$, $\M'\models\varphi'$, and $\M\bis_\sigma\M'$. Let us
  assume for a moment we have such $\M,\M'$. Since $\M\models\psi_n$ and
  $\M'\models\psi_n'$, by Lemma~\ref{lem:aux-psi-lower}, there are pairwise
  $\sigma$-bisimilar points $w_0,\ldots,w_{2^n-1}$ in depth $n$. That is, the
  trees starting at $w_0,\ldots,w_{2^n-1}$ are bisimilar. By the shape of
  $\varphi,\varphi'$, these trees are additionally models of
  $\chi_0\wedge\Box^*\chi_1$. It follows that in the tree below $w_k$, the cell
  contents of the $(2^n-k)$-th cell is coordinated, between any two consecutive
  configurations. Overall, all cell contents are coordinated and thus all trees
  below some $w_k$ contain a computation tree of $\ATM$ on input $w$ (which is
  solely represented with $\sigma$-symbols). Thus $\ATM$ accepts $w$.
  
  It remains to construct $\M,\M'$. We use the standard technique of \enquote{skipping
  bisimulations} exploiting that we work over models of finite outdegree. Let
  $\M_m,\M_m'$ be witnesses for joint $\bis_\sigma^m$-consistency of
  $\varphi_n,\varphi_n'$, for every $m\in\mathbb{N}$, and let $Z_m$ be a witnessing bisimulation. We can assume without loss of generality that all $\M_m,\M_m'$ are full ternary trees and that propositions outside $\sig(\varphi_n)\cup\sig(\varphi_n')$ are globally false. Hence, there are only finitely many possible valuations of points in $\M_m,\M_m'$. 

  We will construct a sequence of models $\N_m,\N_m'$, $m\in\mathbb{N}$ and $(\sigma,m)$-bisimulations $Z_m'$ such that for all $m\in \mathbb{N}$ and all $i\leq m$:
  \begin{enumerate}
    \item[($\ast$)] $\N_i$ and $N_m$, $\N_i'$ and $\N_m'$, and $Z_i$ and $Z_m$ coincide when restricted to the first $i$ levels. 
  \end{enumerate}
  We proceed by induction. Since the roots of the $\M_m,\M'_m$ may only have one of finitely many valuations, there is a pair of valuations that occurs infinitely often in the roots. We \enquote{skip} all those $\M_m,\M_m'$ which do not have the fixed valuations in their roots, ending up with a sequence that satisfies ($\ast$) for all $m\in\mathbb{N}$ and $i=0$.
  
  In the inductive step, suppose we have constructed a sequence that satisfies ($\ast$) for every $m$ and all $i\leq i_0$ for some $i_0$. Hence, for every $j\geq i_0$, all points in level $i_0$ of $\N_j,\N_j'$ have the same valuations as in $\N_{i_0},\N_{i_0}'$. Moreover, $uZ_jv$ iff $uZ_mv$ for all points $u,v$ in level $i_0$, and $j\geq i_0$. Since there are only finitely many points in level $i$ and each point has bounded outdegree, there are only finitely many possibilities of choosing the valuations of and bisimulations among the (finitely many) points in level $i_0+1$. Hence, one such choice is realized infinitely often, and we can skip all $\N_m,\N_m'$ with $m>i_0$ that do not realize this choice. After this modification, the sequence satisfies ($\ast$) for all $m$ and all $i\leq i_0+1$.

  Let $\widehat\N_{m},\widehat\N_{m}'$ denote the $m$-prefixes of $\N_m,\N_m'$, respectively. (Using our notation, this means $\widehat\N_m=(\N_m)_{|m}$ and $\widehat\N'_m=(\N'_m)_{|m}$.) Let further be $\widehat Z_m$ the restriction of $Z_m$ to these $m$-prefixes.
  It is tedious but not difficult to verify that
  \[\M=\bigcup_{m\in\mathbb{N}} {\widehat\N_{m}}, \quad \M'=\bigcup_{m\in\mathbb{N}} {\widehat\N_{m}'},\quad\text{and}\quad Z=\bigcup_{m\in\mathbb{N}}\widehat Z_{m}\]
  satisfy $\M\bis_\sigma\M'$, $\M\models\varphi_n$, and $\M'\models\varphi_n'$. The latter rely on the specific shape of $\varphi_n,\varphi'_n$ which contain only a largest fixpoint (in form of the $\Box^*$).
\end{proof}

It remains to show that signature-restricted $\ML$-separability reduces to $\ML$-separability; we show it only over classes $\TT^d$, for $d\geq 2$, but the reduction can easily be modified to work also over the class of all models. 
\begin{lem}\label{lem:craig-to-general}
  For any $d\geq 2$, there is a polynomial time reduction of 
  signature-restricted $\ML$-separability of $\muML$-formulae over $\TT^d$ to $\ML$-separability of $\muML$-formulae over $\TT^d$.
\end{lem}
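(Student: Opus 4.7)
The plan is to use the canonical \emph{renaming reduction}. Given an instance $(\varphi,\varphi',\sigma)$ of signature-restricted $\ML$-separability, let $T = (\sig(\varphi) \cup \sig(\varphi')) \setminus \sigma$ and, for each $p \in T$, introduce two pairwise disjoint fresh propositions $p^\ell$ and $p^r$. Define $\varphi^*$ by replacing every $p \in T$ in $\varphi$ with $p^\ell$, and $\varphi'^*$ by replacing every $p \in T$ in $\varphi'$ with $p^r$. The map $(\varphi,\varphi',\sigma) \mapsto (\varphi^*,\varphi'^*)$ is clearly computable in polynomial time, and by construction $\sig(\varphi^*) \cap \sig(\varphi'^*) \subseteq \sigma$. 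Correctness amounts to showing that $(\varphi,\varphi')$ admits an $\ML_\sigma$-separator over $\TT^d$ iff $(\varphi^*,\varphi'^*)$ admits an $\ML$-separator over $\TT^d$.

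The forward direction is immediate: any $\ML_\sigma$-separator $\psi$ of $(\varphi,\varphi')$ uses only $\sigma$-propositions, which are untouched by the renaming. Every model of $\varphi^*$ (respectively $\varphi'^*$) agrees on its $\sigma$-reduct with a model of $\varphi$ (respectively $\varphi'$) obtained by inverse renaming, so $\psi$ transfers to separate $(\varphi^*,\varphi'^*)$.

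For the converse I argue the contrapositive via Equivalence~\eqref{eq:non-sep vs n-bis}. If no $\ML_\sigma$-separator exists for $(\varphi,\varphi')$, then for every $n \in \NN$ there are $\M \models \varphi$, $\N \models \varphi'$ in $\TT^d$ related by a $(\sigma,n)$-bisimulation $Z$, and we must construct $\M^* \models \varphi^*$, $\N^* \models \varphi'^*$ in $\TT^d$ related by a $(\sigma^*,n)$-bisimulation, where $\sigma^* = \sig(\varphi^*) \cup \sig(\varphi'^*)$. We take $\M^*$ to be $\M$ with $p \mapsto p^\ell$ applied, and $\N^*$ to be $\N$ with $p \mapsto p^r$ applied; the remaining fresh propositions, namely $p^r$ in $\M^*$ and $p^\ell$ in $\N^*$, are set by transferring valuations along $Z$ from chosen partner points (e.g.\ $p^\ell$ at $v \in N^*$ copies the $p$-value of some chosen $u \in M$ with $u Z v$, and symmetrically for $p^r$). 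Crucially, since $p^\ell \notin \sig(\varphi'^*)$ and $p^r \notin \sig(\varphi^*)$, these added valuations do not affect $\M^* \models \varphi^*$ or $\N^* \models \varphi'^*$.

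The main obstacle is that the valuation transfer must be consistent when $Z$ is not functional: if $v \in N$ has two $Z$-partners $u_1,u_2 \in M$ with different $p$-values, no single valuation of $p^\ell$ at $v$ can match both via $Z$. For $d = 2$, Proposition~\ref{prop:bin bis=iso} circumvents this by turning $Z$ into a $\sigma$-isomorphism on the $n$-prefix after a suitable tree completion, so the transfer is trivially well-defined. For $d \geq 3$ the resolution is more delicate: one refines $Z$ layer by layer to a functional sub-bisimulation $Z^\star \subseteq Z$ obtained via a bipartite matching between children of related nodes, duplicating bisimilar subtrees where necessary; this preserves $\muML$-satisfaction by bisimulation invariance, and can be arranged to stay within the outdegree bound $d$ by trimming bisimilar siblings that become redundant for the matching. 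Once $Z^\star$ is functional, the valuation transfer described above is consistent and yields the desired $(\sigma^*,n)$-bisimulation witness.
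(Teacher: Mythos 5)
Your reduction does not work, and the failure is at the heart of what makes this lemma delicate. Renaming the private propositions of $\varphi$ and $\varphi'$ apart only guarantees that the \emph{common} signature of $\varphi^*,\varphi'^*$ is contained in $\sigma$; but the target problem is plain $\ML$-separability, where the separator may freely use \emph{all} propositions occurring in the input, in particular the renamed ones $p^\ell,p^r$. Your reduction would be the right one if the target were Craig interpolant existence, not $\ML$-separability. Concretely, take $d=3$, $\sigma=\{a\}$, and the formulae from Theorem~\ref{thm:nocraig}: $\varphi=\Diamond(a\wedge b)\wedge\Diamond(a\wedge\neg b)$ and $\varphi'=\Diamond(\neg a\wedge c)\wedge\Diamond(\neg a\wedge\neg c)$. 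By the argument of Theorem~\ref{thm:nocraig} (Figure~\ref{fig:witnesses}), this pair admits no $\ML_{\{a\}}$-separator over $\TT^3$, so it is a no-instance of the signature-restricted problem. After your renaming, $\varphi^*=\Diamond(a\wedge b^\ell)\wedge\Diamond(a\wedge\neg b^\ell)$ and $\varphi'^*=\Diamond(\neg a\wedge c^r)\wedge\Diamond(\neg a\wedge\neg c^r)$; jointly they would force four pairwise distinct children (two satisfying $a$, distinguished by $b^\ell$, and two satisfying $\neg a$, distinguished by $c^r$), which is impossible in $\TT^3$. Hence $\varphi^*$ itself is an $\ML$-separator of $\varphi^*,\varphi'^*$ over $\TT^3$, and your map sends a no-instance to a yes-instance. (For $d=2$ your argument can be salvaged via Lemma~\ref{lem:bin model theory}, but the lemma — and its role in the \TwoExpTime\ lower bound — is needed precisely for $d\geq 3$.)

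The same example also shows that the repair you sketch for the backward direction cannot work: refining a $(\sigma,n)$-bisimulation $Z$ to a \emph{functional} one inside $\TT^d$ by matching, duplication and trimming is exactly what fails for $d\geq 3$. In the example, the two $a$-children of the root of $\M$ are not bisimilar over the full signature (they disagree on $b$), so neither can be trimmed, yet both must be $Z$-related to the single $a$-child of the root of $\M'$, whose root already has three children, leaving no room to duplicate. This obstruction is not a technicality — it is the engine behind Theorem~\ref{thm:nocraig}, Lemma~\ref{lem:aux-psi-lower} and the hardness results, so any proof of Lemma~\ref{lem:craig-to-general} that implicitly establishes ``joint $\bis_\sigma^n$-consistency implies joint consistency up to a functional bisimulation over the full signature'' over $\TT^d$, $d\geq 3$, must be wrong. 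The paper instead hides the non-$\sigma$ propositions from \emph{any} modal separator rather than merely un-sharing them: each input formula is relativized to a fresh proposition $\widehat a$ with every second level skipped, and each non-$\sigma$ proposition $c$ is replaced by $\Diamond(\neg\widehat a\wedge\Diamond^*c)$, so the private information is deposited at the end of an auxiliary $\neg\widehat a$-branch arbitrarily far away, invisible to formulae of bounded modal depth; this is also where the hypothesis $d\geq 2$ (room for the extra successor) is used.
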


\begin{proof}
  Let $d\geq 2$, $\varphi,\varphi'\in \muML$ and $\sigma$ be a
  signature. We construct $\muML$-formulae
  $\widehat \varphi,\widehat\varphi'$ such that
  \begin{equation}
    \text{$\varphi,\varphi'$ are
      $\ML_\sigma$-separable over $\TT^d$ \quad iff\quad $\widehat\varphi,\widehat\varphi'$ are
    $\ML$-separable over $\TT^d$.}
    \label{eq:correct-craig-to-general}
  \end{equation}
  Let $\sigma_\varphi=\sig(\varphi)\setminus\sigma$ and
  $\sigma_{\varphi'}=\sig(\varphi')\setminus\sigma$,
  respectively, be the sets of all
  propositions that occur in $\varphi$, respectively $\varphi'$, but not in $\sigma$. Let $\widehat a$ be a proposition that does not occur in $\varphi,\varphi'$. Then, set 
  \begin{align*}
    \widehat \varphi & = h(\varphi,\sigma_\varphi) \\
    \widehat \varphi' & = h(\varphi',\sigma_{\varphi'})
  \end{align*}
  where $h(\psi,\tau)$ is the formula obtained from $\psi$ by 
  \begin{itemize}

    \item first inductively replacing each subformula $\Diamond \theta$ with
      $\Diamond \widehat a\wedge \Box (\widehat a\to\Diamond(\widehat a\wedge \theta))$ and each subformula
      $\Box\theta$ with $\Box (\widehat a\to \Box (\widehat a\to \theta))$, and

    \item then replacing each proposition $c\in
      \tau$ with $\Diamond (\neg \widehat a\wedge \Diamond^* c)$, where $\Diamond^*c$ is the $\muML$ formula that expresses that a point satisfying $c$ is reachable.

  \end{itemize}
  Intuitively, in the first step $h$ relativizes $\psi$ to $\widehat a$ and \enquote{skips} every
  second level, and the second step replaces non-$\sigma$ propositions
  by a formula that is not modally definable and not visible in the
  $\widehat a$-relativization. Clearly, the size of $h(\psi,\tau)$ is
  polynomial in the size of its inputs. We show correctness of the construction as stated in Equivalence~\eqref{eq:correct-craig-to-general}.

  \smallskip
  For ``$\Leftarrow$'', suppose that $\varphi,\varphi'$ are not
  $\ML_\sigma$-separable. By Equivalence~\eqref{eq:non-sep vs n-bis}, for every
  $n\in\NN$, there are models
  $\M_n\models\varphi,\M_n'\models\varphi'$ with
  $\M_n\bis^n_\sigma\M_n'$. 
  We construct models
  $\N_n\models\widehat\varphi,\N_n'\models \widehat\varphi'$ with
  $\N_n\bis^{2n}\N_n'$ as follows. The model $\N_n=(N_n,\point_I,\to_N,\val_N)$ is obtained from
  $\M_n=(M_n,\point_I,\to_M,\val_M)$ as follows:  
  \begin{itemize}

    \item it has universe \[N_n=\{u,u',u_i\mid u\in M_n,i\in\mathbb{N}\};\]

    \item each $u'$ is a successor of $u$ and each successor of $u$ (in
      $\M_n$) is a successor of $u'$ (in $\N_n$);


    \item for each $u$, the $u_i$, $i\in\mathbb{N}$ form an infinite path starting in $u$;\footnote{Here we need the assumption $d\geq 2$: in the constructed model $\N_n$, $u$ has at least two successors.}

    \item for each $u\in M_n$, the valuations of points in $N_n$ are defined as follows:
    \begin{align*}
      \val_N(u) & =\val_M(u)\cap \sigma \\
      \val_N(u') & =\{\widehat a\} \\
      \val(u_i) & = \begin{cases} \emptyset & \text{if $i\leq 2n$},\\ \val(u)\setminus \sigma & \text{if $i>2n$;}\end{cases}
    \end{align*}
%
%
    Intuitively, the last equation means that the non-$\sigma$ propositions are copied from $u$ sufficiently far away in the infinite path to be invisible by any $2n$-bisimulation.

  \end{itemize}
  The model $\N_n'$ is obtained analogously from $\M_n$. Using the
  game theoretic semantics of $\muML$ it is not hard to show that:

  \begin{clm} $\N_n\models h(\varphi,\sigma_\varphi)$ and $\N_n'\models h(\varphi,\sigma_{\varphi'})$. 
  \end{clm}

  Let $Z$ be a $(\sigma,n)$-bisimulation witnessing $\M_n\bis_\sigma^n \M_n'$. Based on the fact that, by construction of $\N_n,\N_n'$, no non-$\sigma$ proposition
  appears in the first $2n$ levels of $\N_n,\N_n'$, it is not difficult to verify that $Z'$ defined by taking
  \[Z'=Z\cup \{(u',v')\mid (u,v)\in Z\}\cup \{(u_i,v_i)\mid (u,v)\in Z,i\leq 2n\}\] 
  witnesses $\N_n\bis^{2n} \N_n'$. Hence, $\widehat\phi,\widehat\phi'$ are not $\ML$-separable.

  \medskip
  For ``$\Rightarrow$'', suppose that $\widehat\varphi,\widehat
  \varphi'$ are not $\ML$-separable. Then they are not $\ML_\tau$-separable for $\tau=\sig(\widehat\phi)\cup\sig(\widehat\phi')$.
  By Equivalence~\eqref{eq:non-sep vs n-bis}, for every
  $n\in\NN$, there are models
  $\M_n\models\widehat\varphi$, $\M_n'\models\widehat\varphi'$ with
  $\M_n\bis^{2n}_\tau\M_n'$. We assume that in $\M_n,\M'_n$:
  \begin{enumerate}

    \item from a point not satisfying $\widehat a$ we never reach a point satisfying $\widehat a$, 

    \item no non-$\sigma$ proposition is satisfied in the first $2n$ levels, and

    \item in even levels, every point satisfying $\widehat a$ has at most one successor satisfying $\widehat a$.

  \end{enumerate}
  We argue that these assumptions can be made without loss of generality. For (1), recall that we relativized our formulae to $\widehat a$. Hence, if we modify 
  $\M_n$ by making $\widehat a$ false in every subtree rooted at a
  point not satisfying $\widehat a$ it will still be a model of
  $\widehat\varphi$, and the same for $\M_n'$. Also the bisimulation is not affected by this modification. For~(2), observe that 
  non-$\sigma$ propositions appear in $\widehat\varphi,\widehat\varphi'$ only under a $\Diamond^*$-operator, so we can delay them arbitrarily
  far. For~(3), let $u$ be a point in an even level that has more than one successor satisfying $\widehat a$. It can be easily verified based on the structure of $\widehat\varphi,\widehat\varphi'$ that joining all successors into one (which has all the successors of the single points as successors) does not change $\M_n$ being a model of $\widehat\varphi$. We will from now on refer to this successor of $u$ satisfying $\widehat a$ with $u'$.
  After doing the same modification for $\M_n'$, we can again find a bisimulation. 
  
  We construct models $\N_n,\N_n'$ with universes $N_n,N_n'$ as follows: 
  %
  \begin{itemize}

    \item The universe $N_n$ of $\N_n$ is the smallest set $N$ such that $N$ contains
    the root of $\M_n$, and if $u\in N$ then $N$ contains also all successors of
    $u'$ (in $\M_n$) that satisfy $\widehat a$.

    \item For $c\in \sigma$ and $u\in N_n$, we have $\N_n,u\models c$
      iff $\M_n,u\models c$,

    \item For $c\in \sigma_\varphi$ and $u\in N_n$, we have $\N_n,u\models c$
      iff $\M_n,u\models \Diamond(\neg \widehat a\wedge \Diamond^*c)$,

  \end{itemize}
  The model $\N_n'$ is constructed analogously from $\M_n'$. Using the
  game theoretic semantics of $\muML$ it is not hard to show that:
  \begin{clm}
    $\N_n\models \varphi$ and $\N_n'\models\varphi'$.
  \end{clm}
  Let $Z$ be any $(2n,\tau)$-bisimulation witnessing
  $\M_n\bis^{2n}_\tau\M_n'$. It is routine to verify that the restriction
  of $Z$ to $N_n\times N_n'$ witnesses $\N_n\bis^{n}_\sigma\N_n'$. Hence, $\varphi,\varphi$ are not $\ML_\sigma$-separable.
\end{proof}

%
\subsection{Proof of Upper Bound in Theorem~\ref{thm:ternary complexity}}\label{sec:upper}
We show that over models of outdegree at most $d$, $\ML$-separability of fixpoint formulae can be solved in doubly exponential time.
Let us start with establishing a technical but useful fact. For every language of $d$-ary trees $L\subset \TT^d$ denote the language:
\[
  \bisQuotient(L) = \{\M\in\TT^d\ |\ \text{there is $\N\in L$ and a functional bisimulation $Z:\N\bisFun{}\M$}\}
\]
of bisimulation quotients of trees from $L$.
\begin{prop}\label{prop:quotient automaton}
  For every NPTA $\A$ over $\TT^d$, an NPTA $\B$ in trash normal form recognizing $\bisQuotient(\LL(\A))$ can be computed in time exponential in the size of $\A$.
\end{prop}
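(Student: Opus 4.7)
The plan is to first construct an alternating parity tree automaton (APTA) $\B'$ of size polynomial in $|\A|$ that recognizes $\bisQuotient(\LL(\A))$, and then obtain $\B$ from $\B'$ via the standard single-exponential translation of alternating parity tree automata into equivalent nondeterministic ones (Muller--Schupp).

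The APTA $\B'$ inherits from $\A$ its state set $Q$, initial state $q_I$, and priority function. Its transition function is designed to simulate, at each node $v$ of the input tree $\M$, some functional bisimulation $Z:\N\bisFun{}\M$ together with an accepting run $\rho$ of $\A$ on $\N$, restricted to $Z^{-1}(v)$. Formally, at a state $q$ in a node with label $c$ and children $v_1,\ldots,v_k$, Eve existentially picks a transition $(q_1,\ldots,q_m)\in\delta(q,c)$ of $\A$ together with a surjection $\pi:\{1,\ldots,m\}\twoheadrightarrow\{1,\ldots,k\}$, and then $\B'$ universally spawns a sub-obligation at state $q_j$ in child $v_{\pi(j)}$ for each $j$. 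The surjectivity of $\pi$ is the essential ingredient: it encodes the back-condition of the intended bisimulation, forcing every child of $v$ to be covered by some $\A$-target.

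Correctness of $\B'$ would be verified in two directions. For $\bisQuotient(\LL(\A))\subseteq\LL(\B')$, given $\N\in\LL(\A)$, a functional bisimulation $Z:\N\bisFun{}\M$, and an accepting run $\rho$ of $\A$ on $\N$, take $\N$ itself as the underlying tree of a run tree of $\B'$ on $\M$, labelling each $u\in\N$ with $(\rho(u),Z(u))$; the back-condition of $Z$ supplies the needed surjection at every node and the parity condition on infinite branches is inherited from $\rho$. For the converse, an accepting run tree $T$ of $\B'$ on $\M$ directly yields a tree $\N$ (whose nodes are the nodes of $T$) with $\rho$ and $Z$ read off from the labelling; the local moves of $\B'$ translate into valid $\A$-transitions on $\N$, while the forth- and back-conditions of $Z$ follow from the structure of run trees and the surjectivity requirement on $\pi$, respectively.

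The main obstacle is that the parity condition rules out a direct powerset-style NPTA construction, since one would have to track all threads through the states rather than just sets of reachable states. Passing through an APTA is what makes this clean: each thread becomes an independent branch of the run tree, so the parity condition of $\A$ is inherited state-wise. Since $\B'$ has $|Q|$ states and the same priorities as $\A$, the alternating-to-nondeterministic parity conversion yields an NPTA $\B$ with $2^{O(|\A|\log|\A|)}$ states, constructible in time exponential in $|\A|$, as required.
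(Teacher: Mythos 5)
Your construction is correct and is essentially the paper's own proof in different packaging: the acceptance game of your alternating automaton (Eve existentially picks a transition of $\A$ together with a surjection onto the current node's children, the resulting obligations are dispatched universally, and priorities are inherited) is exactly the game $\game_\bisQuotient(\M,\A)$ that the paper defines directly, and the dealternation step you invoke corresponds to the paper's step of encoding positional winning strategies as annotations on $\M$, recognizing the annotated trees with an exponential-size automaton $\B^+$, and projecting the annotations away. The only detail to watch—which the paper handles by first normalizing $\A$ so that no transition repeats a state—is that with set-based run trees duplicate obligations get merged, so when you read $\N$ off an accepting run you must keep one child per conjunct occurrence (equivalently, argue via Eve's strategy) to ensure the children's states really form a legal $\A$-transition.
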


\begin{proof}
Fix an NPTA $\A=(Q,\Sigma,q_I,\delta,\rank)$. If necessary, we modify $\A$ so that no transition $S=(p_1,...,p_k)$ contains two occurrences of the same state. 
For every $\M\in\TT^d$, we characterize existence of $d$-ary $\N\models\A$ with $\N\bisFun{}\M$ with the following parity game $\game_\bisQuotient(\M,\A)$.
The game has the set $M\times Q$ as positions. The pair $(\point_I,q_I)$ consisting of the root $\point_I$ of $\M$ and $q_I$ is the initial position. From a position $(\point,q)$ first $\eve$ chooses $S\in\delta(q,\val(c))$ and a surjective map $h:S\to\{\point_1,...,\point_k\}$ where $\{\point_1,...,\point_k\}$ is the set of children of $\point$. Then $\adam$ responds with a choice of $p\in S$ and the next round starts in position $(h(p),p)$. The game is a parity game: the ranks are inherited from $\A$ in the sense that the rank of $(\point,q)$ equals $\rank(q)$. 
The game is designed so that:
\begin{align}\label{eq:game-quotient}
  \text{$\eve$ wins $\game_\bisQuotient(\M,\A)$}  \iff \M\in\bisQuotient(\LL(\A))
\end{align}
for every $\M\in\TT^d$. We prove~\eqref{eq:game-quotient} now.

We start with the implication ``$\Rightarrow$'' from left to right. Assume a winning strategy $\strategy$ for $\game_\bisQuotient(\M,\A)$. We construct $\N\in\TT^d$ accepted by $\A$ together with $Z:\N\bisFun{}\M$.
The universe $N$ consists of all finite $\strategy$-plays. There is an edge $\pi\to\pi'$ from $\pi$ to $\pi'$ if $\pi'$ extends $\pi$ with one move of $\eve$ followed by a response of $\adam$. This means that the outdegree of $\pi$ equals the size of the transition $S$ chosen by $\strategy$ as a response to $\pi$. In particular, $\N$ is $d$-ary. Define $Z:N\to M$ such that $Z(\pi)$ is the point component $\point$ from the last configuration $(\point,q)$ of $\pi$. We complete the definition of $\N$ by putting $\val^\N(\pi)=\val^\M(Z(\pi))$ for all $\pi\in N$.

We claim that the function $Z$ is a bisimulation. The base condition $\baseCond$ follows immediately from the definition of $\val^\N$. To prove the $\backCond$ and $\forthCond$ conditions assume $\pi\in\N$.
Denote $Z(\pi)=\point$, let $\point_1,...,\point_k$ be the children of $\point$ and $h:S\to\{\point_1,...,\point_k\}$ be the move chosen by $\strategy$ as a response to $\pi$.

To show the $\forthCond$ condition assume $\pi\to\pi'$. Existence of the edge $\pi\to\pi'$ implies that $\pi'$ is of shape $\pi\cdot(h(p),p)$ for some $p\in S$. Let $\point'=h(p)$. Since $\point\to\point'$ and $Z(\pi')=\point'$ this completes the argument of the $\forthCond$ condition.
Towards the $\backCond$ condition assume $\point\to\point'$ for some $\point'$. Since $\point'$ is a child of $\point$ and $h$ is surjective, there is $p\in S$ such that $h(p)=\point'$. Then $\pi'=\pi\cdot(\point',p)$ is a $\strategy$-play such that $Z(\pi')=\point'$ and $\pi\to\pi'$. Hence, $\pi'$ witnesses the $\backCond$ condition.

It remains to construct an accepting run $\rho:N\to Q$. For each $\pi\in N$ we define $\rho(\pi)$ as the state component of the last configuration in $\pi$. This $\rho$ is consistent with $\delta$. To show this assume $\pi$ with children $\pi_1,...,\pi_k$. Denote $\pi_i=\pi\cdot(\point_i,p_i)$ for each $i\leq k$ and let $(\point,q)$ be the last configuration in $\pi$. There is $S\in\delta(q,\val^\M(\point))$ such that $S=\{p_1,...,p_k\}$. Since $\rho(\pi)=q$, $\rho(\pi_i)=p_i$ for every $i\leq k$ and $\val^\N(\pi)=\val^\M(\point)$, the transition $S$ is legal in $\pi$. To see that $\rho$ is accepting assume an infinite path $\pi_1,\pi_2,...$ in $\N$ and for each $i$ let $(\point_i,q_i)$ be the last configuration of $\pi_i$ (so in particular $\rho(\pi_i)=q_i$). We need to show that the sequence $q_1q_2...$ of states satisfies the parity condition. This is true because each $\pi_i$ is a $\strategy$-play and therefore so is their infinite limit $(\point_1,q_1)(\point_2,q_2)...=\pi$. This completes the proof of the implication ``$\Rightarrow$'' in~\eqref{eq:game-quotient}.

Let us prove the other implication ``$\Leftarrow$'' in~\eqref{eq:game-quotient}. Assume $\N\in\TT^d$ such that $\N\models\A$ and $Z:\N\bisFun{}\M$, and let $\rho:N\to Q$ be an accepting run witnessing $\N\models\A$. We construct a winning strategy $\strategy$ for $\eve$ in $\game_\bisQuotient(\M,\A)$. The constructed strategy preserves as an invariant that for every $\strategy$-play $\pi=(\point_1,q_1)...(\point_l,q_l)$ there is a path $\altpoint_1...\altpoint_l$ in $\N$ with $Z(\altpoint_i)=\point_i$ and $\rho(\altpoint_i)=q_i$ for each $i\leq l$. The invariant holds in the initial position $(\point_I,q_I)$. To define moves dictated by $\strategy$ assume a play $\pi=(\point_1,q_1)...(\point_l,q_l)$ and a path $\altpoint_1...\altpoint_l$ from the invariant. Let $\point_1',...\point_k'$ be the children of $\point_l$. We define an $\eve$'s move $h:S\to\{\point_1',...,\point_k'\}$ dictated by $\strategy$ as a response to $\pi$.

Let $S$ be the transition chosen by $\rho$ in $\altpoint_l$. By assumption no state of $\A$ appears more than once in a single transition $S$ of $\delta$. This means that $\rho$ is bijective between the set of children of $\altpoint_l$ and $S$: for each $p\in S$ there exists a unique child $\altpoint^p$ of $\altpoint_l$ such that $\rho(\altpoint^p)=p$. We set $h(p)=Z(\altpoint^p)$ for every $p\in S$. By the $\baseCond$ condition $\altpoint_l$ and $\point_l$ have the same color so to show that this $h$ is a legal move for $\eve$ it suffices to show that it is surjective. By the $\backCond$ condition for every child $\point'$ of $\point_l$ there is some child $\altpoint'$ of $\altpoint_l$ with $Z(\altpoint')=\point'$. Hence, $h(\rho(\altpoint'))=\point'$ which proves surjectivity of $h$. Moreover, the invariant is preserved: if $\pi'$ extends $\pi$ by $\adam$'s response $(h(p),p)$ to $h$ then $Z(\altpoint^p)=h(p)$ and $\rho(\altpoint^p)=p$ and so we extend $\altpoint_1...\altpoint_k$ with $\altpoint_{k+1}=\altpoint^p$.

To see that the strategy $\strategy$ is winning observe that, thanks to the invariant, for every infinite $\strategy$-play $\pi=(\point_1,q_1)(\point_2,q_2)...$ there exists an infinite path $\overline{\altpoint}=\altpoint_1\altpoint_2...$ in $\N$ such that $Z(\altpoint_i)=\point_i$ and $\rho(\altpoint_i)=q_i$ for all $i$. Thus, $\eve$ wins $\pi$ because $\rho$ is accepting. This completes the proof of~\eqref{eq:game-quotient}.

Using~\eqref{eq:game-quotient} we prove Proposition~\ref{prop:quotient automaton}. It suffices to construct an automaton $\B$ which accepts $\M$ iff $\eve$ wins $\game_\bisQuotient(\M,\A)$. By positional determinacy of parity games, to check the latter it suffices to look at \emph{positional} winning strategies for $\eve$ in $\game_\bisQuotient(\M,\A)$.
Using standard techniques we construct in exponential time an automaton $\B$ in trash normal form which, given $\M$, verifies existence of such a strategy.
Since this step follows the classical constructions from Theorem~\ref{thm:munpta} and Corollary~\ref{cor:muML to NPTA}, below we only give a brief sketch.
\begin{enumerate}
  \item A positional strategy $\strategy$ for $\eve$ in $\game_\bisQuotient(\M,\A)$ is formally a map which sends every position $(v,q)$ of the game to $\eve$'s answer, that is, to an appropriate map $h$ from some $S\subset Q$ to $v$'s children. We encode such $\strategy$ as a coloring $\lambda_\strategy:M\to\powerset{Q\times Q}$ in the following way. Assume $\point$ in $\M$ with children $\point_1,...,\point_k$ and for each state $q$ denote $\strategy(\point,q)$ by $h_q:S\to\{\point_1,...,\point_k\}$. For each $i\leq k$ we set $\lambda_\zeta(\point_i)=\{(q,p)\ |\ h_q(p)=\point_i\}$. 
  \item Construct $\B^+$ which inputs a model $\M$ together with a labelling $\lambda:M\to\powerset{Q\times Q}$ and accepts iff $\lambda$ encodes a winning strategy for $\eve$ in $\game_\bisQuotient(\M,\A)$. The key part in the construction of $\B^+$ is to check that every infinite play consistent with the strategy encoded by $\lambda$ satisfies the parity condition. This is done effectively by using efficient determinization of word automata, in exactly the same way as in Theorem~\ref{thm:munpta}.
  \item Obtain $\B'$ which recognizes $\bisQuotient(\LL(\A))$ by projecting out the additional colors $\powerset{Q\times Q}$ from $\B^+$.
  \item Finally, we bring $\B'$ to $\B$ in trash normal form. This step only requires single exponential time, for the same reason as in Corollary~\ref{cor:muML to NPTA}: $\B'$ has only polynomially many ranks.
\end{enumerate}
This completes the proof of Proposition~\ref{prop:quotient automaton}.
\end{proof}
With the help of Proposition~\ref{prop:quotient automaton} we complete the proof of
Theorem~\ref{thm:ternary complexity}. Fix $d$ and $\muML$-formulae
$\phi$ and $\phi'$ over signature $\sigma$. As discussed in Section~\ref{sec:foundations}, it is enough to look for $\ML$ separators which only use symbols from $\sigma$. By Equivalence~\eqref{eq:non-sep vs n-bis}, it suffices to check if $\phi$ and $\phi'$ are jointly $\bis_\sigma^n$-consistent over $\TT^d$ for every $n$. However, unlike with definability or in the binary case, we cannot conclude joint $\cong_\sigma^n$-consistency from joint $\bis_\sigma^n$-consistency. Instead, we use Proposition~\ref{prop:quotient automaton} to directly decide joint $\bis_\sigma^n$-consistency for all $n$.
For a language $L\subset\TT^d$, define the language:
\[
  \mathsf{QPL}(L) = \{\N\in\TT^d\ |\ \text{there is $\M\in L$, finite prefix $\M_0$ of $\M$ and $Z:\M_0\bisFun\N$}\}
\]
of finite $d$-ary trees which are bisimulation quotients of finite prefixes of models from $L$.
By Proposition~\ref{prop:quotient automaton} and the closure properties of parity automata, for every $\A$ one can construct in exponential time an automaton $\B$ recognizing $\mathsf{QPL}(\LL(\A))$.

We prove the upper bound from Theorem~\ref{thm:ternary complexity}. Using Corollary~\ref{cor:muML to NPTA} compute automata $\A,\A'$ equivalent to $\phi,\phi'$. Use Proposition~\ref{prop:quotient automaton} to compute automata $\B,\B'$ in trash normal form which recognize $\mathsf{QPL}(\LL(\A))$ and $\mathsf{QPL}(\LL(\A'))$. Recall that any two trees are bisimilar iff they have isomorphic bisimilarity quotients. It follows that $\phi,\phi'$ admit an $\ML^n$-separator over $\TT^d$ iff $\A,\A'$ are jointly $\bis_\sigma^n$-consistent iff $\B,\B'$ are jointly $\cong^n_\sigma$ consistent. By Proposition~\ref{prop:automata step}, the latter condition holds for all $n\in\NN$ iff it holds for $n=|\B|\times|\B'|+1$ and this can be tested in time polynomial in $(|\Sigma|+|\Sigma'|)\times(|\B|+|\B'|)^d$ where $\Sigma=\powerset{\sig(\phi)}$ is the alphabet of $\A$ and of $\B$, and $\Sigma'=\powerset{\sig(\phi')}$ is the alphabet of $\A'$ and of $\B'$.
Since $\A,\A'$ are exponential, and $\B,\B'$ are doubly exponential in the size of $\phi,\phi'$, this gives the upper bound from Theorem~\ref{thm:ternary complexity}.

We conclude this subsection with an observation that in case an $\ML$-separator exists, we can also compute one in elementary time.
 
\begin{restatable}{thm}{thmternaryconstruction}\label{thm:ternary construction}
  If $\varphi,\varphi'$ are $\ML$-separable over $\TT^d$,
  $d\geq 3$, then one can compute an $\ML$-separator in time
  triply exponential in $|\varphi|+|\varphi'|$.
\end{restatable}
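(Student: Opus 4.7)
The plan is to explicitly extract a separator from the upper bound proof of Theorem~\ref{thm:ternary complexity} and bound its size. By Equivalence~\eqref{eq:non-sep vs n-bis}, the assumption that $\varphi,\varphi'$ are $\ML$-separable over $\TT^d$ yields an $\ML^n_\sigma$-separator at the least depth $n$ witnessing the failure of joint $\bis^n_\sigma$-consistency, where $\sigma=\sig(\varphi)\cup\sig(\varphi')$. The upper bound proof of Theorem~\ref{thm:ternary complexity} gives the bound $n\leq|\B|\cdot|\B'|+1$, and since the quotient automata $\B,\B'$ are already doubly exponential in $|\varphi|+|\varphi'|$, this $n$ is also doubly exponential.

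Given such $n$, my construction would produce an $\ML^n_\sigma$-uniform consequence of $\varphi$ by adapting the inductive recipe of Proposition~\ref{prop:L-uniform construction}, but running it on $\B$ (rather than directly on an automaton for $\varphi$). For each state $q$ of $\B$ and each $m\leq n$, inductively set
\[
  \psi^{m+1}_q=\bigvee_{c\in\Sigma}\bigvee_{S\in\delta(q,c)} c\wedge \nabla\{\psi^m_p\mid p\in S\},
\]
with $\psi^0_q$ describing the admissible root valuations of trees accepted from $q$. Since $\LL(\B)=\mathsf{QPL}(\LL(\A))$ captures exactly the bisimulation quotients of finite prefixes of models of $\varphi$, a careful induction should show that, over $\TT^d$, $\M\models\psi^n_{q_I}$ iff $\M\bis^n_\sigma\N$ for some $\N\models\varphi$; this is exactly the characterization of an $\ML^n_\sigma$-uniform consequence, which, together with the separability assumption, yields an $\ML$-separator. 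The same size recurrence as in Proposition~\ref{prop:L-uniform construction} gives $|\psi^n_{q_I}|=2^{O(n\log |\B|)}$, and for doubly exponential $n$ and $\log|\B|$ this works out to triply exponential in $|\varphi|+|\varphi'|$, as required.

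The main obstacle I expect is the correctness argument, since the example after Proposition~\ref{prop:L-uniform construction} shows that the analogous construction fails over $\TT^d$ with $d\geq 3$ in the sense that unrestricted $\ML^n$-uniform consequences need not even exist. The proposed construction circumvents this in two ways: it operates in the fixed signature $\sigma$ (for which $\ML^n_\sigma$-uniform consequences always exist, as there are only finitely many $\bis^n_\sigma$-classes), and it runs on $\B$, whose accepted trees are $\sigma$-bisimulation-minimal representatives, so the duplicate-subtrees trick in the proof of Proposition~\ref{prop:L-uniform construction} is replaced by a bisimulation-quotient argument that stays inside $\TT^d$. Rigorously verifying the induction invariant that links $\psi^n_{q_I}$ with $\bis^n_\sigma$-equivalence to some model of $\varphi$ is the delicate step I expect to require the most care.
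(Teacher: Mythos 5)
Your first step---extracting the doubly exponential depth bound $n\le|\B|\cdot|\B'|+1$ from the decision procedure---agrees with the paper, but the paper then finishes by brute force: it takes the disjunction of the characteristic modal formulae of all $d$-ary trees of depth $n$ over $\sigma$ that are consistent with $\varphi$, and never attempts an automaton-based uniform-consequence construction. The step you substitute for this, namely running the recursion of Proposition~\ref{prop:L-uniform construction} on $\B$, is exactly where the proposal breaks: the invariant you defer to a ``careful induction'' is false. Since $\LL(\B)=\mathsf{QPL}(\LL(\A))$ consists of quotients of \emph{arbitrary} finite prefixes, $\B$ accepts the one-node tree labelled $c$ whenever some model of $\varphi$ has root label $c$; hence there is a leaf transition in $\delta_{\B}(q_I,c)$ and $c\wedge\Box\bot$ is a disjunct of every $\psi^m_{q_I}$. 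So for $\varphi=\Diamond a\wedge\Box a$ and $\varphi'=\Box\bot$ (separated by $\Diamond a$) your $\psi^n_{q_I}$ is satisfied by the one-node tree, which satisfies $\varphi'$: the output is not a separator, and the claimed equivalence ``$\M\models\psi^n_{q_I}$ iff $\M\bis^n_\sigma\N$ for some $\N\models\varphi$'' already fails because $\B$ cannot distinguish genuine leaves from cut points.

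This bookkeeping could be repaired (use $\bisQuotient$ of full models together with a satisfiability-style base case, and build the automaton over $2^\sigma$ rather than $2^{\sig(\varphi)}$---note that a separator may genuinely need propositions of $\varphi'$), but the deeper obstruction remains and is the one the paper itself points out when restricting Proposition~\ref{prop:L-uniform construction} to $\TT^2$ and unrestricted models: in the direction from $\M\models\psi^{m+1}_q$ to the existence of an accepted $\N$ with $\M\bis^{m+1}_\sigma\N$, the $\nabla$-condition only says that each child of $\M$ is $\bis^m_\sigma$-equivalent to \emph{some} tree accepted from a state in $S$; distinct children hitting the same disjunct may need distinct witness trees, and assembling these into a single accepted $d$-ary tree requires changing state multiplicities or duplicating subtrees, which is unavailable for $d\ge 3$. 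Quotienting does not rescue this, because the duplication is needed on the side of $\M$, and the languages $\mathsf{QPL}(\LL(\A))$ and $\bisQuotient(\LL(\A))$ are deliberately not bisimulation closed. Concretely, for $\varphi=\Diamond(\theta_\infty\wedge a)\wedge\Diamond(\neg\theta_\infty\wedge a)$ and $\varphi'=\Diamond(\neg a\wedge c)\wedge\Diamond(\neg a\wedge\neg c)$ over $\TT^3$ with $\sigma=\{a,c\}$, the separator $\Diamond(\neg a\wedge c)\to\Box(\neg a\to c)$ exists, yet the tree whose root has an $a$-labelled path of length $m$ and two leaves labelled $\neg a,c$ and $\neg a,\neg c$ satisfies $\psi^{m+1}_{q_I}$ (via a transition coming from an accepting run on a model of $\varphi$ whose third child is unconstrained) while also satisfying $\varphi'$; so even the repaired recursion does not output a separator, whereas a correct $\ML^n_\sigma$-uniform consequence must entail $\Diamond(\neg a\wedge c)\to\Box(\neg a\to c)$. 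In short, the size analysis is fine, but the core correctness claim does not hold, and the paper's cruder enumeration of depth-$n$, outdegree-$d$ trees is what actually carries the proof.
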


\begin{proof} It follows from the upper bound proof of
  Theorem~\ref{thm:ternary complexity} that if $\varphi,\varphi'$
  admit an $\ML$-separator, then they admit one of modal depth
  bounded doubly exponentially in $|\varphi|+|\varphi'|$. Observe that over the signature
  of $\varphi$ and $\varphi'$ there are only triple exponentially many trees of fixed
  outdegree $d$ and double exponential depth, and that each such tree
  is characterized (up to bisimulation) by a modal formula of triply exponential size.
  The sought separator is then the disjunction of all such formulae
  consistent with $\varphi$.
\end{proof}

\section{Case Study: Graded Modalities}\label{sec:graded}

In this section we apply our techniques and results to the case with \emph{graded
modal operators}. We extend $\muML$ with formulae of the
shape $\Diamond_{\geq g}\psi$ and $\Box_{\geq g}\psi$, where $g\in\NN$ is a
natural number called a  \emph{grade}. 
We denote with $\grML$ and $\mugrML$ the extension of $\ML$ and $\muML$, respectively, with such graded modalities.
Intuitively, $\Diamond_{\geq g}\psi$ is true in a
point $w$ if $w$ has at least $g$ successors satisfying $\psi$ and
dually, $\Box_{\geq g}\psi$ is true in $w$ if all but at most $g$
successors satisfy $\psi$~\cite{DBLP:journals/ndjfl/Fine72,DBLP:journals/corr/abs-1910-00039}. Formally, the semantics of $\mugrML$ is given by semantic games. These games extend the semantic games used to define the semantics of $\muML$.
As in the case with no grades, the game for a formula $\phi$ and a model $\M$ has $M\times\SubFor(\phi)$ as the set of positions, the winning (parity) condition is defined as in the classical case, and so are the moves for all the positions with topmost connective other than the graded modalities. In the classical game, from $(\point,\Diamond\theta)$ $\eve$ chooses a child $\point'$ of $\point$ and the next position is $(\point',\theta)$.
In $(\point,\Diamond_{\geq k}\theta)$, first $\eve$ chooses a subset $\point_1,...,\point_l$ of children of $\point$ of size $l\geq k$, then $\adam$ chooses one of these children $\point_i$ and the next round starts at $(\point_i,\theta)$. Dually, in $(\point,\Box_{k\geq}\theta)$ first $\eve$ picks a subset $\point_1,...,\point_l$ of $\point$'s children of size $l\leq k$, then $\adam$ responds with a choice of some $\point'$ not in $\point_1,...,\point_l$ and the next position is $(\point',\theta)$. Note that without loss of generality $\eve$ always chooses subsets of minimal size $l=k$ in $(\point,\Diamond_{\geq k}\theta)$, and of maximal size $l=\min(k,\text{number of $v$'s children})$ in $(\point,\Box_{k\geq}\theta)$. We thus tacitly assume such optimality when considering strategies for her.
%
%
Clearly, for any $d\in\NN$, the class $\TT^d$ is
$\mugrML$-definable by the formula $\theta_d = \nu
x.(\Box x\wedge\Box_{\geq d}\bot)$, which is an additional motivation to
study $\grML$ and
$\mugrML$.

We use tools and techniques from the previous sections to solve separability and definability problems in the graded setting. These questions split into two categories, depending on wether grades are allowed in the separating/defining formula or only in the input.
Before we inspect these two cases in turn, let us discuss two limitations of this inquiry.
First, we only look at the decision problems: deciding existence of a suitable separator/definition. The problem of efficiently \emph{constructing} one when it exists is a nontrivial task, at least in the case with no grades in the separator, and goes far beyond the scope of this paper. Second, in this section we assume that the grades in formulae are represented in unary. This is admittedly a serious limitation, but is needed for an exponential translation from $\mugrML$ to automata over $\TT^d$, generalizing Theorem~\ref{thm:munpta} (and its strengthening Corollary~\ref{cor:muML to NPTA}).
It is likely that this restriction can be lifted by working with a more succinct automaton model suitable for efficiently storing grades, following the approach of~\cite{DBLP:conf/cade/KupfermanSV02}. Adapting our proofs to this setting is not immediate, however.

\subsection{With Grades in the Separator}
We first investigate $\grML$-definability and $\grML$-separability of $\mugrML$-formulae.
We start by introducing a graded variant of bisimulations called \emph{graded bisimulations}, which characterize the expressive power of $\grML$~\cite{DBLP:journals/sLogica/Rijke00}. A relation $Z$ between models is a graded
bisimulation if it satisfies ($\baseCond$) and \emph{graded} variants
of the ($\backCond$) and ($\forthCond$) conditions of bisimulations.
The graded ($\forthCond$) condition says that if  $\point Z\altpoint$
then for every $k\in \NN$ and pairwise different children
$\point_1,...,\point_k$ of $\point$, there are pairwise different children
$\altpoint_1,...,\altpoint_k$ of $\altpoint$ satisfying $\point_i
Z\altpoint_i$ for all $i\leq k$. The graded ($\backCond$) condition is
symmetric. $Z$ is a \emph{$g$-graded bisimulation} if the graded
($\forthCond$) and ($\backCond$) conditions need to be satisfied only for
$k\leq g$.  We denote with $\M\bis_\graded\M'$ (resp., $\M\bis_{\graded(g)}\M'$)
the fact that there is a graded bisimulation (resp., a $g$-graded
bisimulation) between $\M$ and $\M'$ that relates their roots.
Variants with bounded depth $n$ and/or given
signature $\sigma$ are defined and denoted as expected.

\begin{restatable}{thm}{thmgraded}\label{thm:graded separability complexity}
  The problems of $\grML$-definability and $\grML$-separability of $\mugrML$-formulae are both \ExpTime-complete.
\end{restatable}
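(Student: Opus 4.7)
The plan is to mirror the proof of Theorem~\ref{thm:separability unrestricted} for the non-graded case, replacing ordinary bisimulations with graded bisimulations throughout. The lower bound is immediate: $\mugrML$-satisfiability inherits \ExpTime-hardness from $\muML$-satisfiability, and the reduction sketched in the proof of Theorem~\ref{thm:otto main} applies verbatim (the auxiliary formula $\mu x.(a\vee\Diamond x)$ already lies in $\muML\subset\mugrML$ and $\bot\in\grML$), giving \ExpTime-hardness of $\grML$-definability and therefore of the more general $\grML$-separability.

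For the upper bound, the first step is to establish the graded analogue of Equivalence~\eqref{eq:non-sep vs n-bis}. Writing $\grML_\sigma^{n,g}$ for the class of $\grML$-formulae of modal depth at most $n$ and grades at most $g$ over signature $\sigma$, one has that $\phi,\phi'$ are not $\grML_\sigma^{n,g}$-separable over $\ClassMod$ iff they are joint $\bis_{\graded(g),\sigma}^n$-consistent over $\ClassMod$. This rests on the usual finite-index argument: there are finitely many $(\sigma,n,g)$-graded bisimulation classes, each of them definable by a single characteristic formula in $\grML_\sigma^{n,g}$. Consequently, $\phi,\phi'$ are $\grML$-separable iff there exist $n,g$ for which joint $\bis_{\graded(g),\sigma}^n$-consistency fails.

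The second and technically most delicate step is a graded analogue of Lemma~\ref{lem:unbounded separability}, which I would state and prove as Lemma~\ref{lem:graded model theory} (forward-referenced already in the excerpt): joint $\bis_{\graded(g),\sigma}^n$-consistency of $\phi,\phi'$ over all models is equivalent to joint graded-$(\sigma,n,g)$-isomorphism consistency over $\TT^d$ for an appropriate $d$ polynomial in $g$, $|\phi|$, and $|\phi'|$. The easy direction is graded-bisimulation invariance; the harder one unravels witnesses of the graded bisimulation to $d$-ary trees by duplicating successors so that all grade-$g$ back-and-forth witnesses fit within the outdegree bound, taking for instance $d=g\cdot(|\phi|+|\phi'|)$.

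Finally, I would invoke the automata machinery: with grades in unary and $d$ fixed, every $\mugrML$-formula translates to an NPTA over $\TT^d$ with exponentially many states, a graded extension of Theorem~\ref{thm:munpta}. Joint $\cong_\sigma^n$-consistency of two such automata over $\TT^d$ for all $n$ is then decidable in time polynomial in the size of the automata by the pumping argument of Proposition~\ref{prop:automata step}, which does not depend intrinsically on the absence of grades. Combining the three steps yields an \ExpTime upper bound matching the lower bound. The main obstacle is the graded unravelling lemma: one must choose $d$ so that grade-$g$ witnesses from both $\phi$ and $\phi'$ can be simultaneously embedded into a single $d$-ary tree without breaking the induced graded bisimulation, while keeping $d$ small enough that the subsequent automaton construction still runs in singly exponential time.
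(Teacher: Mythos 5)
Your overall architecture (a graded analogue of Equivalence~\eqref{eq:non-sep vs n-bis}, a model-theoretic bridge to isomorphism-consistency over $\TT^d$, then Theorem~\ref{thm:munpta} plus Proposition~\ref{prop:automata step}) is the same as the paper's, and your lower bound and final automata step are fine. The gap is your second step. As stated -- for a \emph{fixed} grade bound $g$, joint $\bis^n_{\graded(g),\sigma}$-consistency over all models is equivalent to joint $(\sigma,n)$-isomorphism-consistency over $\TT^d$ -- the lemma is false: take $\phi=\Diamond_{\geq 2}\top$, $\phi'=\Box_{\geq 1}\bot\wedge\Diamond_{\geq 1}\top$ (exactly one successor) and $g=1$; then $\bis_{\graded(1)}$ is ordinary bisimulation, a root with two leaf children is bisimilar to a root with one leaf child, so $\phi,\phi'$ are joint $\bis^n_{\graded(1),\sigma}$-consistent for every $n$, yet no model of $\phi$ and no model of $\phi'$ have isomorphic $1$-prefixes. (If instead you intend some non-standard ``$g$-bounded isomorphism'' on the right-hand side, then Proposition~\ref{prop:automata step} no longer applies, since it crucially runs both automata on one and the same finite tree.) What the algorithm actually needs is the equivalence with non-$\grML^n_\sigma$-separability, i.e.\ with joint $\bis^n_{\graded(g),\sigma}$-consistency \emph{for all} $g$ simultaneously: the grade of a potential separator is not bounded a priori by the grades in $\phi,\phi'$, and your $d=g\cdot(|\phi|+|\phi'|)$ would otherwise be unbounded. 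Eliminating this quantification over $g$ is exactly the technical content of the paper's Lemma~\ref{lem:graded model theory}: from consistency for every finite $g$ one extracts, via an \FO compactness argument, a single pair of models linked by a \emph{full} graded $(\sigma,n)$-bisimulation, and then, via L\"owenheim--Skolem and the fact that countable graded-bisimilar trees are isomorphic, joint $\cong^n_\sigma$-consistency. None of this appears in your proposal.

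Your proof idea for the hard direction -- ``duplicating successors'' so that the grade-$g$ witnesses fit the outdegree bound -- also does not transfer from the non-graded lemmas. Duplication of subtrees is not sound for $\mugrML$ (it can violate upper-bound constraints such as $\Box_{\geq 1}\bot$), and in any case duplication \emph{increases} outdegree, whereas to land in $\TT^d$ you must trim. The paper instead fixes positional winning strategies for $\eve$ in the semantic games $\game(\M_+,\phi)$ and $\game(\M_+',\phi')$ and keeps only the $\Diamond$-witnesses these strategies choose; removing successors can only help $\Box_{\geq k}$-subformulae, so satisfaction is preserved, and the outdegree drops to $d=g_\text{max}\times(|\phi|+|\phi'|)$ with $g_\text{max}$ the maximal grade of the \emph{input} formulae, which is what keeps the subsequent automata construction singly exponential. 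With such a lemma in place, your first and third steps go through essentially as you describe.
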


The lower bounds follow by the usual reduction from satisfiability. We thus focus on the upper bound in the more general case of separability. Similarly
to the non-graded case, we establish first a model-theoretic
characterization based on graded bisimilarity.

\begin{restatable}{lem}{lemgradedml}\label{lem:graded model theory}
  For every $\phi,\phi'\in\mugrML$ with maximal grade $g_\text{max}$, signature $\sigma$, and $n\in\NN$, the following are equivalent:
\begin{enumerate}

  \item\label{it:graded model theory 1} $\phi,\phi'$ are not
    $\grML^n_\sigma$-separable (over all models).

  \item\label{it:graded model theory 2} $\phi,\phi'$ are jointly
  $\bis^n_{\graded,\sigma}$-consistent (over all models).

  \item\label{it:graded model theory 3} $\phi,\phi'$ are jointly
  $\cong^n_{\sigma}$-consistent (over all models).

  \item\label{it:graded model theory 4} $\phi,\varphi'$ are jointly
    $\cong^n_\sigma$-consistent over $\TT^d$ for
    $d=g_\text{max}\times(|\phi|+|\phi'|)$.

\end{enumerate}
\end{restatable}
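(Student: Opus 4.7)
The plan is to prove the cycle $(1)\Rightarrow(2)\Rightarrow(3)\Rightarrow(4)\Rightarrow(1)$, of which $(4)\Rightarrow(1)$ is immediate: a $(\sigma,n)$-isomorphism is in particular an unbounded graded $(\sigma,n)$-bisimulation and hence preserves all $\grML^n_\sigma$-formulas, so jointly consistent witnesses over $\TT^d$ rule out every $\grML^n_\sigma$-separator. For $(1)\Rightarrow(2)$ I would adapt Equivalence~\eqref{eq:non-sep vs n-bis}: for each grade bound $g$ there are, up to equivalence, only finitely many $\grML^n_\sigma$-formulas of grade at most $g$, so failure of $\grML^n_\sigma$-separability by grade-$g$ formulas is equivalent to joint $\bis^n_{\graded(g),\sigma}$-consistency; an $\omega$-saturation or compactness-style argument (via the first-order translation of formulas of bounded grade) then upgrades the family of $g$-witnesses into a single pair witnessing joint $\bis^n_{\graded,\sigma}$-consistency. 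The core of the lemma is the two middle steps.

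For $(2)\Rightarrow(3)$ I claim that the pair $\M,\M'$ already witnessing $(2)$ witnesses $(3)$ as well. Given a graded $(\sigma,n)$-bisimulation $Z$ between $\M$ and $\M'$, I would construct a bijective refinement $Z'\subseteq Z$ on the $n$-prefixes by a level-by-level recursion. At every pair $(\point,\altpoint)\in Z$ of depth $i<n$, partition the children on both sides according to $\bis^{n-i-1}_{\graded,\sigma}$-equivalence. Applied to the full intersection $\mathrm{ch}(\point)\cap C$ of one such class $C$, the graded $\forthCond$ condition at $k=|\mathrm{ch}(\point)\cap C|$ forces at least $k$ $Z$-matches in $\mathrm{ch}(\altpoint)$, all of which must themselves lie in $C$ (since $Z$-related children are automatically $\bis^{n-i-1}_{\graded,\sigma}$-equivalent); the symmetric $\backCond$ direction gives the reverse inequality, so $|\mathrm{ch}(\point)\cap C|=|\mathrm{ch}(\altpoint)\cap C|$ (both finite of equal size, or both infinite). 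Choosing any bijection within each class and recursing on paired subtrees yields a bijective $(\sigma,n)$-bisimulation, i.e.\ a $(\sigma,n)$-isomorphism.

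For $(3)\Rightarrow(4)$---the ``stronger claim'' flagged in the proof of Lemma~\ref{lem:unbounded separability}---let $\M\models\phi$ and $\M'\models\phi'$ with $\M\cong^n_\sigma\M'$ witnessed by an isomorphism $h$. I would fix positional winning strategies $\strategy$ for Eve in $\game(\M,\phi)$ and $\strategy'$ in $\game(\M',\phi')$. At each position of the form $(\point,\Diamond_{\geq g}\theta)$ or $(\point,\Box_{\geq g}\theta)$, $\strategy$ selects a subset of at most $g_\text{max}$ children of $\point$; collecting over the at most $|\phi|$ relevant subformulae defines a set $K_\phi(\point)\subseteq\mathrm{ch}(\point)$ of size at most $g_\text{max}\cdot|\phi|$, and symmetrically $K_{\phi'}(\altpoint)\subseteq\mathrm{ch}(\altpoint)$ of size at most $g_\text{max}\cdot|\phi'|$. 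At each $\point$ of depth strictly less than $n$ prune $\mathrm{ch}(\point)$ down to $K(\point):=K_\phi(\point)\cup h^{-1}\big(K_{\phi'}(h(\point))\big)$ in $\M$ and correspondingly to $h(K(\point))$ in $\M'$; at depth at least $n$ prune each model independently to its $K_\phi$ or $K_{\phi'}$. Each surviving node has at most $d=g_\text{max}(|\phi|+|\phi'|)$ children, so the pruned models $\N,\N'$ lie in $\TT^d$; the restriction of $h$ to the survivors is a $(\sigma,n)$-isomorphism by construction, and $\strategy,\strategy'$ remain winning on $\N,\N'$, giving joint $\cong^n_\sigma$-consistency over $\TT^d$.

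The main obstacle will be precisely the last check. Preservation of every $\Diamond_{\geq g}$-witness is immediate because such witnesses lie in $K_\phi\subseteq K$, but one must also argue that each $\Box_{\geq g}$-exemption chosen by $\strategy$ remains valid in $\N$: the $g$ children Eve originally exempted (the potential ``bad'' children, failing $\theta$) are kept since they belong to $K_\phi$, while any children removed by pruning were non-exempt and therefore originally satisfied $\theta$, so Adam cannot exploit the pruning to reach a losing position. Inflating $K_\phi(\point)$ to $K(\point)$ using $h^{-1}$ of the $\phi'$-side relevant children (and symmetrically on the other side) is likewise essential: without it, the restriction of $h$ would not yield a bijection between the pruned $n$-prefixes.
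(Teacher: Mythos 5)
Your overall cycle and two of its three nontrivial steps essentially coincide with the paper's proof: $(1)\Rightarrow(2)$ is done there by the same compactness argument over a first-order encoding of the bounded-depth graded bisimulation, and your $(3)\Rightarrow(4)$ is exactly the paper's construction, pruning both models along positional winning strategies of $\eve$ in the semantic games and keeping, inside the $n$-prefix, the witnesses of both strategies, with the same bound $d=g_\text{max}\times(|\phi|+|\phi'|)$; your extra care with the $\Box_{\geq g}$-exemptions is harmless (the paper does not even need to retain them, since removing non-exempt children only shrinks $\adam$'s options).

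The genuine gap is in $(2)\Rightarrow(3)$. You claim that the very pair $\M,\M'$ witnessing $(2)$ also witnesses $(3)$, because within each $\bis^{n-i-1}_{\graded,\sigma}$-class the children on the two sides are ``both finite of equal size, or both infinite'', so that ``any bijection within each class'' can be chosen. For infinite classes this does not suffice: the graded $\forthCond$/$\backCond$ conditions only yield that both classes are infinite, not that they are equinumerous, and nothing in $(2)$ bounds the cardinality of the witnesses. If, say, the root of $\M$ has countably many pairwise graded-bisimilar children while the root of $\M'$ has $2^{\aleph_0}$ such children, then $\M$ and $\M'$ are graded $(\sigma,n)$-bisimilar but there is no bijection between their $1$-prefixes, so they are not even $(\sigma,1)$-isomorphic; hence your intermediate claim that the same pair works is false as stated. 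The implication itself is saved by first normalizing cardinalities: the paper invokes the (downward) L\"owenheim--Skolem theorem for the first-order encoding used in $(1)\Rightarrow(2)$ to assume the $n$-prefixes are at most countable, after which any two infinite classes are both countably infinite and the desired bijection exists. Alternatively, you could prune every infinite class of children down to a countably infinite subclass, observing that this preserves graded bisimilarity with the original model and hence satisfaction of $\phi,\phi'$. With such a cardinality-control step inserted, your level-by-level refinement goes through and matches the paper's argument.
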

Using Lemma~\ref{lem:graded model theory} one can solve
$\grML$-separability of $\mugrML$-formulae in exponential time,
following the approach described in
Section~\ref{sec:foundations}. More precisely, given $\phi,\phi'$, we
construct NPTA $\Amc,\Amc'$ equivalent to $\phi,\phi'$ over $d$-ary
trees, $d$ as in Lemma~\ref{lem:graded model theory}, and decide
whether $\Amc,\Amc'$ are jointly
$\cong_\sigma^n$-consistent over $\TT^d$ for all $n$ via
Proposition~\ref{prop:automata step}. It therefore suffices to prove the central Lemma~\ref{lem:graded model theory}. 

\begin{proof} 
  We show the implications $1\Rightarrow 2\Rightarrow
  3\Rightarrow 4 \Rightarrow 1$ in turn. The implication $4\Rightarrow
  1$ is immediate. 

  For $1\Rightarrow 2$, suppose
  $\phi,\phi'$ are not $\grML^n_\sigma$-separable. Hence, for every
  $g\in\NN$ there is a pair of models $\M_g\models\phi$ and
$\M_g'\models\phi'$ with $\M_g\bis_{\mathsf{grd}(g),\sigma}^n\M_g'$. One can encode
with an $\FO$-sentence $\theta$ that two models $\M$ and $\M'$ are
depth-$n$ trees, $\M$ is a prefix of some $\M_+\models\phi$ and $\M'$
of some $\M_+'\models\phi'$. If $Z$ is a fresh binary symbol, then it
is also possible to encode with an infinite set $T$ of $\FO$-sentences that $Z$
is a graded bisimulation between $\M$ and $\M'$: for every grade $g$ we take a separate $\FO$-sentence. Every finite fragment
of $\{\theta\}\cup T$ only mentions finitely many grades and hence by
assumption is satisfiable. Thus, by compactness of $\FO$, the entire
$\{\theta\}\cup T$ is satisfiable. This gives us
$\M\bis_{\graded,\sigma}^n\M'$
with extensions $\M_+\models\phi$ and $\M_+'\models\phi'$. 

For $2\Rightarrow 3$, fix witnesses $\M,\M'$ of joint
$\bis_{\graded,\sigma}^n$-consistency. That is,
$\M\bis_{\graded,\sigma}^n\M'$ and there are extensions $\M_+,\M_+'$
of $\M,\M'$ with $\M_+\models\phi$ and $\M_+'\models\phi'$. 
By the
L\"{o}wenheim-Skolem property of $\FO$ we may assume that the $n$-prefixes of both models
are at most countable. It remains to apply the known fact that countable trees $\N$ and
$\N'$ satisfy $\N\bis_\graded\N'$ iff $\N$ and $\N'$ are isomorphic.
For the sake of completeness, we add a brief justification of this
latter statement. Assume $\altpoint\in\N$ and $\altpoint'\in\N$ with
respective children $\{\altpoint_1,\altpoint_2,...\}=W$
and $\{\altpoint_1',\altpoint_2',...\}=W'$ such that
$\altpoint\bis_\graded\altpoint'$. For every
$\bis_\graded$-equivalence class $X$ of $W$ the
corresponding equivalence class $X'=\{\altpoint'\in W'\ |\ \exists_{w\in X}.\ \altpoint\bis_\graded\altpoint'\}\subset W'$ has the same
cardinality as $X$. This is immediate for finite $X$, and for infinite
$X$ it follows because in countable models every two infinite subsets
have the same cardinality. This allows us to inductively pick a
\emph{bijective} subrelation $Z$ of $\bis_\graded$ between $\N$ and
$\N'$ which is still a graded bisimulation.

For $3\Rightarrow 4$, fix witnesses $\M,\M'$ of joint
$\cong_{\sigma}^n$-consistency. That is,
$\M\cong_{\sigma}^n\M'$ with respective extensions $\M_+,\M_+'$
such that $\M_+\models\phi$ and $\M_+'\models\phi'$.
We trim $\M_+$ and $\M_+'$ so that the outdegree becomes at most $d$.
Without loosing generality we assume that the prefixes of $\M_+$ and
$\M_+'$ are not only isomorphic but identical. 

Let $\strategy$ and $\strategy'$ be positional winning strategies for $\eve$ in the semantic games $\game(\M_+,\phi)$ and $\game(\M_+',\phi')$. We take a submodel $\M_0\models\phi$ of $\M_+$ as follows. In the $n$-prefix we take the root and all $\Diamond$-witnesses for $\strategy$ and $\strategy'$: points chosen by $\strategy$ or $\strategy'$ in a position whose formula component starts with $\Diamond_{\geq g}$. In the rest of the model we only take $\Diamond$-witnesses for $\strategy$. A submodel $\M_0'$ of $\M_+'$ is defined symmetrically. It follows that $\M_0\models\phi$ and $\M_0'\models\phi'$. 

Recall that $g$ is the maximal grade appearing in $\phi$ and $\phi'$.
Since the respective sets of positions of $\game(\M_+,\phi)$ and
$\game(\M_+',\phi')$ are $M_+\times\SubFor(\phi)$ and
$M_+'\times\SubFor(\phi')$, for every point $\point$ there are at most
$g\times|\phi|$ $\Diamond$-witnesses chosen by $\strategy$ from a
position which has $\point$ on the first coordinate. Consequently, the
outdegree of $\M_0$ and $\M_0'$ is not greater than
$d=g\times(|\phi|+|\phi'|)$. This proves Lemma~\ref{lem:graded model theory}.
\end{proof}

\subsection{Without Grades in the Separator}
We now consider the other case with no grades allowed in the separator/definition. Interestingly, this change in the setting increases the complexity of separability, but not definability. 

We start with the easier $\ML$-definability of $\mugrML$-formulae, following a strategy pursued for deciding $\ML$-definability of $\grML$-formulae~\cite{separatingcounting}. As an intermediate step, and because it is interesting in its own right, we also look at $\muML$-definability of $\mugrML$-formulae. It turns out that if a $\mugrML$-formula is $\muML$-definable then this is already witnessed by the most naive candidate for a definition. Let \emph{flattening} of  a given $\phi\in\mugrML$, denoted $\mathsf{flat}(\phi)$, be a $\muML$-formula obtained from $\phi$ by replacing every $\Diamond_{\geq k}$ with $\Diamond$ and every $\Box_{\geq k}$ with $\Box$.

\begin{lem}\label{lem:graded flattening}
  A $\mugrML$-formula $\phi$ is equivalent to a $\muML$-formula iff it is equivalent to $\mathsf{flat}(\phi)$.
\end{lem}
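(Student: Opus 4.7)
The \emph{if} direction is immediate since $\mathsf{flat}(\phi)\in\muML$.

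For the \emph{only if} direction, I would exploit the key consequence of the assumption $\phi\equiv\psi\in\muML$: it forces $\phi$ to be bisimulation-invariant, not merely graded-bisimulation-invariant. The plan is to define for each model $\M$ and $n\in\NN$ an \emph{$n$-fold duplication} $\M^{[n]}$, obtained by replacing each subtree attached to each point with $n$ pairwise isomorphic copies of itself. Concretely one may take as universe sequences $(v_1,k_1,v_2,k_2,\ldots,v_i)$ with $v_1=\point_I$, $v_j\to v_{j+1}$ in $\M$, and $k_j\in\{1,\ldots,n\}$, with the obvious edges and valuation. The projection to the last coordinate is a functional bisimulation, so $\M^{[n]}\bis\M$.

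Let $g$ be the maximum grade occurring in $\phi$. The core step I would prove, by induction on $\theta\in\SubFor(\phi)$, is that for every $n>g$ and every point $u$ of $\M^{[n]}$,
\[
\M^{[n]},u\models\theta \iff \M^{[n]},u\models\mathsf{flat}(\theta).
\]
The atomic and Boolean cases are immediate. For the graded modalities, the key observation is that the children of $u$ in $\M^{[n]}$ partition into blocks of $n$ pairwise isomorphic subtrees; since $\mugrML$ is invariant under isomorphism, any subformula $\theta'$ (and, by induction hypothesis, $\mathsf{flat}(\theta')$) takes a constant truth value across each block. Hence the number of children of $u$ in $\M^{[n]}$ that satisfy $\theta'$ (resp.\ fail $\theta'$) is always a multiple of $n$ or infinite; because $n>g\geq k$ this is at least $k$ iff it is at least $1$ (resp.\ at most $k$ iff it is $0$), collapsing $\Diamond_{\geq k}\theta'$ to $\Diamond\theta'$ and $\Box_{\geq k}\theta'$ to $\Box\theta'$ on $\M^{[n]}$. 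The fixpoint cases follow from the induction hypothesis by monotonicity, since equality of the induced monotone operators on $\powerset{M^{[n]}}$ forces equality of their least (respectively greatest) fixpoints.

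Combining these pieces, for arbitrary $\M$ and $n>g$:
\[
\M\models\phi \iff \M^{[n]}\models\phi \iff \M^{[n]}\models\mathsf{flat}(\phi) \iff \M\models\mathsf{flat}(\phi),
\]
where the first equivalence uses $\phi\equiv\psi$ and bisimulation-invariance of $\psi$, the second is the inductive statement above, and the third uses bisimulation-invariance of $\mathsf{flat}(\phi)\in\muML$. Hence $\phi\equiv\mathsf{flat}(\phi)$. The main technical obstacle I anticipate is the fixpoint case of the induction: to handle open formulae and guarantee that semantic agreement propagates to least and greatest fixpoints, the induction hypothesis will need to be strengthened to say that $\theta$ and $\mathsf{flat}(\theta)$ induce the \emph{same} monotone operator on valuations over $M^{[n]}$, rather than merely agreeing on closed formulae.
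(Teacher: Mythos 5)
Your overall route is the same as the paper's: duplicate subtrees so that the graded modalities collapse to plain $\Diamond,\Box$, and sandwich this with two applications of bisimulation invariance (of $\phi$, via its assumed $\muML$-equivalent, and of $\mathsf{flat}(\phi)$), giving exactly the three-step chain $\M\models\phi\iff\M^{[n]}\models\phi\iff\M^{[n]}\models\mathsf{flat}(\phi)\iff\M\models\mathsf{flat}(\phi)$. The paper's $\M^+$ uses \emph{infinitely} many isomorphic copies of each subtree and states the middle equivalence as an observation; your finite $n$-fold duplication with $n>g$ is a harmless variation, and your block-counting argument for the modal cases is the intended justification of that observation.

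The one genuine problem is the repair you propose for the fixpoint case. The strengthened induction hypothesis ``$\theta$ and $\mathsf{flat}(\theta)$ induce the same monotone operator on all valuations over $M^{[n]}$'' is false as stated: take $\theta=\Diamond_{\geq 2}x$ and a valuation making $x$ true at exactly one of the $n$ copies of some child; then $\mathsf{flat}(\theta)=\Diamond x$ holds at the parent while $\theta$ fails. The block-constancy you rely on is a property of \emph{closed} subformulae (isomorphism invariance) and is destroyed by arbitrary interpretations of free variables. The fix is to restrict the strengthened claim to valuations invariant under the automorphisms of $\M^{[n]}$ that permute the copies within each block: on such valuations the two operators agree, both preserve invariance, and hence by ordinal approximation their least and greatest fixpoints are invariant and coincide, which is all the induction needs. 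Alternatively, since the paper defines the semantics of $\mugrML$ via parity games, you can avoid valuations altogether and transfer winning strategies between $\game(\M^{[n]},\phi)$ and $\game(\M^{[n]},\mathsf{flat}(\phi))$, using the copy isomorphisms to answer $\adam$'s choice among the $k$ selected copies. With either repair your proof goes through and is essentially the paper's argument.
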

\begin{proof}
We prove the only nontrivial implication from left to right. Assume $\phi$ is equivalent to some $\muML$-formula. In particular, this means that $\phi$ is invariant under bisimulation. Consider a model $\M$. Let $\M^+$ be obtained from $\M$ by duplicating subtrees so that every node has infinitely many bisimilar siblings. We have:
\begin{align*}
  \M\models\phi \hspace{0.1cm} &\iff \hspace{0.1cm} \M^+\models\phi\\
                \hspace{0.1cm} &\iff \hspace{0.1cm} \M^+\models\mathsf{flat}(\phi)\\
                \hspace{0.1cm} &\iff \hspace{0.1cm} \M\models\mathsf{flat}(\phi).
\end{align*}
The first and third equivalence follow from bisimulation invariance. The second one follows from the observation that in $\M^+$ the semantics of $\Diamond_{\geq k}$ coincides with that of $\Diamond$, and similarly with $\Box_{\geq k}$ and $\Box$.
\end{proof}

Lemma~\ref{lem:graded flattening} reduces $\muML$-definability of a given $\mugrML$-formula $\phi$ to checking the validity of $\phi\iff\mathsf{flat}(\phi)$. Moreover, $\phi$ is $\ML$-definable iff (i) it is equivalent to its flattening $\mathsf{flat}(\phi)$ and (ii) $\mathsf{flat}(\phi)$ is $\ML$-definable. Since the validity of $\phi\iff\mathsf{flat}(\phi)$ can be decided in exponential time~\cite{DBLP:conf/ijcai/CalvaneseGL99,DBLP:conf/cade/KupfermanSV02} and the latter (ii) can be decided via Theorem~\ref{thm:otto main}, we obtain the following complexities.

\begin{thm}
  $\muML$-definability and $\ML$-definability of $\mugrML$-formulae are \ExpTime-complete.
\end{thm}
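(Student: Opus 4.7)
The plan is to establish the upper and lower bounds separately. The upper bounds are effectively sketched in the discussion preceding the theorem: they combine Lemma~\ref{lem:graded flattening} with known complexity results, and my task is just to formalize that reasoning. The lower bound for $\ML$-definability is immediate: since $\muML\subset\mugrML$, the $\ExpTime$-hardness from Theorem~\ref{thm:otto main} transfers verbatim. The only nontrivial piece is the lower bound for $\muML$-definability of $\mugrML$-formulae, which I would establish by a direct reduction from $\muML$-satisfiability.

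For the upper bounds I would, given $\phi\in\mugrML$, first compute $\mathsf{flat}(\phi)$ in linear time. By Lemma~\ref{lem:graded flattening}, $\muML$-definability of $\phi$ reduces to deciding the validity of $\phi\iff\mathsf{flat}(\phi)$ in $\mugrML$, which under the unary encoding of grades is known to be in $\ExpTime$~\cite{DBLP:conf/ijcai/CalvaneseGL99,DBLP:conf/cade/KupfermanSV02}. For $\ML$-definability I would rely on the characterization: $\phi$ is $\ML$-definable iff both $\phi\equiv\mathsf{flat}(\phi)$ and $\mathsf{flat}(\phi)$ is $\ML$-definable. The forward direction uses $\ML\subset\muML$ together with Lemma~\ref{lem:graded flattening}; the converse is trivial. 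Both conditions are $\ExpTime$-decidable, the second one by Theorem~\ref{thm:otto main}.

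The hard part is the $\muML$-definability lower bound. Given $\phi\in\muML$ and a fresh proposition $p\notin\sig(\phi)$, I would set
\[
\phi' \;=\; \Diamond\phi \wedge \Diamond_{\geq 2}p
\]
and show that $\phi$ is unsatisfiable iff $\phi'$ is $\muML$-definable. The easy direction is that if $\phi$ is unsatisfiable, then so is $\phi'$, which is then defined by $\bot\in\muML$. For the converse, if $\phi$ is satisfied by some $\N$, I would take $\M^+$ to be a tree rooted at a fresh node whose children are a copy of $\N$ and two fresh $p$-labelled leaves, so that $\M^+\models\phi'$. Collapsing the two $p$-leaves yields an ordinarily bisimilar model $\M^-$ that violates $\Diamond_{\geq 2}p$, hence $\M^-\not\models\phi'$. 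By bisimulation-invariance of $\muML$, $\phi'$ admits no $\muML$-definition. Combined with closure of $\ExpTime$ under complement, this yields the $\ExpTime$-hardness. The whole argument is essentially bookkeeping on top of Lemma~\ref{lem:graded flattening} and Theorem~\ref{thm:otto main}, with the only mildly delicate step being the choice of the gadget $\phi'$ above.
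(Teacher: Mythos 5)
Your proof is correct and takes essentially the paper's route: the upper bounds are exactly the paper's argument via Lemma~\ref{lem:graded flattening} (decide validity of $\phi\iff\mathsf{flat}(\phi)$ and, for $\ML$, additionally apply Theorem~\ref{thm:otto main} to $\mathsf{flat}(\phi)$), and your lower bounds merely spell out the standard reductions from ($\mu\mathrm{ML}$-)satisfiability that the paper leaves implicit. One tiny point in your $\muML$-hardness gadget: when attaching the copy of $\N$, make $p$ false at its root (harmless since $p\notin\sig(\phi)$), since otherwise the collapsed model could still satisfy $\Diamond_{\geq 2}p$ and the non-definability argument would break.
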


As mentioned, without grades in the separator general separability has higher complexity than definability.

\begin{restatable}{thm}{thmhalfgraded}\label{thm:half-graded complexity}
  $\ML$-separability of $\mugrML$-formulae is \TwoExpTime-complete.
\end{restatable}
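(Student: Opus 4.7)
The plan has two halves. For the $\TwoExpTime$ lower bound, I reduce from $\ML$-separability of $\muML$-formulae over $\TT^d$ with $d\geq 3$, which is $\TwoExpTime$-hard by Theorem~\ref{thm:ternary complexity}. The key observation is that $\TT^d$ is $\mugrML$-definable by $\theta_d = \nu x.(\Box x \wedge \Box_{\geq d}\bot)$; given $\phi,\phi'\in\muML$, the reduction outputs $\phi\wedge\theta_d$ and $\phi'\wedge\theta_d$ in $\mugrML$. An $\ML$-separator $\psi$ of these two formulae over all models also separates $\phi,\phi'$ over $\TT^d$, since models of $\phi\wedge\theta_d$ are precisely $\TT^d$-models of $\phi$; conversely, any $\ML$-separator of $\phi,\phi'$ over $\TT^d$ separates $\phi\wedge\theta_d,\phi'\wedge\theta_d$ over all models, as any joint model would lie in $\TT^d$.

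For the $\TwoExpTime$ upper bound, the first and most delicate step is to show that $\ML$-separability over all models coincides with $\ML$-separability over $\TT^d$ for $d = g_{\max}\cdot(|\phi|+|\phi'|)$, where $g_{\max}$ is the largest grade in the input (polynomial under unary encoding). By Equivalence~\eqref{eq:non-sep vs n-bis}, the non-trivial direction amounts to transporting, for each $n\in\NN$, a joint $\bis_\sigma^n$-consistency witness $(\M,\M',Z)$ with $\M\models\phi, \M'\models\phi'$ over all models to witnesses in $\TT^d$. Fixing positional winning strategies $\strategy,\strategy'$ on $\M,\M'$, I would build submodels $\N\subseteq\M$ and $\N'\subseteq\M'$ by descending recursively from the roots: at each $Z$-pair $(\point,\altpoint)$ in the $n$-prefix, the $\N$-children of $\point$ consist of the $\strategy$-witnesses at $\point$ together with, for each $\strategy'$-witness $u$ at $\altpoint$, a designated $Z$-preimage (``partner'') among the $\M$-children of $\point$; the $\N'$-children of $\altpoint$ are defined symmetrically. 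The strategy-witness--partner pairs form a bisimulation $Z'\subseteq Z$ witnessing $\N\bis_\sigma^n\N'$. Outdegree is bounded by $d$ since each positional strategy selects at most $g_{\max}$ children per $\Diamond_{\geq}$ or $\Box_{\geq}$ subformula. Crucially, $\N\models\phi$: the $\strategy$-witnesses are retained and, for $\Box_{\geq k}\theta$-subformulae, every non-$\eve$-selected child of $\point$ in $\M$ must already satisfy $\theta$ (otherwise $\adam$ would have a winning move), so the partner children added to $\N$ contribute no new ``bad'' children beyond those already counted by $\strategy$; symmetrically $\N'\models\phi'$.

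Once separability is reduced to $\TT^d$, I translate $\phi,\phi'$ to NPTAs $\A,\A'$ over $\TT^d$ of exponential size, extending Theorem~\ref{thm:munpta} to $\mugrML$: graded modalities translate naturally because NPTA-transitions over $\TT^d$ range over ordered $\leq d$-tuples of child states, and with unary grades this produces no additional blowup. The procedure of Section~\ref{sec:ternary} then applies: Proposition~\ref{prop:quotient automaton} yields the ``bisimulation-quotient'' automata $\B,\B'$ of exponential size in $|\A|+|\A'|$, and Proposition~\ref{prop:automata step} decides joint $\cong_\sigma^n$-consistency of $\B,\B'$ for all $n$ in time polynomial in $|\B|+|\B'|$. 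Overall, the algorithm runs in doubly exponential time in $|\phi|+|\phi'|$.

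The main obstacle is the first step of the upper bound: orchestrating the submodel construction so that $\N,\N'$ simultaneously satisfy the graded formulae $\phi,\phi'$ (which are sensitive to child multiplicities) while maintaining the plain bisimulation. The delicate point is handling the $\Box_{\geq k}$-subformulae, where the argument crucially uses that every non-excluded child already satisfies the inner formula, so freely adding ``partners'' via the bisimulation does not violate the grade constraint.
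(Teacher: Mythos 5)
Your lower bound (relativization with $\theta_d$, correctness via Theorem~\ref{thm:ternary complexity}) and your overall upper-bound pipeline---reduce joint $\bis_\sigma^n$-consistency to $\TT^d$ with $d=g_{\max}\cdot(|\phi|+|\phi'|)$, translate to exponential-size NPTAs over $d$-ary trees (here the unary coding of grades is indeed what makes the analogue of Theorem~\ref{thm:munpta} work), then apply Proposition~\ref{prop:quotient automaton} and Proposition~\ref{prop:automata step}---are exactly the paper's proof. Likewise, your treatment of satisfaction under child deletion ($\Diamond_{\geq k}$-witnesses kept, $\Box_{\geq k}$ only improved by removing successors) is sound.

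The gap lies in the one step you elaborate beyond the paper's sketch, the transfer from all models to $\TT^d$. Your outdegree bound tacitly assumes that every kept node occurs in exactly one pair of the constructed bisimulation $Z'\subseteq Z$, so that its kept children are its own at most $g_{\max}\cdot|\phi|$ strategy witnesses plus one designated partner per witness of its unique mate. Since $Z$ is only a plain bisimulation, this uniqueness can fail: two $\strategy$-witness children $v_1,v_2$ of $\point$ may be $Z$-related to no child of $\altpoint$ other than a single $u'$, which forces both pairs $(v_1,u')$ and $(v_2,u')$ into $Z'$. The forth conditions at these two pairs then require the kept children of $u'$ to contain partners for the $\strategy$-witnesses of $v_1$ \emph{and} of $v_2$; these witnesses may lie in different $\sigma$-bisimulation classes and hence cannot share partners, so $u'$ already needs up to $g_{\max}(|\phi'|+2|\phi|)>d$ children, and the back conditions at both pairs propagate the multiplicity downwards, so the recursion as you describe it yields no uniform bound on the outdegree at all. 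This is precisely where the present setting differs from the argument of $3\Rightarrow 4$ in Lemma~\ref{lem:graded model theory} that the paper invokes: there the two $n$-prefixes are $(\sigma,n)$-isomorphic and can be taken literally identical, so one keeps the \emph{union} of the two strategies' witness sets on the shared prefix and the identity remains the witnessing relation; with a many-to-one bisimulation that union trick is unavailable, and the ``designated partner'' recursion does not replace it. To make your first step go through you need an additional idea that controls how many pairs a kept node can participate in (the paper itself leaves this step as a sketch, but your written argument, as it stands, does not establish the claimed bound $d$).
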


\begin{proof}
For the lower bound, we reduce $\ML$-separability of $\muML$-formulae
over $\TT^3$. Since
the former problem is \TwoExpTime-hard by Theorem~\ref{thm:ternary
complexity}, so is the latter. The reduction relies on the formula $\theta_3$
defining $\TT^3$. It is an immediate consequence of the definitions that for all
$\muML$-formulae $\varphi,\varphi'$ and $\psi\in \ML$,  
\[\text{$\psi$ is an $\ML$-separator of $\varphi,\varphi'$ over
$\TT^3$\quad iff\quad $\psi$ is an $\ML$-separator of $\varphi\wedge
\theta_3,\varphi'\wedge \theta_3$}.\] 
%

Towards the upper bound, assume $\phi,\phi'\in\mugrML$. As usual, we only need to look for separators among formulae over $\sigma=\sig(\phi)\cup\sig(\phi')$. By an immediate generalization of the Equivalence~\eqref{eq:non-sep vs n-bis}, $\phi,\phi'$ are not $\ML_\sigma^n$-separable iff they are jointly $\bis_\sigma^n$-consistent.
A standard game-theoretic argument similar to the one in the proof of $3\Rightarrow 4$ in Lemma~\ref{lem:graded model theory} shows that $\phi,\phi$ are jointly $\bis_\sigma^n$-consistent over all models iff they are jointly $\bis_\sigma^n$-consistent over $\TT^d$, where
$d=g\times(|\phi|+|\phi'|)$ and $g$ is the greatest grade occurring in $\phi,\phi'$. Summing up, $\phi,\phi'$ are not $\ML$-separable iff they are jointly $\bis_\sigma^n$-consistent over $\TT^d$ for all $n\in\NN$.
To decide the latter we proceed as in the upper bound proof of Theorem~\ref{thm:ternary complexity}. We first construct NPTA $\Amc,\Amc'$ equivalent to $\varphi,\varphi'$ over $d$-ary trees via (an analogue for $\mugrML$ of) Corollary~\ref{cor:muML to NPTA}. Then, using Proposition~\ref{prop:quotient automaton} we construct $\B$ recognizing $\mathsf{QPL}(\LL(\A))$ and $\B'$ recognizing  $\mathsf{QPL}(\LL(\A'))$ and apply Proposition~\ref{prop:automata step} to decide whether $\B,\B'$ are jointly $\cong_\sigma^n$-consistent for all $n$.
\end{proof}

\section{Final Discussion}\label{sec:conclusion}

We have presented an in-depth study of modal separation of
$\muML$-formulae over different classes of structures. For us, the
most interesting results are the differences that are obtained over
classes of bounded outdegree for different bounds $d=1$, $d=2$, $d\geq
3$. It is worth noting that without much effort our (upper bound) results can be transferred to other interesting classes of models. First, it is known that $\muML$ has the tree model property and the finite model property. Due to this, for every $\psi,\phi,\phi'\in\muML$ it follows that $\psi$ is a separator of $\varphi,\varphi'$ (over all models) iff $\psi$ is a separator of $\varphi,\varphi'$ over trees iff $\psi$ is a separator of $\varphi,\varphi'$ over finite models.
Thus, $\ML$-separability coincides over all these classes. Moreover, with the help of the $\muML$-formula $\theta_\infty$ from Example~\ref{exa:separability} we can show that
$\ML$-separability over finite trees reduces to $\ML$-separability
over all models. More formally: 
\begin{lem}\label{lem:reduction-WF}
  Let $\varphi,\varphi'\in\muML$ and $\psi\in\ML$. Then
$\psi$ is an $\ML$-separator of $\varphi,\varphi'$
      over finite trees 
      iff $\psi$ is an $\ML$-separator of
      $\varphi\wedge\neg\theta_\infty,\varphi'\wedge
      \neg \theta_\infty$. This is also true inside $\TT^d$, for 
      $d\in\NN$.
\end{lem}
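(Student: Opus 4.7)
The ($\Leftarrow$) direction is immediate: every finite tree vacuously satisfies $\neg\theta_\infty$, so any $\ML$-separator of $\varphi\wedge\neg\theta_\infty,\varphi'\wedge\neg\theta_\infty$ over all models restricts to an $\ML$-separator of $\varphi,\varphi'$ over finite trees. The interesting direction is ($\Rightarrow$). Fix $\psi\in\ML^n_\sigma$ separating $\varphi,\varphi'$ over finite trees and suppose $\M\models\varphi\wedge\neg\theta_\infty$; we may take $\M$ to be a tree, which by $\neg\theta_\infty$ has no infinite path (i.e.~is well-founded) but may be of infinite outdegree and hence infinite. The plan is to construct a finite subtree $\N\subseteq\M$ with $\N\bis_\sigma^n\M$ and $\N\models\varphi$. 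The finite-tree separation hypothesis then yields $\N\models\psi$, and bisimulation invariance of $\psi\in\ML^n_\sigma$ transfers this to $\M$. The symmetric argument, applied to $\varphi'$ in place of $\varphi$, rules out any $\M\models\psi\wedge\varphi'\wedge\neg\theta_\infty$.

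The construction proceeds via automata. One would take an NPTA $\Amc$ equivalent to $\varphi$ (Theorem~\ref{thm:munpta}) and an accepting run $\rho$ of $\Amc$ on $\M$. At each node $v$ of $\M$ at depth $d$, one keeps in $\N$ only finitely many children: (i) for each state $q$ in the transition $S_v\in\delta(\rho(v),\val(v))$ realized by $\rho$, one child $v'$ of $v$ with $\rho(v')=q$; and (ii) if $d<n$, additionally one child of $v$ per $(\sigma,n-d-1)$-bisimulation class of the children of $v$ in $\M$. Both selections are finite, so $\N$ is finitely branching; being a subtree of the well-founded $\M$, it is well-founded, and K\"onig's lemma forces $\N$ to be finite. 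The restriction of $\rho$ to $\N$ still realizes the transition $S_v$ at every internal node, since children kept via (ii) already have $\rho$-labels in $S_v$; and the parity condition is vacuous on well-founded trees, so $\N\models\Amc$, i.e.~$\N\models\varphi$. A downward induction on depth, using (ii) for the back-condition and inclusion for the forth-condition, establishes $\N\bis_\sigma^n\M$.

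The main obstacle is harmonizing the two pruning criteria: (i) is dictated by the automaton while (ii) is dictated by the target bisimulation type, and one must verify that they neither conflict nor cause branching to blow up. The key observation is that every child of $v$ in $\M$ already has its $\rho$-label in $S_v$, so picking additional bisimulation witnesses in (ii) cannot disturb the transition realized by the restricted run. The $\TT^d$ variant of the lemma is immediate: K\"onig's lemma shows that a $d$-ary tree satisfies $\neg\theta_\infty$ iff it is finite, so ``finite trees in $\TT^d$'' and ``models in $\TT^d$ of $\neg\theta_\infty$'' coincide, making the equivalence a tautology inside $\TT^d$.
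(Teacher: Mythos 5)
Your proof is correct, but it takes a different route from the one the paper intends. The paper states this lemma in the concluding discussion without a detailed proof, and the sentences immediately preceding it make the intended argument clear: it is the short one via the finite model property of $\muML$. Concretely, for the nontrivial direction one notes that $\varphi\wedge\neg\theta_\infty\wedge\neg\psi$ is a $\muML$-formula, so if it is satisfiable it has a \emph{finite} model; since that model satisfies $\neg\theta_\infty$, no cycle is reachable from the root, hence its unravelling is a \emph{finite tree} satisfying $\varphi\wedge\neg\psi$, contradicting separation over finite trees (and symmetrically for $\varphi'$). Your argument replaces this appeal to the FMP by a direct construction: prune a well-founded, possibly infinitely branching tree model using an accepting NPTA run (keeping one child per state of the realized transition) together with one representative per bounded-bisimulation class up to depth $n$, and then invoke K\"onig's lemma. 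I checked the details and they go through: the kept children's labels still realize the same transition, acceptance is vacuous on well-founded trees, and the claimed $\bis^n_\sigma$-equivalence follows by the downward induction you sketch (formally, show $\N|_v\bis^{\,n-d}_\sigma\M|_v$ for $v$ at depth $d$ and use transitivity of bounded bisimilarity to discharge the back-condition for discarded children). Your treatment of the $\TT^d$ case via K\"onig's lemma is exactly right and is surely also what the paper has in mind. Comparing the two: the FMP route is a two-line argument but uses the finite model property as a black box, whereas your construction is heavier yet self-contained modulo Theorem~\ref{thm:munpta}, and it in fact establishes a slightly stronger local statement, namely that every well-founded tree model of a $\muML$-formula is $(\sigma,n)$-bisimilar to a finite tree model of the same formula — in effect re-proving the special case of the FMP that the lemma needs.
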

Via similar reductions, our upper bound results do also transfer to
infinite words and \emph{ranked} trees. 

We find also worth noting that the construction of $\ML^n$-uniform consequences of $\muML$ formulae might be of independent interest. For example, the construction can easily be adapted to construct $\ML^n_\sigma$-uniform consequences, which in turn provides an alternative route to computing (uniform) interpolants of $\ML$-formulae.

We discuss a few more aspects and possible extensions of our work.

\subsection{Craig Separability.} Inspired by Craig interpolation, it is natural to also consider the following variant of separability. Given $\phi,\phi'\in\muML$, we say that a formula $\psi\in\ML$ is a \emph{Craig $\ML$-separator} of $\phi,\phi'$ if $\sig(\psi)\subseteq \sig(\phi)\cap \sig(\phi')$ and $\psi$ separates $\phi,\phi'$. \emph{Craig $\ML$-separability of $\muML$-formulae} is then the problem of deciding whether given $\phi,\phi'\in\muML$ admit a Craig $\ML$-separator. Interestingly, $\ML$-separability and Craig $\ML$-separability coincide over all models, over $\TT^1$, and over $\TT^2$.
\begin{prop}\label{prop:sep=Craig}
  Over the class of all models, as well as over $\TT^1$ and $\TT^2$, $\varphi,\varphi'\in\muML$ admit an $\ML$-separator iff they admit a Craig $\ML$-separator.
\end{prop}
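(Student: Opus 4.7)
The forward direction (every Craig $\ML$-separator is an $\ML$-separator) is immediate. For the converse, I would argue by contrapositive. Let $\sigma=\sig(\varphi)\cap\sig(\varphi')$ and $\sigma_\cup=\sig(\varphi)\cup\sig(\varphi')$. By Equivalence~\eqref{eq:non-sep vs n-bis}, together with the observation from Section~\ref{sec:foundations} that separators can always be taken over $\sigma_\cup$, non-existence of a Craig $\ML$-separator amounts to joint $\bis_\sigma^n$-consistency of $\varphi,\varphi'$ for every $n$, while non-existence of any $\ML$-separator amounts to joint $\bis_{\sigma_\cup}^n$-consistency for every $n$. Hence the goal, for each fixed $n$, is the following: given witnesses $Z\colon\M\bis_\sigma^n\M'$ with $\M\models\varphi$ and $\M'\models\varphi'$, produce $\N\models\varphi$, $\N'\models\varphi'$ with $\N\bis_{\sigma_\cup}^n\N'$.

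The plan is an upgrade-then-amalgamate strategy. First, I would modify $\M$ and $\M'$ (preserving $\varphi$- and $\varphi'$-truth via bisimulation invariance) so that their $n$-prefixes become $(\sigma,n)$-isomorphic. Viewing this common prefix as a $\sigma_\cup$-labelled tree $P$, built by combining the $\sig(\varphi)\setminus\sigma$-valuation inherited from $\M$ with the $\sig(\varphi')\setminus\sigma$-valuation inherited from $\M'$, I then obtain $\N$ by attaching below each depth-$n$ leaf of $P$ the corresponding subtree of $\M$ (with $\sig(\varphi')\setminus\sigma$-labels chosen arbitrarily below depth $n$), and $\N'$ symmetrically from $\M'$. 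Bisimulation invariance yields $\N\models\varphi$ and $\N'\models\varphi'$, while the identity on $P$ is a $(\sigma_\cup,n)$-isomorphism and hence witnesses $\N\bis_{\sigma_\cup}^n\N'$.

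The upgrade step is where the three classes diverge. Over $\TT^1$, $\sigma$-bisimilar words are already $(\sigma,n)$-isomorphic on their $n$-prefixes, so nothing needs to be done. Over the class of all models, I would use a $Z$-product: the new $n$-prefix has universe $\{(v,v')\in Z\}$ with edges $(v,v')\to(w,w')$ whenever $v\to w$ in $\M$, $v'\to w'$ in $\M'$ and $(w,w')\in Z$; the two projections are $\sig(\varphi)$- and $\sig(\varphi')$-bisimulations respectively, so truth of $\varphi,\varphi'$ is preserved. Over $\TT^2$, I would first duplicate bisimilar subtrees in $\M$ and $\M'$ to obtain full binary trees $\M^+,\M'^+$, still $\sigma$-bisimilar on $n$-prefixes, and then invoke a prefix-restricted analogue of Proposition~\ref{prop:bin bis=iso} to upgrade this to a $(\sigma,n)$-isomorphism. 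The main obstacle I expect is precisely this last upgrade: a naive amalgamation could destroy the outdegree bound, and it is the level-by-level construction of a bijective bisimulation from the proof of Proposition~\ref{prop:bin bis=iso}, applied only down to depth $n$, that keeps everything inside $\TT^2$.
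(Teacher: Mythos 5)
Your proposal is correct, but it takes a genuinely different route from the paper. The paper's proof is short and essentially syntactic: given an $\ML$-separator $\psi$ of modal depth $n$, it passes to the $\ML^n$-uniform consequences $\theta,\theta'$ of $\varphi,\varphi'$ (Propositions~\ref{prop:L-uniform construction} and~\ref{prop:L-uniform construction1}), observes $\sig(\theta)\subseteq\sig(\varphi)$, $\sig(\theta')\subseteq\sig(\varphi')$ and $\varphi\models\theta\models\psi\models\neg\theta'\models\neg\varphi'$, and then invokes the CIP of $\ML$ over the respective class (known for all models and words, and Proposition~\ref{prop:Craig interpolation bin} for $\TT^2$) to obtain a Craig interpolant of $\theta,\neg\theta'$, which is the desired Craig separator. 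You instead argue contrapositively and purely model-theoretically: using Equivalence~\eqref{eq:non-sep vs n-bis} you transfer, for each $n$, witnesses of joint $\bis_\sigma^n$-consistency (over $\sigma=\sig(\varphi)\cap\sig(\varphi')$) to witnesses of joint $\bis^n_{\sig(\varphi)\cup\sig(\varphi')}$-consistency, by first upgrading the witnesses so that their $n$-prefixes are $(\sigma,n)$-isomorphic (trivial over $\TT^1$; the $Z$-product over all models, which is the step the paper imports from Otto in Lemma~\ref{lem:unbounded separability}; duplication plus Proposition~\ref{prop:bin bis=iso} applied to the $n$-prefixes over $\TT^2$, mirroring Lemma~\ref{lem:bin model theory}) and then amalgamating the two labellings on the common prefix and re-attaching the original subtrees below depth $n$ --- the re-attachment being necessary precisely because the inputs are fixpoint formulae and cannot be truncated at depth $n$. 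Your route is self-contained: it avoids the uniform-consequence machinery and in effect re-proves the amalgamation underlying the CIP results, while the paper's route is shorter given the tools already established and, as a by-product, makes the Craig separator effectively computable from the constructed uniform consequences. One point you should make explicit in a write-up: the projections of the $Z$-product are bisimulations only on the depth-$n$ prefix, so preservation of $\varphi$ and $\varphi'$ has to be checked for the composite relations on the full models $\N,\N'$ obtained after attaching the subtrees of $\M$ (respectively $\M'$) below the depth-$n$ leaves; with your construction this verification is routine (the back condition at prefix nodes uses the forth/back conditions of $Z$), but it is this full relation, not the prefix projection alone, that witnesses $\N\models\varphi$ and $\N'\models\varphi'$.
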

\begin{proof}
The ``if''-direction is trivial. Clearly, any Craig $\ML$-separator is also an $\ML$-separator.

For ``only if'' we will use the fact that $\ML$ enjoys the CIP over each of the considered classes $\ClassMod$: over all models and over $\TT^1$~\cite{10.1007/BFb0059541}, and over $\TT^2$ by Proposition~\ref{prop:Craig interpolation bin}.
Suppose there is an $\ML$-separator $\psi$ of $\varphi,\varphi'$ and let $n$ be its modal depth. Let $\theta,\theta'$ be respective $\ML^n$-uniform consequences of $\phi,\phi'$, which exist by Proposition~\ref{prop:L-uniform construction} or Proposition~\ref{prop:L-uniform construction1} depending on the class $\ClassMod$. Note that in both cases we have $\sig(\theta)\subset\sig(\phi)$ and $\sig(\theta')\subset\sig(\phi')$. Moreover, since $\psi\in\ML^n$, we have $\phi\models\theta\models\psi\models\neg\theta'\models\neg\phi'$. Thus, a Craig interpolant for $\theta,\neg\theta'$ is a Craig separator for $\phi,\phi'$.
\end{proof}
Clearly, Proposition~\ref{prop:sep=Craig} cannot hold over ternary (and higher arity) trees, since $\ML$ does not even enjoy the CIP over $\TT^3$ for $d\geq 3$. It is also interesting to note that the CIP alone is not sufficient for Proposition~\ref{prop:sep=Craig}: there are classes $\ClassMod$ where $\ML$ enjoys the CIP but $\ML$-separability and Craig $\ML$-separability of $\muML$-formulae do not coincide. Fix a proposition $\atProp\in\Propositions$ and consider:
\[
\ClassMod =
\left\{
  \M\in\TT^1 \;\middle|\;
  \begin{aligned}
  & |M|<\infty \text{ and $\atProp$ is false everywhere, or}\\
  & |M|=\infty \text{ and $\atProp$ is true everywhere}
  \end{aligned}
\right\}.
\]
The formula $\theta_\infty$ is equivalent to $\atProp$ over $\ClassMod$. However, $\sig(\theta_\infty)=\emptyset$ and no modal formula over the empty signature separates $\theta_\infty,\neg\theta_\infty$ over $\ClassMod$. Hence, $\ML$-separability and Craig $\ML$-separability do not coincide over $\ClassMod$. On the other hand, it is not difficult to derive the CIP of $\ML$ over $\ClassMod$ from the CIP of $\ML$ over $\TT^1$.

\subsection{Other Size Measures.}
Throughout the paper we used the simplest possible measure of formula size: the length of a formula written as a string. Alternative more succinct measures, such as the number of non-isomorphic subformulae (\emph{DAG-size}), are also interesting. Thus, a natural question is to what extent our results depend on the choice of size measure. In principle, using a more succinct measure makes the problems of definability and separability harder. However, all our decision procedures, with an exception of Theorem~\ref{thm:ml-interpolant-existence}, are automata-based. Consequently, these procedures carry over with unchanged complexity to any size measure for which the translation from logic to nondeterministic automata has the same complexity as in Corollary~\ref{cor:muML to NPTA}. In the remaining case of Theorem~\ref{thm:ml-interpolant-existence} a sufficient assumption is that the modal depth of a formula is at most polynomial in its size. Both the mentioned assumptions are arguably modest. 

A place where the choice of size measure matters a little more is the \emph{construction} of modal definitions and separators. In the cases of unrestricted, unary ($\TT^1$), and high outdegree models ($\TT^d$ for $d\geq 3$) the constructed formulae have DAG-size essentially the same as size: doubly, singly, and triply exponential, respectively. Interestingly, however, in the binary case $\TT^2$ our formulae have only singly exponential DAG-size, which is easily seen to be optimal and contrasts with their doubly exponential size.
This demonstrates that the lower bounds for size of modal definitions over $\TT^2$ cannot work for DAG-size. The same lower bound construction fails for DAG-size over unrestricted models, although there the exact DAG-size complexity of optimal separators remains unknown.


%

\subsection{Open Problems.} We mention some interesting open problems. First, the relative
succinctness of $\muML$ over $\ML$ is to the best of our knowledge
open in the setting with only one accessibility relation. Second, as we have
mentioned in Section~\ref{sec:foundations}, the separators we compute
are not necessarily the logically strongest ones. The logically
strongest separators of $\varphi,\phi'$ are precisely the
$\ML$-uniform consequences of $\phi$
(if they exist) and are a natural object of study. 
 Clearly, modal
definability of $\varphi$ is a sufficient condition, but 
not a necessary one. Let $\varphi=\psi\wedge\neg \theta_\infty$ and
$\varphi'=\neg\psi$ for some $\psi\in \ML$ and $\theta_\infty$ from Example~\ref{exa:separability}. Then $\varphi$ is not
equivalent to a modal formula, but $\psi$ is a strongest separator. It would also be interesting to investigate separability of $\muML$-formulae by other fragments of $\muML$, such as \emph{safety formulae}, for which definability has been studied in~\cite{DBLP:conf/csl/LehtinenQ15} and to extend the $\mu$-calculus by other operators, e.g., nominals or inverse modalities. 

Finally, it would be interesting to extend our findings in the case study on the graded $\muML$ in Section~\ref{sec:graded}. As already pointed out, there are two natural directions for this: first one could study the complexity under binary coding of grades, and second one could tackle the separator construction problem. We conjecture that the first problem can be solved by combining our techniques with the automata constructions from~\cite{DBLP:conf/cade/KupfermanSV02}, which lead to optimal complexity for satisfiability of $\grML$ under binary coding of grades. The computation problem might be much more challenging. Indeed, it was observed recently that the construction of $\ML$ separators of $\grML$-formulae already requires significantly new ideas~\cite{DBLP:journals/corr/abs-2507-15689}.

\bibliography{modsep}

\end{document}